
\documentclass[12pt,onecolumn]{IEEEtran}

\usepackage{verbatim}
\usepackage{epsfig,bbm}
\usepackage[colorlinks,bookmarksopen,bookmarksnumbered,citecolor=red,urlcolor=red,]{hyperref}
\usepackage{CJK}
\usepackage{indentfirst}
\usepackage{multirow}
\usepackage{epstopdf}
\usepackage{graphicx}
\usepackage{footmisc}
\graphicspath{{./figures/}}
\usepackage{amsfonts}
\usepackage{mathrsfs}
\usepackage{setspace}
\usepackage{amsmath}
\usepackage{algorithm,algorithmic,amsbsy,amsmath,amssymb,epsfig,bbm,mathrsfs, bbm} 
\usepackage{amsthm}
\usepackage{verbatim}
\usepackage[noadjust]{cite}
\hyphenation{op-tical net-works semi-conduc-tor}
\usepackage[subfigure]{graphfig}

\newtheorem{theorem}{Theorem}

\newtheorem{corollary}{Corollary}
\newtheorem{prop}{Proposition}

\usepackage{xcolor}
\allowdisplaybreaks

\begin{document}

\title{
Wireless-Powered Device-to-Device Communications with Ambient Backscattering: Performance Modeling and Analysis
 }
 
%
\author{Xiao Lu, Hai Jiang~\emph{Senior Member, IEEE}, Dusit Niyato~\emph{Fellow, IEEE}, Dong In Kim~\emph{Senior Member, IEEE}, and Zhu Han~\emph{Fellow, IEEE}   \\
\thanks{ \hspace{-4mm}  
This work was supported in part by the National Research Foundation of Korea Grant funded by the Korean Government under Grant 2014R1A5A1011478, the Natural Sciences and Engineering
Research Council of Canada, Singapore MOE Tier 2 under Grant MOE2014-T2-2-015 ARC4/15 and NRF2015-NRF-ISF001-2277, and US NSF CNS-1717454, CNS-1731424, CNS-1702850, CNS-1646607,ECCS-1547201,CMMI-1434789,CNS-1443917, and ECCS-1405121. (Corresponding author: Dong In Kim)
   
Xiao Lu and Hai Jiang are with the Dept. of Electrical \& Computer Engineering, University of Alberta, Edmonton, AB T6G 1H9, Canada (email: lu9@ualberta.ca and hai1@ualberta.ca).

Dusit Niyato is with the School of Computer Science $\&$ Engineering, Nanyang Technological University (NTU), Singapore (email: dniyato@ntu.edu.sg). 

Dong In Kim is with the School of Information and Communication Engineering, Sungkyunkwan University (SKKU), South Korea (email: dikim@skku.ac.kr).  

Zhu Han is with the University of Houston, Houston, TX 77004 USA (e-mail:zhan2@uh.edu), and also with the Department of Computer Science and Engineering, Kyung Hee University, Seoul, South Korea (email: zhan2@uh.edu). 

} 
}

\markboth{}{Shell \MakeLowercase{\textit{et al.}}: Bare Demo of
IEEEtran.cls for Journals}

\maketitle 

\begin{abstract} 

The recent advanced 
wireless energy harvesting technology has enabled wireless-powered communications  
to accommodate 
wireless data services in a self-sustainable manner. However, wireless-powered communications rely on active RF signals to communicate, and result in high power consumption.
On the other hand, ambient backscatter technology that passively reflects existing RF signal sources in the air to communicate 
has the potential to facilitate an implementation with ultra-low power consumption.  
In this paper, we introduce a hybrid D2D communication paradigm by integrating ambient backscattering with wireless-powered communications. 
The hybrid D2D communications are self-sustainable, as no dedicated external power supply is required.
However, since the radio signals for energy harvesting and for backscattering come from the ambient, the performance of the hybrid D2D communications depends largely on environment factors, e.g., distribution, spatial density, and transmission load of ambient energy sources. Therefore, we design two mode selection protocols for the hybrid D2D transmitter, allowing a more flexible adaptation to the environment. We then introduce analytical models to characterize the impacts of the considered environment factors on the hybrid D2D communication performance. Together with extensive simulations, our analysis shows that the communication performance benefits from larger repulsion, transmission load and density of ambient energy sources. Further, we investigate how different mode selection mechanisms affect the communication performance. 

%






%
 
\end{abstract}

\emph{Index terms- Ambient backscatter, D2D communications, modulated backscatter, repulsion behavior wireless-powered communications}.   
\section{Introduction}
The Internet-of-Things (IoT)~\cite{V.2016Gazis
,D.August2016Niyato} aims to connect a large number of  intelligent devices, such as smart household devices~\cite{D.April_2011Niyato}, renewable sensors~\cite{D.May_2016Niyato}, vehicular communicators, RFID tags, and wearable health-care gadgets. 
Many such devices with small size and simple implementation can only achieve short-range and low-rate communications. In this context, device-to-device (D2D) communications~\cite{Camps-Mur2013}, which empower two devices in proximity to establish direct connections, 
appear as a cost-effective and energy-efficient solution to accommodate ubiquitous short-range connections. Recent research efforts~\cite{J.2015Liu,A.2015Al-Fuqaha ,X.March_2015Lu} have shown that D2D communications can achieve significant performance gains in terms of network coverage, capacity, peak rate and communication latency. Thus, D2D communications are considered as an intrinsic part of the IoT. 
 


Recently, radio frequency (RF) energy harvesting~\cite{XLuSurvey ,X.2014Lu} has evolved as a promising energy replenishment solution to empower D2D communications with self-sustainability~\cite{H.2015Sakr,H.2016Yang,Y.2016Liu,X.Lu2016}. 
Technically, D2D transmitters can harvest energy from RF signals in the air for their operation. By utilizing the harvested energy, the D2D transmitters can communicate without relying on an external energy supply, which will technically and economically facilitate a massive deployment of D2D communication devices.  
Thus, a zero-energy communication paradigm can be virtually achieved. 

%


The recent advance in RF energy harvesting technologies has paved the way for two emerging green communication technologies, namely, wireless-powered communications \cite{XLuSurvey,H.2014Ju,DJune_2015Niyato,DMarch_2015Niyato} and ambient backscatter communications~\cite{V.2013Liu,G.2016Wang}. 



\begin{itemize}

\item  
Wireless-powered communications utilize the harvested energy to generate RF signals for information transmission. The energy harvesting can be performed either opportunistically from ambient RF signal sources (e.g., cellular base stations and mobile terminals)~\cite{I.2015Flint} or in a more controlled way from dedicated RF power beacons~\cite{K.2014Huang}. 
A comprehensive survey of wireless-powered communications can be found in \cite{X.Sec2016Lu}.
 
 


\item  Ambient backscattering performs data transmission based on modulated backscatter of existing RF signals,
e.g., from TV base stations~\cite{V.2013Liu} or Wi-Fi routers~\cite{D.2015Bharadia}. 
Importantly, in contrast to traditional backscatter devices, e.g., RFID tags, that passively rely on a dedicated interrogator to initiate communication, an ambient backscatter transmitter can actively initiate communication to its peers. 

\end{itemize}
 








Despite many benefits, both wireless-power communications and ambient backscatter communications have drawbacks that limit their applicability for D2D communications. Specifically, a wireless-powered device can only communicate intermittently as it requires dedicated time for energy harvesting. To perform active RF generation, the required energy is much higher than the operation power of ambient backscattering.  
As for ambient backscatter communications, the relatively low data rate, typically ranging from several to tens of kbps~\cite{V.2013Liu,B.2014Kellogg}, largely constrains the applications. 
A relatively high signal-to-noise ratio (SNR) is required to realize reliable transmission with modulated backscatter. 
Moreover, the transmission distance is limited, typically ranging from several feet to tens of feet \cite{V.2013Liu,N.2014Parks}. 
To address these shortcomings, in this paper, we introduce a novel hybrid D2D communication paradigm that harvests energy from ambient RF signals and can selectively perform ambient backscattering or wireless-powered communications for improved applicability and functionality.
Through analysis, we will show that these two technologies can well complement each other and result in better performance for D2D communications.

\subsection{Related Work}

Recently, wireless-powered communications~\cite{XLuSurvey} have attracted much attention and have been applied in D2D communications to improve energy efficiency. 
In~\cite{H.2015Sakr}, the authors investigate a cognitive D2D transmitter that harvests energy from cellular users and transmits on a selected cellular channel. The study focuses on the impact of different spectrum access schemes on the transmission outage probability. In~\cite{H.2016Yang}, the authors propose a selection scheme for cellular users to choose between wireless-powered D2D relaying and direct transmission. Under a K-tier heterogeneous network model, the outage probability of cellular users is derived in closed-form expressions. Both~\cite{H.2015Sakr} and~\cite{H.2016Yang} aim at improving the self-sustainable D2D communications through efficient spectrum allocation. However, the use of wireless-powered transmission is subject to channel availability. In particular, wireless-powered transmission is not feasible when all the channels are occupied. 
Different from these research efforts, our hybrid design allows transmission by ambient backscattering when wireless-powered transmission is infeasible, which does not cause noticeable interference to legitimate users~\cite{V.2013Liu,D.2016Darsena}.

Different from the works~\cite{H.2015Sakr,H.2016Yang} that consider ambient RF energy harvesting, the work in~\cite{Y.2016Liu} studies D2D communications with dedicated power beacons for wireless energy provisioning. Both energy outage and secrecy outage probabilities are analyzed under different power beacon allocation schemes. The work in~\cite{K.2017Han} introduces a system model termed wirelessly powered backscatter communication network, which utilizes
dedicated power beacons transmitting unmodulated carrier signals to power the network nodes. 
Once successfully powered, each node can transmit information by backscattering the signals from the same power beacon. 
From the studies in~\cite{Y.2016Liu} and \cite{K.2017Han}, it is confirmed that adopting power beacons increases the available wireless power, and thus facilitates both wireless-powered communications and backscatter communications. However, this approach is costly and not energy-efficient due to the use of power beacons. 
 
More recently, ambient backscatter communications have been analyzed in wireless network environments. 
The authors in \cite{T.2017Hoang} 
investigate a cognitive radio network where a wireless-powered secondary user can either harvest energy or adopt ambient backscattering from a primary user on transmission. 
To maximize the throughput of the secondary user, a time allocation problem is developed to obtain the optimal time ratio between energy harvesting and ambient backscattering. The work in \cite{H.2017Kim} introduces a hybrid backscatter communication as an alternative access scheme for a wireless-powered transmitter. Specifically, when the ambient RF signals are not sufficient to support wireless-powered communications, the transmitter can adopt either bistatic backscattering or ambient backscattering depending on the availability of a dedicated carrier emitter.  A throughput maximization problem is formulated to find the optimal time allocation for the hybrid backscatter communication operation. Both \cite{T.2017Hoang} and \cite{H.2017Kim} study deterministic scenarios. 
Instead, our work takes into account the spatial randomness of network components and focuses on investigating the impact of  different spatial distributions.
 






\subsection{Motivation and Contributions} 
For sustainable D2D communications, the energy harvested by the D2D transmitter and the interference that impairs the D2D communications both come from RF signals emitted by ambient transmitters (e.g., cellular mobiles). These RF signals are dependent on environment factors, such as distribution, transmission load, and density of ambient transmitters. 
Such a dual nature of electromagnetic interference has stimulated the upsurge of interest for communication
systems powered by ambient energy harvested (see \cite{XLuSurvey,S.2015Bi} and references therein).

Unlike conventional wireless communications with reliable and stable power supply, the sustainability of our proposed hybrid D2D communications depends on the stochastic nature of wireless channels. Moreover, 
environment factors, e.g., distribution, spatial density, and transmission load of ambient energy sources, significantly affect the performance of the proposed hybrid D2D communications. To understand the performance of the hybrid communications, it is important to study the role of ambient RF signals which serve as energy resources for harvesting, signal resources for ambient backscattering, and interferers that affect the D2D transmission. This motivates us to investigate and characterize the impact of environment factors on the self-sustainable D2D communications.

The main contributions of our work are summarized as follows.

\begin{itemize}

\item First, we propose a novel self-sustainable communication paradigm for D2D networks, namely, ambient backscattering assisted wireless-powered communications. In particular, a hybrid D2D system that combines both ambient backscattering and wireless-powered communication capabilities is introduced. 

\item Taking into account environment factors, we analyze the hybrid D2D communications in a large-scale wireless communication network.

\item We propose two mode selection protocols for the hybrid D2D communications. For each protocol, we theoretically characterize the energy outage probability, coverage probability (i.e., successful transmission probability), and average throughput of the hybrid D2D communications.


\item We validate the analysis by simulations and investigate the performance of the hybrid D2D communications under various conditions. The evaluation shows that the hybrid transmitter is more flexible than a pure wireless-powered transmitter and a pure backscatter transmitter. Moreover, the hybrid D2D communications benefit from larger repulsion degree, transmission load and spatial density of ambient energy sources. 

\end{itemize}

The rest of this paper is structured as follows. Section II introduces the design of ambient backscattering assisted wireless-powered communications. Section III presents the system model along with the geometric network modeling. Section IV then theoretically characterizes the performance of the hybrid communication paradigm with regard to different metrics.  
Section V presents the validation based on Monte Carlo simulations and discussions. Finally, Section VI concludes our work and indicates future research directions. 

{\bf Notations:} In the following, we use $\mathbb{E}[\cdot]$ to denote the average over all random variables in $[\cdot]$, 
$\mathbb{E}_{X}[\cdot]$ to denote the expectation over the random variable $X$, and $\mathbb{P}(E)$ to denote the probability that an event $E$ occurs. Besides, $\|{\mathbf x}\|$ is used to represent the Euclidean norm between the coordinate ${\mathbf x}$ and the origin of the Euclidean space. $\bar{z}$ and $|z|$ denote the complex conjugate and modulus of the complex number $z$, respectively.  The notations $f_{X}(\cdot)$, $F_{X}(\cdot)$, $M_{X}(\cdot)$ and $\mathcal{L}_{X}(\cdot)$ are used to denote, respectively, the probability density function (PDF), cumulative distribution function (CDF), moment generating function (MGF), and Laplace transform of a random variable $X$. $\mathrm{erfc}(\cdot)$ is the complementary error function defined as $\mathrm{erfc}(x) = \frac{2}{\sqrt{\pi}} \int^{\infty}_{x} \exp (- t^2) \mathrm{d}t$. 

\section{Ambient Backscattering Assisted Wireless-Powered Communications} 

\subsection{Ambient Backscatter Communications} 
In backscatter communications, the information transmission is done by load modulation which does not involve active RF generation. In particular, a backscattering device tunes its antenna load reflection coefficient by switching between two or more impedances, resulting in a varying amount of incident signal to be backscattered. For example, when the impedance of the chosen load matches with that of the antenna, a little amount of the signal is reflected, exhibiting a signal absorbing state. Conversely, if the impedances are not matched, a large amount of the signal is reflected, indicating a signal reflecting state. A backscatter transmitter can use an absorbing state or reflecting state to transmit a `0' or `1' bit. Based on detection of the amount of the reflected signal, the transmitted information is interpreted at the receiver side.
 
Unlike conventional backscatter communication (e.g., for passive sensors~\cite{Ricardo2016Correia} and RFID tags~\cite{D.2010Dardari}), ambient backscattering functions without the need of a dedicated carrier emitter (e.g., RFID reader) to generate continuous waves. Instead, an ambient backscatter device utilizes exogenous RF waves as both energy resource to scavenge and signal resource to reflect. Moreover, ambient backscattering is featured with coupled backscattering and energy harvesting processes~\cite{X.2017Lu}. To initiate information transmission, the device first extracts energy from incident RF waves through rectifying. Once the rectified DC voltage is above the operating level of the circuit, the device is activated to conduct load modulation. In other words, modulated backscatter is generated on the reflected wave to transmit the encoded data, enabling a full-time transmission. For example, a recent experiment in~\cite{N.2014Parks} demonstrated that a 1 Mbps transmission could be achieved at the distance of 7 feet, when the incident RF power available is above -20 dBm.





\subsection{Hybrid D2D Communications}
\label{Hybrid_D2D_Communication} 
 
We now propose a novel hybrid D2D transmitter that combines two self-sustainable communication approaches, ambient backscatter communications and wireless-powered communications. On the one hand, ambient backscatter communications can be operated with very low power consumption. Thus, ambient backscattering may still be performed when the power density of ambient RF signals is low. On the other hand, the wireless-powered communications, also referred to as harvest-then-transmit (HTT)~\cite{H.2014Ju}, though have higher power consumption, can first accumulate harvested energy and then achieve longer transmission distance through active RF transmission. 
Therefore, these two approaches can well complement each other and result in better 
communication performance.

\begin{figure*}
\centering
\includegraphics[width=0.8\textwidth]{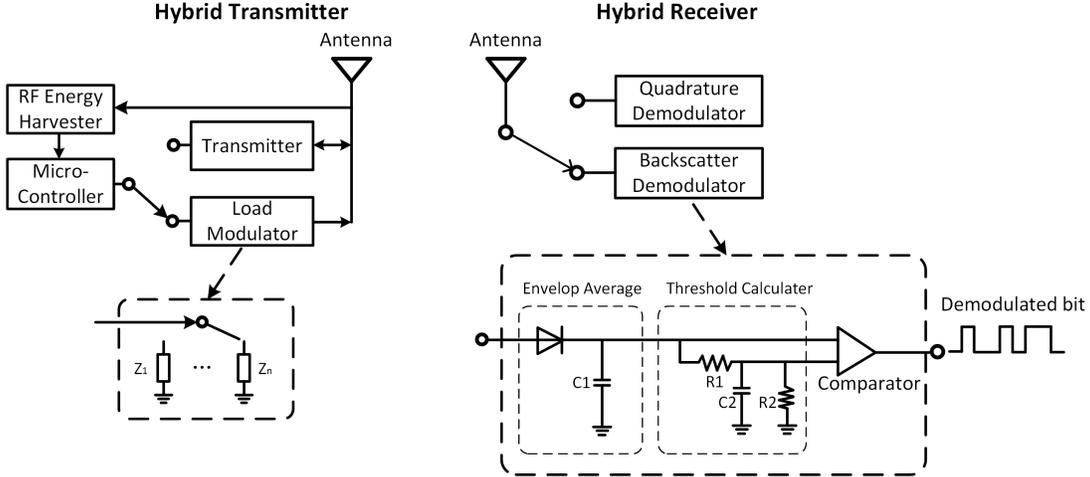}  
\caption{The structure of the hybrid transmitter and hybrid receiver.} 
\label{fig:hybrid_transmitter_receiver}
\end{figure*}

We depict the block diagrams of the hybrid transmitter and hybrid receiver in Fig.~\ref{fig:hybrid_transmitter_receiver}. The transmitter consists of the following main components: an antenna, used to scavenge energy and transmit information; an RF energy harvester to perform energy conversion from RF signals to direct current (DC); a load modulator, to perform load modulation utilizing incident radio waves from ambient; a low-power RF transmitter, for information transmission by active RF generation; 
and a low-power microcontroller, for the control of mode selection between RF transmitter and load modulator. 
With the designed architecture, the hybrid transmitter is flexible to perform active data transmission, backscattering, and RF energy harvesting.

At the receiver side, a dual-mode circuit as shown in Fig.~\ref{fig:hybrid_transmitter_receiver} can demodulate data from both the modulated backscatter and active RF transmission. The mode selection can be done by the transmitter through signaling. When the hybrid transmitter adopts HTT mode, a conventional quadrature demodulator composed of a phase shift module and a phase detector can be used. To demodulate from backscatter, the receiver adopts a simple circuit composed of three main components, namely, an envelop averager, a threshold calculator, and a comparator. The received signal is first smoothened to average out the variations due to embedded modulation. This step outputs  low and high voltage levels, which correspond to the time when the receiver observes only the ambient signal (i.e., an absorbing state), and the additional reflected signal (a reflecting state), respectively. Next, the threshold calculator computes a threshold by taking the mean of the two voltage levels. Consequently, by comparing the instantaneously generated voltage at the first step with the threshold, the comparator finally interprets the received signal into a stream of information bits.


As can be seen from Fig.~\ref{fig:hybrid_transmitter_receiver}, several common circuit
components, e.g., antenna, receiver, RF energy harvester and micro-controller, can be shared for the functions of active transmission and ambient backscattering.\footnote{Recently, a hybrid battery-powered transmitter with Bluetooth and backscattering functions, namely {\em Briadio}, is implemented in \cite{P.2016Hu}. The Briadio prototype demonstrates that integration of (active) Bluetooth and (passive) backscattering is practical and can be realized with low circuit complexity. Therefore, the integration of a wireless-powered active transmitter and an ambient backscatter transmitter can also be implemented with low complexity.} 
Thus, our proposed hybrid transmitter allows a tight integration of a wireless-powered transmitter and an ambient backscatter transmitter.
The hybrid transmitter also allows a highly flexible operation to perform either HTT mode or ambient backscattering mode.
Therefore, the hybrid transmitter needs to decide which mode to select when it wants to perform data transmission. 
We will investigate different mode selection protocols and analyze the corresponding performance in the sequel.

\section{Network Model and Stochastic Geometry Characterization} 
\subsection{Network Model} 
\label{sec:network_model}

We consider the hybrid D2D communications, introduced in Section \ref{Hybrid_D2D_Communication}, in coexistence with ambient RF transmitters, e.g., cellular base stations and mobiles. Fig.~\ref{fig:D2D_system} illustrates our considered system model. 
We consider two groups of coexisting ambient transmitters, denoted as $\Phi$ and $\Psi$, respectively, which work on different frequency bands. The RF energy harvester of the hybrid transmitter scavenges on the transmission frequency of $\Phi$. If the hybrid transmitter is in ambient backscattering mode, it performs load modulation on the incident signals from $\Phi$. Alternatively, when the hybrid transmitter is in HTT mode, it harvests energy from ambient transmitters in $\Phi$, and transmits over a different frequency band used by ambient transmitters in $\Psi$.\footnote{Similar to~\cite{X.April2014Lin}, we assume that the hybrid transmitter decides the transmit frequency and
indicates to the hybrid receiver through broadcasting in the preamble. Thus, the hybrid receiver is implemented to work on the transmit frequency of $\Phi$ and $\Psi$ when the hybrid transmitter is in ambient backscattering mode and HTT mode, respectively.}
The received signal at the hybrid receiver from the hybrid transmitter is impaired by the interference from $\Psi$.
We assume that $\Phi$ and $\Psi$ follow independent $\alpha$-Ginibre point process (GPP)~\cite{DecreusefondFlintVergne} which will be justified and detailed in Section~\ref{sec:geometricmodeling}. 
For example, the RF energy harvester of
the hybrid transmitter scavenges energy from LTE-A cellular mobiles on 1800 MHz. In HTT mode, the active D2D transmission is performed using WiFi Direct~\cite{Camps-Mur2013} over 2.4GHz, and gets interfered by the ambient users working on the same frequency band.  The locations of the 
ambient users on 1800MHz and those on 2.4GHz are independent. 

\begin{figure*}
	\centering
	\includegraphics[width=0.6\textwidth]{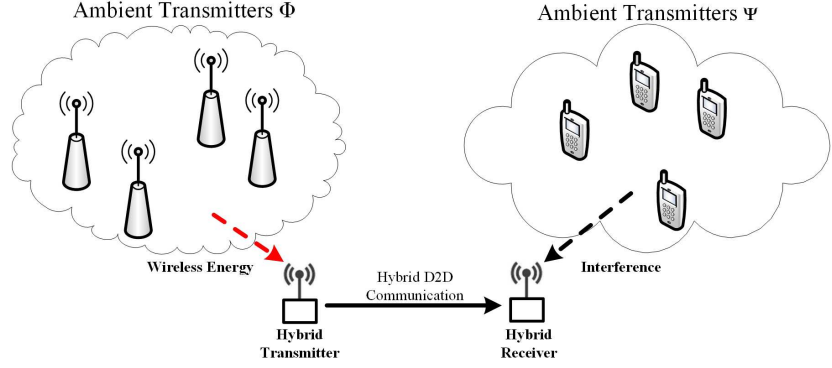}  
	\caption{Illustration of the hybrid D2D communication.} \label{fig:D2D_system}
\end{figure*} 


Without loss of generality, the hybrid transmitter, denoted as $\mathrm{S}$, and the associated hybrid receiver, denoted as $\mathrm{D}$, are assumed to locate at the origin when we analyze their corresponding performance, respectively. In particular, the point processes $\Phi$ and $\Psi$ are assumed to be supported on the circular observation windows $\mathbb{O}_{\mathrm{S}}$ and  $\mathbb{O}_{\mathrm{D}}$ with radius $R$, which are centered at  $\mathrm{S}$ and $\mathrm{D}$, respectively. 
The transmit power of the ambient transmitters belonging to $\Phi$ and $\Psi$ are denoted as $P_{A}$ and $P_{B}$, respectively. Let $\zeta_{A}$ and $\zeta_{B}$ denote the spatial density of $\Phi$ and $\Psi$, respectively. And $\alpha \in \big(0, 1\big]$ represents the repulsion factor which measures the correlation among the spatial points in $\Phi$ and $\Psi$.
Then, $\Phi$ can be represented by a homogeneous marked point process $\Phi=\{\mathbf{X}_{A},   \mathbf{C}_{A}, \mathcal{A}, \zeta_{A}, \alpha,  P_{A} \}$, where $\mathbf{X}_{A}=\{\mathbf{x}_{a} | a \in \Phi \}$ denotes the set of locations of the ambient transmitters in $\Phi$, $\mathbf{C}_{A} = \{c_{a} | a \in \Phi \}$ denotes the set of state indicators (in particular, $c_{a}=1$ if transmitter $a$ is on transmission in the reference time slot, and $c_{a} = 0$ otherwise), 
and $\mathcal{A}$ denotes the set of active ambient transmitters of $\Phi$ observed in $\mathbb{O}_{\mathrm{S}}$ by the hybrid transmitter.  
We assume that $c_{a}$ is an independent and identically distributed (i.i.d.) random variable. Then, the {\it transmission load} of $\Phi$ can be calculated as $l_{A}=\mathbb{P}[c_{a}=1]$, which measures the portion of time that an ambient transmitter is active. It is worth noting that the set of active transmitters in the reference time is a thinning point process with spatial density $l_{A} \zeta_{A}$. Similarly, $\Psi$ is characterized by $\Psi=\{\mathbf{X}_{B}, \mathbf{C}_{B}, \mathcal{B}, \zeta_{B}, \alpha, P_{B}\}$, where $\mathbf{X}_{B}$ denotes the set of the locations of transmitters in $\Psi$, $\mathbf{C}_{B}$ is the set of state indicators for $\Psi$,  and $\mathcal{B}$ denotes the set of the ambient transmitters of $\Psi$ observed in $\mathbb{O}_{\mathrm{D}}$ by the hybrid receiver.  $l_{B}=\mathbb{P}[c_{b}=1]$ denotes the transmission load of $\Psi$, where $c_{b}$ is the state indicator of $b \in \mathcal{B}$. Let $\xi$ represent the ratio of $l_{B}\zeta_{B}$ to $l_{A}\zeta_{A}$, i.e., $\xi=l_{B}\zeta_{B}/l_{A}\zeta_{A}$, referred to as the interference ratio. A larger value of $\xi$ indicates a higher level of interference. 

Let $\mathbf{x}_{\mathrm{S}}$ represent the location of the hybrid transmitter.
The power of the incident RF signals at the antenna of $\mathrm{S}$ can be calculated as
$P_{I}= P_{A} \sum_{a \in \mathcal{A}} h_{a,\mathrm{S}} \|\mathbf{x}_{a}-\mathbf{x}_{\mathrm{S}}\|^{-\mu}$, 
where $h_{x,y}$ represents the fading channel gain between $x$ and $y$ on the transmit frequency of $\Phi$, and $\mu$ denotes the path loss exponent.
The circuit of the hybrid transmitter becomes functional if it can extract sufficient energy from the incident RF signals. When the hybrid transmitter works in different modes (i.e., either HTT or ambient backscattering),  the hardware circuit consumes different amounts of energy.\footnote{The typical circuit power consumption rate of a wireless-powered transmitter ranges from hundreds of micro-Watts to several milli-Watts \cite{Y.2015Ishikawa
,X.June_2014Lu}, while that of a backscatter transmitter ranges from several micro-Watts to hundreds of micro-Watts~\cite{N.2014Parks}.} Let $\rho_{\mathrm{B}}$ and $\rho_{\mathrm{H}}$ denote the circuit power consumption rates (in Watt) in ambient backscattering and HTT modes, respectively. If the hybrid transmitter cannot harvest sufficient energy, an outage occurs.

In ambient backscattering mode, if the instantaneous energy harvesting rate (in Watt) exceeds $\rho_{\mathrm{B}}$, the hybrid transmitter can generate modulated backscatter. 
During backscattering process, a fraction of the incident signal power, denoted as $P_{H}$, is rectified for conversion from RF signal to direct current (DC), and the residual amount of signal power, denoted as $P_{R}$, is reflected to carry the modulated information. 
In ambient backscattering mode, the energy harvesting rate (in Watt) can be represented as~\cite{C.2012Boyer,C.2014Boyer}
$P^{\mathrm{B}}_{E}=\beta P_{H}=\beta \varrho P_{I}$,
where $0 < \beta \leq 1$ denotes the efficiency of RF-to-DC energy conversion, and $\varrho$ represents the fraction of the incident RF power for RF-to-DC energy conversion. Note that the value of $\varrho$ depends on the symbol constellation adopted for multi-level load modulation~\cite{C.2012Boyer}. For example, $\varrho$ is 0.625 on average assuming equiprobable symbols if binary constellations are adopted with modulator impedance values set as 0.5 and 0.75~\cite{C.2014Boyer}.

Let $\mathbf{x}_{\mathrm{D}}$ represent the location of the hybrid receiver.
$d\!\!=\!\!\|\mathbf{x}_{\mathrm{S}}\!-\! \mathbf{x}_{\mathrm{D}}\|$ denotes the distance between $\mathrm{S}$ and $\mathrm{D}$. 
Then, in ambient backscattering mode, the power of the received backscatter at $\mathrm{D}$ from $\mathrm{S}$
can be calculated as $P_{\mathrm{S},\mathrm{D}} \!=\!
    \delta P_{I}  (1-\varrho) h_{\mathrm{S},\mathrm{D}} d^{-\mu}$ if $P^{\mathrm{B}}_{E} > \rho_{\mathrm{B}}$ and $P_{\mathrm{S},\mathrm{D}} = 0$ otherwise.  
Here $0<\delta \leq 1$ is the backscattering efficiency of the transmit antenna, which is related to the antenna aperture~\cite{V.2006Nikitin}.  
If $\mathrm{S}$ is active in ambient backscattering mode, the resulted SNR 
at $\mathrm{D}$ is
\begin{equation} \label{eqn:SNR_B}
	\nu_{\mathrm{B}} =\frac{P_{\mathrm{S},\mathrm{D}}  }{ \sigma^2}=\frac{\delta P_{I} (1-\varrho) h_{\mathrm{S},\mathrm{D}}}{ d^{\mu}\sigma^2},
\end{equation}
where $\sigma^2$ is the variance of additive white Gaussian noise (AWGN).

If the received SNR $\nu_{\mathrm{B}}$ is above a threshold $\tau_{\mathrm{B}}$, $\mathrm{D}$ is able to successfully decode information from the modulated backscatter at a pre-designed rate $T_{\mathrm{B}}$ (in bits per second (bps)). This backscatter transmission rate is dependent on the setting of resistor-capacitor circuit elements. For example, it has been demonstrated in~\cite{V.2013Liu} that 1~kbps and 10 kbps backscatter transmission rates can be achieved if the values of circuit elements, i.e., R1, R2, C1, and C2, in Fig.~\ref{fig:hybrid_transmitter_receiver} are set as (150 kOhm, 10 MOhm, 4.7 nano-farad, 10 nano-farad) and (150 kOhm, 10 MOhm, 680 nano-farad, 1 micro-farad), respectively.


When the hybrid transmitter $\mathrm{S}$ chooses to adopt active RF transmission, it is operated by the HTT protocol~\cite{H.2014Ju}. In HTT mode, the hybrid transmitter works in a time-slot based manner. Specifically, in each time slot, the first period, with time fraction $\omega$, is for harvesting energy, during which the impedance of the load modulator is tuned to fully match that of the antenna to maximize the energy conversion efficiency. The corresponding energy harvesting rate is $P^{\mathrm{H}}_{E}=\omega \beta P_{I}$.
This harvested energy is first utilized to power the circuit. Then the remaining energy, if available, is stored in an energy storage. 
If the harvested energy is enough to operate the circuit, the hybrid transmitter spends the rest of the period $(1-\omega)$ to perform active transmission with the stored energy.  
     
In the active transmission phase, the transmit power of $\mathrm{S}$ is $P_{\mathrm{S}} =  \frac{ P^{\mathrm{H}}_{E}- \rho_{\mathrm{H}}}{1-\omega }$ if $P^{\mathrm{H}}_{E} > \rho_{\mathrm{H}}$ and $P_{\mathrm{S}} = 0$ otherwise.
Then, the received signal-to-interference-plus-noise ratio (SINR) at $\mathrm{D}$ can be expressed as
\begin{eqnarray}
\nu_{\mathrm{H}} =\frac{ P_{\mathrm{S}} \widetilde{h}_{\mathrm{S},\mathrm{D}} d^{-\mu} }{\sum_{b \in \mathcal{B}}  P_{B} \widetilde{h}_{b,\mathrm{D}}\|\mathbf{x}_{b}-\mathbf{x}_{\mathrm{D}}\|^{-\mu} +\sigma^{2} },
\end{eqnarray}
where $\widetilde{h}_{x,y}$ denotes the fading channel gain between $x$ and $y$ on the transmit frequency of $\Psi$. 

As the hybrid D2D communications and the transmission from ambient transmitters may occur in different environments,
we consider different fading channels for $h_{\mathrm{S},\mathrm{D}}$, $\widetilde{h}_{\mathrm{S},\mathrm{D}}$, $h_{a,\mathrm{S}}$ and $\widetilde{h}_{b,\mathrm{D}}$. Specifically,  $h_{\mathrm{S},\mathrm{D}}$ and $\widetilde{h}_{\mathrm{S},\mathrm{D}}$ are assumed to follow Rayleigh fading. Both $h_{a,\mathrm{S}}$ and $\widetilde{h}_{b,\mathrm{D}}$ follow i.i.d. Nakagami-$m$ fading, which is a general channel fading model that contains Rayleigh distribution as a special case when $m=1$. 
This channel model allows a flexible evaluation of the impact of the ambient signals.\footnote{Our work can be extended to the case when $h_{\mathrm{S},\mathrm{D}}$ and $\widetilde{h}_{\mathrm{S},\mathrm{D}}$ also follow a Nakagami-$m$ distribution. However, the resulted analytical expressions bring about high computational complexity without much insight. Therefore, we focus on exponentially distributed $h_{\mathrm{S},\mathrm{D}}$  and $\widetilde{h}_{\mathrm{S},\mathrm{D}}$ in this paper.}  Let $\mathcal{G}(x,y)$ represent the gamma distribution with shape parameter $x$ and scale parameter $y$, and $\mathcal{E}(x)$ represent the exponential distribution with rate parameter $x$. Thus, the fading channel gains are expressed as  $h_{a,\mathrm{S}}, \widetilde{h}_{b,\mathrm{D}} \sim \mathcal{G}( m, \theta/m)$ and $h_{\mathrm{S},\mathrm{D}}, \widetilde{h}_{\mathrm{S},\mathrm{D}} \sim \mathcal{E}(\lambda)$, where $\theta$ and $\lambda$ are  expectation of the corresponding fading channel gains. 

Let $W$ denote the frequency bandwidth for active transmission in HTT mode. The transmission capacity of a hybrid  transmitter in HTT mode can be computed as $\mathcal{T}_{\mathrm{H}}  \!=\!  (1-\omega)W \log_{2} \left(1\!+\! \nu_{\mathrm{H}} \right)$ if  $P^{\mathrm{H}}_{E} > \rho_{\mathrm{H}}$ and $\nu_{\mathrm{H}} > \tau_{\mathrm{H}}$, and $\mathcal{T}_{\mathrm{H}} = 0$ otherwise. 
Here $\tau_{\mathrm{H}}$ is the minimum SINR threshold for the hybrid receiver to successfully decode from the received active RF signals~\cite{H.2013ElSawy}. 

For operation of our proposed hybrid transmitter, we consider two mode selection protocols, namely, \emph{power threshold-based protocol} (PTP) and \emph{SNR threshold-based protocol} (STP). 
\begin{itemize}

\item Under PTP, a hybrid transmitter first detects the available energy harvesting rate $P^{\mathrm{H}}_{E}$. If $P^{\mathrm{H}}_{E}$ is below the threshold which is needed to power the RF transmitter circuit (for active transmission), i.e., $ P^{\mathrm{H}}_{E} \leq \rho_{\mathrm{H}}  $, ambient backscattering mode will be used. Otherwise, HTT mode will be adopted. 

\item Under STP, the hybrid transmitter first attempts to transmit by backscattering. If the achieved SNR at the receiver is above the threshold which is needed to decode information from the backscatter, i.e., $\nu_{\mathrm{B}} > \tau_{\mathrm{B}}$, the transmitter will be in ambient backscattering mode. Otherwise, it will switch to HTT mode.

\end{itemize}


The motivation behind PTP is to use active transmission for higher throughput if the ambient energy resource is abundant, and adopt backscattering to diminish the occurrences of energy outage otherwise. The motivation of STP is to enjoy full-time transmission by backscattering when the achievable SNR is high, and adopt HTT if ambient backscattering does not have good performance. 
Note that for implementation of the two protocols, PTP allows the transmitter to operate independently based on its local information while STP requires the transmitter to obtain feedback from the receiver.

We investigate the two simple protocols described above in view of their practicality of implementation and tractable analysis. We will reveal how the naive mechanism adopted in each protocol affects different performance metrics. More sophisticated protocols that offer superior performance can be designed by utilizing system information such as channel state information feedback, interference detection, and energy source localization. However, these protocols may require more computational overhead as well as complicated and expensive hardware implementation, which are not practical for low-power devices based on energy harvesting. 
 
{\bf Remark 1:} The analytical expressions derived in this paper represent a lower bound on the achievable performance. This is because for implementation simplicity and practicability we consider that the mode selection is performed only once at the beginning of hybrid D2D communication. The selected mode may not always be the better choice when the network channel condition varies. 
The analytical approach presented in the paper can be straightforwardly extended to the case when mode selection is performed at the beginning of each fading block.

\subsection{Geometric Modeling of the Systems} \label{sec:geometricmodeling}
 

Due to its tractability, the Poisson point process (PPP) has been widely adopted for modeling different types of wireless networks~\cite{H.2013ElSawy}. PPP abstracts each randomly located point according to a uniform distribution in the Euclidean space.  
However, as pointed out in \cite{G.2011Andrews}, PPP modeling only serves as lower bounds to the coverage probability and mean rate of real-world deployment. 
The reason is that the spatial points in a PPP may locate very close to each other because of independence.
This calls for the need of more sophisticated and general geometric approaches to model the correlation among spatial points.
In this context, GPP and its variants have attracted considerable attention. 
Recent research work has adopted GPP in~\cite{N.2015Miyoshi}, $\alpha$-GPP in~\cite{I.2015Flint ,X.Sept_2017Lu,X.March2015Lu,I.December_2014Flint} and $\beta$-GPP in~\cite{N.2015Deng 
} to model the distribution of cellular base stations.
In this paper, the performance analysis of the hybrid D2D communications is based on $\alpha$-GPP~\cite{DecreusefondFlintVergne}.
$\alpha$-GPP is a repulsive point process which allows to characterize the repulsion among randomly located points and covers the PPP as a special case (i.e., when $\alpha \to 0$).  The coefficient $\alpha$ ($\alpha = -1/\kappa$ for a positive integer $\kappa$) indicates the repulsion degree of the spatial points.  Specifically, the repulsion is the strongest in case $\alpha=-1$ and disappears when $\alpha$ approaches 0. 
In this paper, we use $\alpha$-GPP because it renders tractable analytical expressions in terms of Fredholm determinants. The Fredholm determinant is a generalized determinant of a matrix defined by bounded operators on a Hilbert space and has shown to be efficient for numerical evaluation of the relevant quantities~\cite{L.2015Decreusefond}. 

In the following, we describe some fundamental features and properties of $\alpha$-GPP which will be applied later in the analysis of this paper.
For any $\alpha$-GPP $\Omega$, 
let $\zeta$ denote the spatial density of the points of $\Omega$, and $\mathcal{K}$ represent an almost surely finite collection of $\Omega$ located inside an observation window $\mathbb{O}_{\mathbf{x}}$, denoted as a circular Euclidean plane with positive radius $R$.  Without loss of generality, in this paper, we restrict the analysis on a generic point located at $\mathbf{x}$ within  $\mathbb{O}_{\mathbf{x}}$. We begin with the Laplace transform of $\alpha$-GPP  characterized by means of Fredholm determinants~\cite{ShiraiTakahashi}.  
The Fredholm determinant is generally expressed in the form of a complex-valued function, which contains the coordinates of the spatial points represented by complex numbers as the variables. 
For $|\alpha| \leq 1$, the Fredholm determinant of an arbitrary function $F$ is expressed as $\mathrm{Det}\big(\mathrm{Id} + \alpha F\big)$. The readers are referred to \cite{ShiraiTakahashi} for the mathematics and properties of  the Fredholm determinant. 

\begin{prop}  {\em \cite[Theorem 2.3]{L.2015Decreusefond} } 
\label{lemma1} Let $\varphi$ 
represent an arbitrary real-valued function. For an $\alpha$-GPP, the Laplace transform of $\sum_{k \in \mathcal{K}} \varphi(\mathbf{x}_{k})$ can be expressed as
\begin{eqnarray}
	\label{eq:laplacedpp} 
	\mathbb E \left[ \exp\left(-s \sum_{k\in\mathcal K}\varphi(\mathbf x_k)\right)\right] = \mathrm{Det}\big(\mathrm{Id}+\alpha \mathbb{K}_\varphi(s)\big)^{-\frac{1}{\alpha}},
\end{eqnarray}
	where $\mathbb{K}_\varphi(s)$ is given by
	\begin{align}
 & \mathbb{K}_\varphi(s) =	\sqrt{1-\exp (-s\varphi(\mathbf{x}))} G_{\Omega} (\mathbf x,\mathbf y) \sqrt{1-\exp ({-s\varphi(\mathbf y)})},   \quad \mathbf x,\mathbf y\in \mathcal{K}, \label{eqn:kernel_K} 
	\end{align}
	wherein $G_{\Omega}$ is the Ginibre kernel which represents the correlation force among different spatial points in $\Omega$ defined as 
	\begin{equation}
	\label{eq:ginibre}
	G_{\Omega} (\mathbf x,\mathbf y)=\zeta\,e^{\pi\zeta \mathbf x \bar{\mathbf y}} e^{-\frac{\pi\zeta}{2}( |\mathbf x|^2 + |\mathbf y|^2)},
	\quad 
	\mathbf x,\mathbf y \in \mathcal{K}.
	\end{equation}
\end{prop}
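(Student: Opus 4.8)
The plan is to establish the Laplace functional identity by combining the series expansion of the Fredholm determinant with the determinantal correlation structure of the $\alpha$-GPP. The statement is cited as \cite[Theorem 2.3]{L.2015Decreusefond}, so the cleanest route is to reconstruct its proof from first principles using the two characterizations of an $\alpha$-GPP introduced earlier: the correlation functions given in \eqref{eq:correlationfunctions} as $\alpha$-determinants of the Ginibre kernel, and the Fredholm-determinant series expansion for $\mathrm{Det}(\mathrm{Id}+\alpha F)^{-1/\alpha}$. First I would write the left-hand side using the definition of the Laplace functional over the random configuration, then expand the product $\exp(-s\sum_{k}\varphi(\mathbf x_k)) = \prod_{k}\exp(-s\varphi(\mathbf x_k))$ and apply the factorial-moment (Campbell--Mecke type) expansion. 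This rewrites the expectation as a sum over $n$ of integrals of the $n$-point correlation function $\zeta^{(n)}$ against the multiplicative test function $\prod_{i=1}^n\big(\exp(-s\varphi(\mathbf x_i))-1\big)$.

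The key algebraic step is then to substitute the determinantal form $\zeta^{(n)}(\mathbf x_1,\dots,\mathbf x_n)=\det_\alpha\big(G_\Omega(\mathbf x_i,\mathbf x_j)\big)$ into that expansion and recognize the resulting series as the series expansion of a Fredholm determinant. Here one uses the multilinearity of the $\alpha$-determinant to absorb the factors $\big(1-\exp(-s\varphi(\mathbf x_i))\big)$ symmetrically into the rows and columns of the kernel, by writing $1-\exp(-s\varphi(\mathbf x_i))=\sqrt{1-\exp(-s\varphi(\mathbf x_i))}\cdot\sqrt{1-\exp(-s\varphi(\mathbf x_j))}$ when paired across the matrix entry $(i,j)$. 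This is exactly what produces the symmetrized kernel $\mathbb K_\varphi(s)$ in \eqref{eqn:kernel_K}: the square-root prefactors attach to $G_\Omega(\mathbf x,\mathbf y)$ on both sides. Matching the combinatorial prefactor $\tfrac{1}{n!}$ and the power of $\alpha$ against the Fredholm series identifies the sum with $\mathrm{Det}\big(\mathrm{Id}+\alpha\mathbb K_\varphi(s)\big)^{-1/\alpha}$.

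I expect the main obstacle to be the analytic bookkeeping rather than the formal algebra: one must verify that $\mathbb K_\varphi(s)$ is trace-class (or at least Hilbert--Schmidt with the appropriate regularization) so that the Fredholm determinant is well-defined, and that the interchange of expectation, infinite summation, and integration is justified. The trace-class property follows from the local trace-class assumption on $G_\Omega$ together with the boundedness of $0\le 1-\exp(-s\varphi(\mathbf x))\le 1$ (for $s\ge 0$ and $\varphi\ge 0$), which ensures $\mathbb K_\varphi(s)$ inherits the needed compactness from $G_\Omega$. Convergence of the series and the exchange of limits can then be controlled via the spectral bound that the eigenvalues of $\mathbb K$ lie in $[0,-1/\alpha]$, guaranteeing absolute convergence of the Fredholm expansion. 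Once these technical points are settled, the identity follows by reading the determinantal correlation-function expansion as the Fredholm series term by term.
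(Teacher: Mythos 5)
The paper does not prove this proposition at all: it is imported verbatim as \cite[Theorem 2.3]{L.2015Decreusefond}, so there is no in-paper argument to compare against. Your reconstruction is the standard proof of that cited result and is correct in outline: expand the Laplace functional via the factorial-moment expansion into integrals of $\zeta^{(n)}$ against $\prod_i\big(e^{-s\varphi(\mathbf x_i)}-1\big)$, substitute the $\alpha$-determinant form of the correlation functions, pull the factors $1-e^{-s\varphi(\mathbf x_i)}$ into the kernel (each row index and column index appears exactly once in every permutation product, which is what lets you split each factor into two square roots attached to the row and the column respectively, yielding the symmetrized kernel $\mathbb K_\varphi(s)$ with the same $\alpha$-determinant), and match the resulting series, including the sign $(-1)^n$ from $e^{-s\varphi}-1=-(1-e^{-s\varphi})$, with the Fredholm expansion of $\mathrm{Det}\big(\mathrm{Id}+\alpha\mathbb K_\varphi(s)\big)^{-1/\alpha}$; your appeal to the trace-class property of $G_\Omega$, the bound $0\le 1-e^{-s\varphi}\le 1$, and the spectral condition to justify convergence and the interchange of limits is exactly the bookkeeping the reference carries out.
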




As the Laplace transform in (\ref{eq:laplacedpp}) is given in the form of Fredholm determinant, the evaluation of it may involve high computation complexity. For example, the conventional approach in~\cite{F.2008Bornemann} approximates 
the Fredholm determinant by the determinant of an $N \times N$ matrix, resulting in a complexity of $O(N^3)$. 
The recent results  in~\cite{B.2017Kong} allow a more efficient computation of the Fredholm determinant 
with significantly reduced complexity. A simplified expression for evaluating the Fredholm determinant 
is presented in the following Proposition.

\begin{prop} {\em \cite[Lemma 3]{B.2017Kong}} 
With $\mathbb{K}_{\varphi}(s)$  defined in (\ref{eqn:kernel_K})  and  
$G_{\Omega}(\mathbf{x},\mathbf{y})$  defined in~(\ref{eq:ginibre}), the Fredholm determinant on the right hand side of (\ref{eq:laplacedpp}) can be evaluated as
\begin{align} \label{evaluation_det}
& \mathrm{Det} \big(\mathrm{Id}+ \alpha \mathbb{K}_{\varphi}(s)\big)^{-\frac{1}{\alpha}} \!\!=\!\!\!\!\prod^{N_{\mathrm{closed}}}_{n=0}\!\!  \bigg( \! 1 \!+\! \frac{2\alpha (\pi \zeta)^{n+1}}{n!} \! \int^{R}_{0} \!\!  \exp(-\pi \zeta r^2) 
r^{2n+1} \big( 1\!-\! \exp \! \big(\! \!-\!s \varphi(r)\big)   \big) \mathrm{d} r \bigg)^{ \!\!\!-\frac{1}{\alpha}} . \hspace{-2mm}
\end{align}

\end{prop}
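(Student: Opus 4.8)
The plan is to reduce the Fredholm determinant to a scalar infinite product by diagonalizing the underlying operator in a basis adapted to the Ginibre kernel. First I would record two standard facts about Fredholm determinants of trace-class operators: (i) for a trace-class operator $T$ one has $\mathrm{Det}(\mathrm{Id}+\alpha T)=\prod_{n}(1+\alpha\lambda_n)$, the product being over the eigenvalues $\lambda_n$ counted with multiplicity; and (ii) the cyclic invariance $\mathrm{Det}(\mathrm{Id}+\alpha AB)=\mathrm{Det}(\mathrm{Id}+\alpha BA)$. Writing $w(\mathbf{x})=\sqrt{1-\exp(-s\varphi(\mathbf{x}))}$ and viewing $\mathbb{K}_\varphi(s)=w\,G_\Omega\,w$ as the multiplication operator $w$ composed with the integral operator $G_\Omega$, fact (ii) with $A=w$ and $B=G_\Omega w$ gives $\mathrm{Det}(\mathrm{Id}+\alpha\,w G_\Omega w)=\mathrm{Det}(\mathrm{Id}+\alpha\,G_\Omega w^2)$, where $w^2(\mathbf{x})=1-\exp(-s\varphi(\mathbf{x}))$. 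This removes the awkward square roots and leaves the operator $B f(\mathbf{x})=\int_{\mathbb{O}_\mathbf{x}} G_\Omega(\mathbf{x},\mathbf{y})\,w^2(\mathbf{y})\,f(\mathbf{y})\,\mathrm{d}\mathbf{y}$, whose spectrum I can compute explicitly.

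Next I would diagonalize $B$ using the Gaussian-weighted monomials $\phi_n(\mathbf{y})=\exp(-\tfrac{\pi\zeta}{2}|\mathbf{y}|^2)\,\mathbf{y}^n$, $n\geq 0$, which are (up to normalization) the eigenfunctions of the Ginibre kernel on the disk. The computation exploits the rotational invariance of both $G_\Omega$ and the radial weight $w^2$: expanding $e^{\pi\zeta\mathbf{x}\bar{\mathbf{y}}}=\sum_m \frac{(\pi\zeta)^m}{m!}\mathbf{x}^m\bar{\mathbf{y}}^m$ and passing to polar coordinates $\mathbf{y}=r e^{i\phi}$, the angular integral $\int_0^{2\pi} e^{i(n-m)\phi}\,\mathrm{d}\phi=2\pi\delta_{mn}$ annihilates every cross term, so that $B\phi_n=\mu_n\phi_n$ with $\mu_n=\frac{2(\pi\zeta)^{n+1}}{n!}\int_0^R \exp(-\pi\zeta r^2)\,r^{2n+1}\big(1-\exp(-s\varphi(r))\big)\,\mathrm{d}r$, which is precisely the factor appearing inside the product in \eqref{evaluation_det}. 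Setting $\varphi\equiv 0$ here collapses $\mu_n$ to the known Ginibre eigenvalues $\varrho_n=\Gamma(n+1,\pi\zeta R^2)/n!$, a convenient consistency check on the normalization.

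Combining fact (i) with these eigenvalues then yields $\mathrm{Det}(\mathrm{Id}+\alpha\mathbb{K}_\varphi(s))^{-1/\alpha}=\prod_{n\geq 0}(1+\alpha\mu_n)^{-1/\alpha}$, which is the asserted identity; the finite upper limit $N_{\mathrm{closed}}$ in \eqref{evaluation_det} corresponds to truncating this convergent product for numerical evaluation, the neglected tail being controlled by the rapid decay $\mu_n\to 0$ forced by $\int_0^R r^{2n+1}e^{-\pi\zeta r^2}\,\mathrm{d}r\to 0$ as $n\to\infty$.

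The step I expect to demand the most care is the spectral justification behind facts (i)--(ii). I must check that $G_\Omega$, and hence $B$, is trace-class on $L^2(\mathbb{O}_\mathbf{x})$ so that the product representation of the Fredholm determinant is legitimate; that $\{\phi_n\}$ forms a complete eigensystem, i.e.\ that the Gaussian-weighted monomials are complete on the disk; and that the cyclic identity applies even though $w$ may vanish and is therefore not invertible---this is valid for trace-class operators but must be invoked as such rather than derived via a similarity transform by $w$. Finally, the diagonalization hinges on the radial symmetry of $\varphi$ that is implicit in writing $\varphi(r)$, so I would flag explicitly that the evaluation is stated for the rotationally invariant case, which is exactly the setting produced by the distance-dependent path-loss functionals used throughout this paper.
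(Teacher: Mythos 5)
You are proving a statement the paper itself never proves: Proposition~2 is imported verbatim from the cited reference \cite[Lemma 3]{B.2017Kong}, so there is no in-paper argument to compare against, and a self-contained derivation is genuinely useful. Your route is the standard one (and, in substance, the one behind the cited lemma): use cyclic invariance of the Fredholm determinant to trade $w\,G_{\Omega}\,w$ for $G_{\Omega}w^2$ with $w^2 = 1-\exp(-s\varphi)$, expand the Ginibre kernel as $G_{\Omega}(\mathbf x,\mathbf y)=\sum_{m\geq 0}\frac{\zeta(\pi\zeta)^m}{m!}\phi_m(\mathbf x)\overline{\phi_m(\mathbf y)}$ with $\phi_m(\mathbf x)=e^{-\pi\zeta|\mathbf x|^2/2}\mathbf x^m$, and exploit the radial symmetry of $\varphi$ so that the angular integral $\int_0^{2\pi}e^{i(n-m)\phi}\,\mathrm d\phi=2\pi\delta_{mn}$ kills the cross terms. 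Your resulting eigenvalue $\mu_n=\frac{2(\pi\zeta)^{n+1}}{n!}\int_0^R e^{-\pi\zeta r^2}r^{2n+1}\big(1-\exp(-s\varphi(r))\big)\mathrm d r$ is exactly the bracketed factor in (\ref{evaluation_det}), and your reading of $N_{\mathrm{closed}}$ as a numerical truncation of the exact infinite product is the correct interpretation of the statement (the paper itself notes the exponential convergence as $N_{\mathrm{closed}}\to\infty$).

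Two details need repair, though neither breaks the main computation. First, your consistency check is backwards: setting $\varphi\equiv 0$ makes $1-\exp(-s\varphi)\equiv 0$, hence $\mu_n=0$ for every $n$ and the determinant equals $1$; the Ginibre eigenvalues $\Gamma(n+1,\pi\zeta R^2)/n!$ (lower incomplete gamma) are recovered in the opposite regime, where the weight $1-\exp(-s\varphi)$ is identically $1$. Second, the completeness property you propose to verify is in fact false, and fortunately unnecessary: the Gaussian-weighted \emph{holomorphic} monomials $\phi_n$ are not dense in $L^2$ of the disk (anti-holomorphic and mixed monomials are absent from their span). What the argument actually needs is that every eigenfunction of $B=G_{\Omega}w^2$ with nonzero eigenvalue lies in the closed span of $\{\phi_n\}$; this follows from the rank structure of the kernel expansion above, since the range of $B$ is contained in that span and $B$ acts diagonally on it, so the nonzero spectrum of $B$ is exhausted by $\{\mu_n\}$. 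With that substitution---and with Lidskii's product formula invoked for the symmetrized, self-adjoint, trace-class operator $w\,G_{\Omega}\,w$, whose nonzero spectrum agrees with that of $G_{\Omega}w^2$---your proof goes through as written.
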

The complexity in calculating (\ref{evaluation_det}) is $O(N_{\mathrm{closed}})$. As $N_{\mathrm{closed}}$ goes to infinity, the exponential convergence rate of (\ref{evaluation_det}) follows from the smoothness of the Ginibre kernel~\cite{DecreusefondFlintVergne}.


\subsection{Performance Metrics}
\label{sec:Performance_Metrics}
We measure the performance of the hybrid D2D communications in three important metrics, namely, energy outage probability, coverage probability, and throughput.

The hybrid transmitter experiences an energy outage when the energy obtained from the ambient transmitters is not enough to support its circuit operation. 
Let $\mathcal{O}_{\mathrm{B}}$ and $\mathcal{O}_{\mathrm{H}}$ denote the energy outage probability of the hybrid transmitter being in ambient backscattering mode and HTT mode, respectively. Mathematically, the overall energy outage probability is given as
\begin{align} \label{def:energyoutage}
\mathcal{O} &  =   \mathcal{B} \hspace{0.5mm} \mathcal{O}_{\mathrm{B}}   +   (1 - \mathcal{B} )\mathcal{O}_{\mathrm{H}} 
= \mathcal{B}  \hspace{0.5mm} \mathbb{P}[P^{\mathrm{B}}_{E} \leq \rho_{\mathrm{B}}] + (1-\mathcal{B} )  \mathbb{P}[P^{\mathrm{H}}_{E} \leq \rho_{\mathrm{H}}],  
\end{align}
where $\mathcal{B}$ denotes the probability that the hybrid transmitter selects ambient backscattering mode. 

The transmission of the hybrid transmitter is considered to be successful if the achieved SNR or SINR at the associated receiver exceeds its target threshold. We define coverage as an event of successful transmission.  
Let $\mathcal{C}_{\mathrm{B}}$ and $\mathcal{C}_{\mathrm{H}}$ denote the coverage probability of the hybrid transmitter being in ambient backscattering mode and HTT mode, respectively. Then, the overall coverage probability is given as 
\begin{align} \label{def:coverage_probability} 
\mathcal{C}  &  = \mathcal{B} \hspace{0.5mm} \mathcal{C}_{\mathrm{B}} + (1-\mathcal{B} )\mathcal{C}_{\mathrm{H}} 
= \mathcal{B} \hspace{0.5mm} \mathbb{P}[\nu_{\mathrm{B}}> \tau_{\mathrm{B}}, P^{\mathrm{B}}_{E} > \rho_{\mathrm{B}}] 
+ (1-\mathcal{B} )  \mathbb{P}[\nu_{\mathrm{H}}> \tau_{\mathrm{H}}, P^{\mathrm{H}}_{E} > \rho_{\mathrm{H}}]. 
\end{align} 

Moreover, the average throughput achieved by the hybrid transmitter is given as
\begin{eqnarray}\label{def:throughput}
	\mathcal{T} = \mathcal{B} \hspace{0.5mm} \mathcal{T}_{\mathrm{B}} 
	+ (1-\mathcal{B} )  \mathcal{T}_{\mathrm{H}}, 
\end{eqnarray}
where $\mathcal{T}_{\mathrm{B}}$ denotes the average throughput in ambient backscattering mode and $\mathcal{T}_{\mathrm{H}}$ has been defined in Subsection~\ref{sec:network_model}.


An upper bound on the achievable performance can be obtained by considering block fading channels with mode selection performed at the beginning of each fading block. As we focus on the impact of system parameters and comparison of the proposed mode selection protocols, we omit presenting the upper bound. The upper bound performance can be derived by following the same analytical approach presented in the paper.

 
\section{Analytical Results}
\label{sec:Analytical_Results}
 
In this section, we derive analytical expressions for the performance metrics introduced in Section~\ref{sec:Performance_Metrics} based on the repulsive point process framework introduced in Section~\ref{sec:geometricmodeling}. 
 
\subsection{Energy Outage Probability} 
We first derive the expressions of the energy outage probability based on the definition in (\ref{def:energyoutage}). 
 
\begin{theorem}
\label{thm:Outage_PTP}
Under PTP, the energy outage probability of a hybrid transmitter is calculated as
\begin{align}
\label{eq:energy_outage_probability_PTP}
\mathcal{O}_{\mathrm{PTP}} & = 
F_{P_{I}} \Big( \frac{\rho_{\mathrm{H}}}{\omega \beta }\Big) \left( F_{P_{I}} \Big( \frac{\rho_{\mathrm{B}}}{ \beta \varrho }\Big) - F_{P_{I}} \Big( \frac{\rho_{\mathrm{H}}}{\omega \beta }\Big) + 1 \right),
\end{align} 
where $F_{P_{I}}(\rho)$ is the CDF of $P_{I}$ given as 
\begin{equation} \label{CDF_PI}
F_{P_{I}}(\rho)=\mathcal{L}^{-1}\left\{ \frac{\mathrm{Det}\big(\mathrm{Id}+\alpha \mathbb{A}_{\Phi}(s )\big)^{-\frac{1}{\alpha} }}{s} \right\}(\rho),
\end{equation}
wherein $\mathcal{L}^{-1}$ means inverse Laplace transform and $\mathbb{A}_{\Phi}(s)$ is 
given by
\begin{align}
\label{eqn:kernal_A}
& \mathbb{A}_{\Phi}(s)=\sqrt{1-\left( 1+ \frac{s\theta P_{A}}{m\|\mathbf{x}-\mathbf{x}_{\mathrm{S}}\|^{\mu}} \right)^{\!-m}} 
G_{\Phi}(\mathbf{x},\mathbf{y}) \sqrt{1-\left( 1+ \frac{s\theta P_{A}}{m\|\mathbf{y}-\mathbf{x}_{\mathrm{S}}\|^{\mu}} \right)^{\!-m}}, 
\end{align} 
and $G_{\Phi}$ is the Ginibre kernel of $\Phi$ defined as
\begin{align}
\label{eq:ginibre_Phi}
\hspace{-2mm} G_{\Phi}(\mathbf x,\mathbf y)  = l_{A} \zeta_{A} \,e^{\pi l_{A} \zeta_{A} \mathbf x \bar{\mathbf y}} e^{-\frac{\pi l_{A} \zeta_{A} }{2}( |\mathbf x|^2 + |\mathbf y|^2)}, 
 \mathbf x,\mathbf y \in \mathcal{A}. \hspace{-2mm}
\end{align} 
\end{theorem}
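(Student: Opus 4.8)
The plan is to combine the generic energy-outage definition in (\ref{def:energyoutage}) with the explicit PTP mode-selection rule, reducing everything to the single CDF $F_{P_{I}}$, and then to obtain that CDF from the Laplace transform of $P_{I}$ via Proposition~\ref{lemma1}.

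First I would specialize the mode-selection probability $\mathcal{B}$. Under PTP the transmitter picks ambient backscattering exactly when $P^{\mathrm{H}}_{E}=\omega\beta P_{I}\le\rho_{\mathrm{H}}$, i.e.\ when $P_{I}\le\rho_{\mathrm{H}}/(\omega\beta)$, so $\mathcal{B}=F_{P_{I}}(\rho_{\mathrm{H}}/(\omega\beta))$. Next I would evaluate the two marginal outage probabilities in (\ref{def:energyoutage}): since $P^{\mathrm{B}}_{E}=\beta\varrho P_{I}$ and $P^{\mathrm{H}}_{E}=\omega\beta P_{I}$ are both monotone in $P_{I}$, we get $\mathbb{P}[P^{\mathrm{B}}_{E}\le\rho_{\mathrm{B}}]=F_{P_{I}}(\rho_{\mathrm{B}}/(\beta\varrho))$ and $\mathbb{P}[P^{\mathrm{H}}_{E}\le\rho_{\mathrm{H}}]=F_{P_{I}}(\rho_{\mathrm{H}}/(\omega\beta))$. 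Substituting these three quantities into (\ref{def:energyoutage}) and factoring out $F_{P_{I}}(\rho_{\mathrm{H}}/(\omega\beta))$ yields (\ref{eq:energy_outage_probability_PTP}); this step is purely algebraic.

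The substantive part is deriving $F_{P_{I}}$ in (\ref{CDF_PI}). I would start from $\mathcal{L}_{P_{I}}(s)=\mathbb{E}[\exp(-sP_{I})]$ with $P_{I}=P_{A}\sum_{a\in\mathcal{A}}h_{a,\mathrm{S}}\|\mathbf{x}_{a}-\mathbf{x}_{\mathrm{S}}\|^{-\mu}$. Because the fading gains $h_{a,\mathrm{S}}\sim\mathcal{G}(m,\theta/m)$ are i.i.d.\ and independent of the point process, I would average over them first using the gamma MGF $\mathbb{E}_{h}[e^{-t h}]=(1+t\theta/m)^{-m}$, so each point contributes the factor $(1+s\theta P_{A}/(m\|\mathbf{x}_{a}-\mathbf{x}_{\mathrm{S}}\|^{\mu}))^{-m}$. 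This is precisely a marked version of Proposition~\ref{lemma1} in which the deterministic response $e^{-s\varphi(\mathbf{x})}$ is replaced by its fading average; modeling the active subset $\mathcal{A}$ as an $\alpha$-GPP of density $l_{A}\zeta_{A}$ with Ginibre kernel $G_{\Phi}$ as in (\ref{eq:ginibre_Phi}), the Fredholm-determinant form of the Laplace functional is preserved with $1-e^{-s\varphi}$ replaced by $1-(1+s\theta P_{A}/(m\|\cdot\|^{\mu}))^{-m}$, giving $\mathcal{L}_{P_{I}}(s)=\mathrm{Det}(\mathrm{Id}+\alpha\mathbb{A}_{\Phi}(s))^{-1/\alpha}$ with $\mathbb{A}_{\Phi}$ as in (\ref{eqn:kernal_A}). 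Finally, by the integration property of the Laplace transform the transform of the CDF equals $\mathcal{L}_{P_{I}}(s)/s$, and inverting gives exactly (\ref{CDF_PI}).

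I expect the main obstacle to be rigorously justifying the marked extension of Proposition~\ref{lemma1}: the proposition is stated for a deterministic $\varphi$, so I must argue that conditioning on the point configuration and averaging over the independent i.i.d.\ marks leaves the factorized kernel structure intact, replacing $1-e^{-s\varphi(\mathbf{x})}$ by $1-\mathbb{E}_{h}[e^{-s\varphi_{h}(\mathbf{x})}]$ inside $\mathbb{K}_{\varphi}$. A secondary point requiring care is the thinning step, i.e.\ treating the active transmitters $\mathcal{A}$ as an $\alpha$-GPP with the reduced density $l_{A}\zeta_{A}$, which is what licenses the use of $G_{\Phi}$ in place of the full-intensity kernel.
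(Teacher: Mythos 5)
Your proposal is correct and follows essentially the same route as the paper's proof in Appendix~I: average over the i.i.d.\ gamma fading via the MGF so each point contributes $\left(1+\frac{s\theta P_{A}}{m\|\mathbf{x}_{a}-\mathbf{x}_{\mathrm{S}}\|^{\mu}}\right)^{-m}$, apply Proposition~\ref{lemma1} to obtain $\mathcal{L}_{P_{I}}(s)=\mathrm{Det}\big(\mathrm{Id}+\alpha\mathbb{A}_{\Phi}(s)\big)^{-1/\alpha}$, recover the CDF as $\mathcal{L}^{-1}\{\mathcal{L}_{P_{I}}(s)/s\}$, and then combine $\mathcal{B}_{\mathrm{PTP}}$, $\mathcal{O}_{\mathrm{B}}$, and $\mathcal{O}_{\mathrm{H}}$ algebraically. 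The ``marked extension'' you flag as the main obstacle is resolved in the paper exactly as you suggest: the fading-averaged product is rewritten as $\exp\big(\sum_{a}\ln M_{h}(\cdot)\big)$, so Proposition~\ref{lemma1} is applied to a deterministic function (effectively $\varphi=-\tfrac{1}{s}\ln M_{h}$), with the thinned density $l_{A}\zeta_{A}$ entering through the kernel $G_{\Phi}$.
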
 
 
For readability, we present the proof of {\bf Theorem} \ref{thm:Outage_PTP} in Appendix I.

Consequently, we extend the above outcome in {\bf Theorem}~\ref{thm:Outage_PTP} to the case of STP by altering the mode selection probability based on the STP criteria, resulting in the following Theorem.

\begin{theorem}
\label{thm:PowerOutage_STP}
Under STP, the energy outage probability of a hybrid transmitter is
\begin{align}
\label{eq:energy_outage_probability_STP}
\mathcal{O}_{\mathrm{STP}}= & \int^{\infty}_{\frac{\rho_{\mathrm{B}}}{\beta \varrho}} \exp   \left( - \frac{\lambda \tau_{\mathrm{B}}d^{\mu} \sigma^2 }{\delta \rho \left ( 1- \varrho \right ) } \right) f_{P_{I}}(\rho) \mathrm{d} \rho 
\left( F_{P_{I}} \Big( \frac{\rho_{\mathrm{B}}}{\beta \varrho}\Big)  - F_{P_{I}} \Big( \frac{\rho_{\mathrm{H}}}{\omega \beta }\Big) \right) + F_{P_{I}} \Big( \frac{\rho_{\mathrm{H}}}{\omega \beta }\Big),
\end{align} 
where $F_{P_{I}}(\rho)$ has been given in (\ref{CDF_PI}), and $f_{P_{I}}(\rho)$ is the PDF of $P_{I}$ calculated as 
\begin{equation}
f_{P_{I}}(\rho) =\mathcal{L}^{-1} \left\{ \mathrm{Det}\big(\mathrm{Id}+\alpha \mathbb{A}_{\Phi}(s)\big)^{-\frac{1}{\alpha} }\right\}(\rho), \label{eq:PDF}
\end{equation} 
wherein $\mathbb{A}_{\Phi}(s)$ has been defined in (\ref{eqn:kernal_A}).
\end{theorem}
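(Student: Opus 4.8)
The plan is to reuse the general energy-outage decomposition in \eqref{def:energyoutage}, namely $\mathcal{O}_{\mathrm{STP}} = \mathcal{B}\,\mathcal{O}_{\mathrm{B}} + (1-\mathcal{B})\mathcal{O}_{\mathrm{H}}$, and to recompute only the mode-selection probability $\mathcal{B}$ according to the STP rule, exactly as the sentence preceding the statement suggests. The two conditional energy-outage factors are identical to those already obtained in the proof of \textbf{Theorem}~\ref{thm:Outage_PTP}: since $P^{\mathrm{B}}_{E} = \beta\varrho P_{I}$ and $P^{\mathrm{H}}_{E} = \omega\beta P_{I}$ are increasing in $P_{I}$, the events $\{P^{\mathrm{B}}_{E}\le\rho_{\mathrm{B}}\}$ and $\{P^{\mathrm{H}}_{E}\le\rho_{\mathrm{H}}\}$ reduce to $\{P_{I}\le \rho_{\mathrm{B}}/(\beta\varrho)\}$ and $\{P_{I}\le\rho_{\mathrm{H}}/(\omega\beta)\}$, whence $\mathcal{O}_{\mathrm{B}} = F_{P_{I}}(\rho_{\mathrm{B}}/(\beta\varrho))$ and $\mathcal{O}_{\mathrm{H}} = F_{P_{I}}(\rho_{\mathrm{H}}/(\omega\beta))$, with $F_{P_{I}}$ taken from \eqref{CDF_PI}. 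First I would record these two facts so that the only genuinely new computation is $\mathcal{B}$.

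The heart of the argument is evaluating $\mathcal{B} = \mathbb{P}[\nu_{\mathrm{B}} > \tau_{\mathrm{B}}]$, the probability that STP keeps the transmitter in backscattering mode. The key observation is that the SNR $\nu_{\mathrm{B}}$ in \eqref{eqn:SNR_B} is nonzero only when $P^{\mathrm{B}}_{E} > \rho_{\mathrm{B}}$ (otherwise $P_{\mathrm{S},\mathrm{D}} = 0$ and hence $\nu_{\mathrm{B}} = 0 < \tau_{\mathrm{B}}$), so that $\{\nu_{\mathrm{B}} > \tau_{\mathrm{B}}\} = \{\nu_{\mathrm{B}} > \tau_{\mathrm{B}},\,P_{I} > \rho_{\mathrm{B}}/(\beta\varrho)\}$. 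I would then condition on $P_{I} = \rho$ and exploit the independence of the incident power $P_{I}$ (a functional of the $\alpha$-GPP $\Phi$ together with its Nakagami marks $h_{a,\mathrm{S}}$) from the Rayleigh-distributed direct-link gain $h_{\mathrm{S},\mathrm{D}} \sim \mathcal{E}(\lambda)$. Rearranging $\nu_{\mathrm{B}} > \tau_{\mathrm{B}}$ into a lower bound on $h_{\mathrm{S},\mathrm{D}}$ and using the exponential tail $\mathbb{P}[h_{\mathrm{S},\mathrm{D}} > t] = e^{-\lambda t}$ yields $\mathbb{P}[\nu_{\mathrm{B}} > \tau_{\mathrm{B}}\mid P_{I} = \rho] = \exp\!\big(-\lambda\tau_{\mathrm{B}}d^{\mu}\sigma^2/(\delta\rho(1-\varrho))\big)$; integrating this against $f_{P_{I}}(\rho)$ from \eqref{eq:PDF} over $\rho \in (\rho_{\mathrm{B}}/(\beta\varrho),\infty)$ gives the integral factor appearing in \eqref{eq:energy_outage_probability_STP}.

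Finally, I would substitute $\mathcal{B}$, $\mathcal{O}_{\mathrm{B}}$ and $\mathcal{O}_{\mathrm{H}}$ into the decomposition and collect terms, writing $\mathcal{B}\,\mathcal{O}_{\mathrm{B}} + (1-\mathcal{B})\mathcal{O}_{\mathrm{H}} = \mathcal{B}\big(F_{P_{I}}(\rho_{\mathrm{B}}/(\beta\varrho)) - F_{P_{I}}(\rho_{\mathrm{H}}/(\omega\beta))\big) + F_{P_{I}}(\rho_{\mathrm{H}}/(\omega\beta))$, which reproduces \eqref{eq:energy_outage_probability_STP}. I expect the main obstacle to be conceptual rather than computational: one must correctly recognize that the STP selection event already forces $P^{\mathrm{B}}_{E} > \rho_{\mathrm{B}}$ (fixing the lower limit of integration at $\rho_{\mathrm{B}}/(\beta\varrho)$ rather than $0$), and must justify that conditioning on $P_{I}$ cleanly separates the $\alpha$-GPP-driven incident power from the Rayleigh fading of the direct link. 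All the analytically heavy content, namely the Fredholm-determinant forms of $F_{P_{I}}$ and $f_{P_{I}}$, is inherited from \textbf{Proposition}~\ref{lemma1} and \textbf{Theorem}~\ref{thm:Outage_PTP}, so no fresh point-process computation is needed.
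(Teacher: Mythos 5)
Your proposal is correct and takes essentially the same route as the paper's proof in Appendix II: the paper likewise computes $\mathcal{B}_{\mathrm{STP}} = \mathbb{P}[\nu_{\mathrm{B}} > \tau_{\mathrm{B}},\, P^{\mathrm{B}}_{E} > \rho_{\mathrm{B}}]$ by conditioning on $P_{I}$ (phrased via Bayes' theorem and an indicator function) and the exponential tail of $h_{\mathrm{S},\mathrm{D}}$, then substitutes this together with $\mathcal{O}_{\mathrm{B}} = F_{P_{I}}(\rho_{\mathrm{B}}/(\beta\varrho))$ and $\mathcal{O}_{\mathrm{H}} = F_{P_{I}}(\rho_{\mathrm{H}}/(\omega\beta))$ from Theorem~\ref{thm:Outage_PTP} into the decomposition \eqref{def:energyoutage}. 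Your explicit observation that $\{\nu_{\mathrm{B}} > \tau_{\mathrm{B}}\}$ already forces $P^{\mathrm{B}}_{E} > \rho_{\mathrm{B}}$ (fixing the lower integration limit) is the same fact the paper encodes implicitly by defining $\mathcal{B}_{\mathrm{STP}}$ directly as a joint probability.
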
 
 
The proof of Theorem~\ref{thm:PowerOutage_STP} is shown in Appendix II.

 
Note that both $\mathcal{O}_{\mathrm{PTP}}$ and $\mathcal{O}_{\mathrm{STP}}$ are functions of $\zeta_A$, not $\zeta_B$. Thus, given $\zeta_A$ and the transmission load $l_A$, the interference ratio $\xi$ does not affect the energy outage probability. 
We also note that similar to the stochastic geometry analysis based on PPP in the existing literature, e.g., \cite{G.2011Andrews}, it is difficult to see the relationship between the performance metric and system parameters directly from the general-case results in Theorems~\ref{thm:Outage_PTP} and \ref{thm:PowerOutage_STP} derived based on the $\alpha$-GPP framework. However, these general-case results can be simplified in some special cases. 
We then investigate a special setting which considerably simplifies the above results. 
 
\begin{corollary} \label{corollary2} 
When the distribution of ambient transmitters in $\Phi$ follows a PPP, the RF signals from these transmitters experience Rayleigh fading (i.e., $h_{a,\mathrm{S}} \sim \mathcal{E}(1)$), and the path loss exponent is equal to 4, 
the energy outage probability of a hybrid transmitter can be evaluated by (\ref{eq:energy_outage_probability_PTP}) under PTP and (\ref{eq:energy_outage_probability_STP}) under STP, with $f_{P_{I}}(\rho)$ and $F_{P_{I}}(\rho)$ expressed, respectively, as
\begin{align}
f_{P_{I}}(\rho) = \frac{ 1 }{4}\! \left(\frac{\pi}{\rho}\right)^{\!\!\frac{3}{2}} \!\zeta_{A}\! \sqrt{\!P_{A}} \exp \left(\!- \frac{\pi^4 \zeta_{A}^2 P_{A}}{16 \rho} \!\right),
\end{align}
and
\begin{align}\label{corollary:PDF_no_replusion_Rayleigh}
F_{P_{I}}(\rho) = \mathrm{erfc} \left(\!\frac{ \zeta_{A}\sqrt{P_{A}} \pi^2}{4 \sqrt{\rho}} \!\right). 
\end{align}

\end{corollary}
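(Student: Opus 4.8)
The plan is to specialize the Laplace transform of $P_I$ underlying Theorems~\ref{thm:Outage_PTP} and~\ref{thm:PowerOutage_STP} to the three stated hypotheses and then invert it in closed form. By \eqref{CDF_PI} and \eqref{eq:PDF}, it suffices to simplify $\mathcal{L}_{P_I}(s)=\mathrm{Det}\big(\mathrm{Id}+\alpha\mathbb{A}_\Phi(s)\big)^{-1/\alpha}$ and take its inverse transform. First I would impose Rayleigh fading by setting $m=1$ and $\theta=1$ in \eqref{eqn:kernal_A}, so that the weight factor collapses to $1-\big(1+sP_A\|\mathbf{x}-\mathbf{x}_\mathrm{S}\|^{-\mu}\big)^{-1}=sP_A/(\|\mathbf{x}-\mathbf{x}_\mathrm{S}\|^{\mu}+sP_A)$. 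Next I would pass to the PPP regime $\alpha\to 0$, invoking the standard limit $\mathrm{Det}\big(\mathrm{Id}+\alpha\mathbb{K}\big)^{-1/\alpha}\to\exp(-\mathrm{Tr}\,\mathbb{K})$ (equivalently, the Fredholm form degenerates to the PPP shot-noise Laplace functional), which replaces the determinant by the exponential of the diagonal integral of the kernel.

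With $G_\Phi(\mathbf{x},\mathbf{x})=l_A\zeta_A$ from \eqref{eq:ginibre_Phi}, placing $\mathbf{x}_\mathrm{S}$ at the origin and using polar coordinates gives
\[
\mathrm{Tr}\,\mathbb{A}_\Phi(s)=2\pi l_A\zeta_A\int_0^{R}\frac{sP_A\,r}{r^{\mu}+sP_A}\,\mathrm{d}r .
\]
I would then set $\mu=4$ and let $R\to\infty$; the substitution $u=r^2$ turns the radial integral into $\frac{sP_A}{2}\int_0^\infty(u^2+sP_A)^{-1}\,\mathrm{d}u=\frac{\pi}{4}\sqrt{sP_A}$, whence $\mathrm{Tr}\,\mathbb{A}_\Phi(s)=\tfrac{\pi^2}{2}l_A\zeta_A\sqrt{sP_A}$ and $\mathcal{L}_{P_I}(s)=\exp\big(-\tfrac{\pi^2}{2}l_A\zeta_A\sqrt{P_A}\,\sqrt{s}\big)$. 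This is precisely the $e^{-c\sqrt{s}}$ signature of a one-sided $\tfrac12$-stable (Lévy) law, which is what makes a closed form available.

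Finally I would invert using the classical pairs $\mathcal{L}^{-1}\{e^{-c\sqrt{s}}\}(\rho)=\tfrac{c}{2\sqrt{\pi}}\rho^{-3/2}e^{-c^2/(4\rho)}$ and $\mathcal{L}^{-1}\{s^{-1}e^{-c\sqrt{s}}\}(\rho)=\mathrm{erfc}\big(c/(2\sqrt{\rho})\big)$ with $c=\tfrac{\pi^2}{2}l_A\zeta_A\sqrt{P_A}$; substituting $c^2/4=\tfrac{\pi^4}{16}l_A^2\zeta_A^2P_A$ and $c/(2\sqrt{\pi})=\tfrac{\pi^{3/2}}{4}l_A\zeta_A\sqrt{P_A}$ reproduces the stated $f_{P_I}(\rho)$ and $F_{P_I}(\rho)$ (taking $l_A=1$, i.e.\ writing $\zeta_A$ for the active density $l_A\zeta_A$). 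Feeding these into \eqref{eq:energy_outage_probability_PTP} and \eqref{eq:energy_outage_probability_STP} finishes the proof. I expect the main obstacle to be less the algebra than the two limiting steps together with the observation that the elementary closed form is special to $\mu=4$: for general $\mu$ the radial integral produces a fractional power $s^{2/\mu}$, whose inverse transform is not elementary, so the Lévy reduction (and hence the $\mathrm{erfc}$ CDF) genuinely relies on both $\mu=4$ and $m=1$.
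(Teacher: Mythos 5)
Your proposal is correct and follows the paper's own route up to and including the key reduction: the same specialization $m=1$, $\theta=1$ in (\ref{eqn:kernal_A}), the same PPP limit $\mathrm{Det}\bigl(\mathrm{Id}+\alpha\mathbb{A}_{\Phi}(s)\bigr)^{-1/\alpha}\to\exp\bigl(-\int_{\mathbb{O}_{\mathrm{S}}}\mathbb{A}_{\Phi}(\mathbf{x},\mathbf{x})\,\mathrm{d}\mathbf{x}\bigr)$, and the same radial integral at $\mu=4$, $R\to\infty$ yielding $\mathcal{L}_{P_I}(s)=\exp\bigl(-\tfrac{\pi^2}{2}\zeta_A\sqrt{P_A}\sqrt{s}\bigr)$. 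You diverge only at the final inversion: the paper carries it out from first principles via the Mellin/Bromwich contour, Euler's formula, a change of variables, and integration by parts (its steps (ii)--(v)), whereas you recognize $e^{-c\sqrt{s}}$ as the signature of the one-sided $\tfrac12$-stable (L\'evy) law and quote the classical pairs $\mathcal{L}^{-1}\{e^{-c\sqrt{s}}\}(\rho)=\tfrac{c}{2\sqrt{\pi}}\rho^{-3/2}e^{-c^2/(4\rho)}$ and $\mathcal{L}^{-1}\{s^{-1}e^{-c\sqrt{s}}\}(\rho)=\mathrm{erfc}\bigl(c/(2\sqrt{\rho})\bigr)$. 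Your route is shorter, avoids the contour bookkeeping (where the paper's displayed derivation in fact carries a stray $\tfrac{1}{s}$ factor in one intermediate line), and delivers the CDF (\ref{corollary:PDF_no_replusion_Rayleigh}) directly from the $s^{-1}e^{-c\sqrt{s}}$ pair rather than by integrating the PDF ``after some mathematical manipulations'' as the paper does; the paper's explicit computation, on the other hand, is self-contained and would extend to transforms not found in standard tables. Two further points in your favor: you keep the thinning factor $l_A$ explicit (the active density is $l_A\zeta_A$ per (\ref{eq:ginibre_Phi}), which the corollary's statement absorbs into $\zeta_A$), and your closing observation that the elementary closed form hinges jointly on $\mu=4$ and $m=1$ (general $\mu$ produces $s^{2/\mu}$, whose inverse is not elementary) correctly identifies why the corollary is confined to this special case.
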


The proof of {\bf Corollary} \ref{corollary2} is given in Appendix III. 
 
\subsection{Coverage Probability} 

Next, we consider the coverage probability between a hybrid D2D transmitter-receiver pair. 
We have the coverage probability of PTP described as follows.
\begin{theorem}
\label{thm:CoverageOutage_PTP}
The coverage probability of the hybrid D2D communications under PTP is
\begin{align}
\label{eq:coverage_probability_Rayleigh_PTP}
\mathcal{C}_{\mathrm{PTP}} &  = \left(1-  F_{P_I}\Big(\frac{\rho_{\mathrm{H}}}{\omega \beta}\Big) 
\right) \! \int^{\infty}_{\frac{\rho_{\mathrm{H}}}{\beta \omega}}  \exp \left( - \frac{\lambda \tau_{\mathrm{H}}d^{\mu} \sigma^{2} (1-\omega)}{\omega \beta \rho - \rho_{\mathrm{H}}} \! \right) 
\mathrm{Det}\big( \mathrm{Id} + \alpha \mathbb{B}_{\Psi}(\rho) \big)^{-\frac{1}{\alpha}} f_{P_{I}}(\rho) \mathrm{d}\rho   + F_{P_I}\Big(\frac{\rho_{\mathrm{H}}}{\omega \beta}\Big) \nonumber \\ &  \hspace{90mm} \times
\int^{\infty}_{\frac{\rho_{\mathrm{B}}}{\beta \varrho}} \exp \left( - \frac{\lambda \tau_{\mathrm{B}}d^{\mu} \sigma^2 }{\delta \rho \left (1- \varrho \right ) } \right) f_{P_{I}}(\rho) \mathrm{d}\rho, 
\end{align} 
where $F_{P_{I}}(\rho)$ and $f_{P_{I}}(\rho)$ have been obtained in (\ref{CDF_PI}) and (\ref{eq:PDF}), respectively, 
and $\mathbb{B}_{\Psi}(\rho)$ is
\begin{align} \label{eqn:kernel_B}
& 
\mathbb{B}_{\Psi} (\rho)   =   \!  \sqrt{   1 \!  - \!  \left(\!   1\!  +\!  \frac{\theta \lambda \tau_{\mathrm{H}}d^{\mu} (1\!-\!\omega)  P_{B} }{m(\omega \beta \rho\! - \! \rho_{\mathrm{H}})\|\mathbf{x}\!-\!\mathbf{x}_{\mathrm{D} }\|^{\mu} } \!  \right)^{\!\!\! -m} } 
G_{\Psi} (\mathbf{x},\mathbf{y}) \sqrt{ 1\!-\! \left( \! 1\! +\! \frac{\theta \lambda \tau_{\mathrm{H}}d^{\mu} (1\!-\!\omega)   P_{B} }{m(\omega \beta \rho\! - \!\rho_{\mathrm{H}})\|\mathbf{y}\!-\!\mathbf{x}_{\mathrm{D}}\|^{\mu} }\!\! \right)^{\!\!\!-m} },
\end{align}
wherein $G_{\Psi}$ is the Ginibre kernel of $\Psi$ defined as
\begin{align}
\hspace{-1.5mm} G_{\Psi}(\mathbf x,\mathbf y) \! = \! l_{B} \zeta_{B} \,e^{\pi l_{B} \zeta_{B} \mathbf x \bar{\mathbf y}} e^{-\frac{\pi l_{B} \zeta_{B}}{2}( |\mathbf x|^2 + |\mathbf y|^2)}, 
 \mathbf x,\mathbf y \in \mathcal{B}.	\hspace{-1mm} 
 \label{eq:ginibre_Psi}
\end{align}

\end{theorem}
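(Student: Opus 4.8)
The plan is to evaluate the two conditional coverage probabilities in the decomposition (\ref{def:coverage_probability}) separately and then weight them by the PTP mode-selection probability. Under PTP the transmitter adopts ambient backscattering exactly when $P^{\mathrm{H}}_{E}=\omega\beta P_{I}\le\rho_{\mathrm{H}}$, so the probability of selecting backscattering mode is $\mathcal{B}=\mathbb{P}[P_{I}\le\rho_{\mathrm{H}}/(\omega\beta)]=F_{P_{I}}(\rho_{\mathrm{H}}/(\omega\beta))$ and $1-\mathcal{B}=1-F_{P_{I}}(\rho_{\mathrm{H}}/(\omega\beta))$, which supply the two prefactors in (\ref{eq:coverage_probability_Rayleigh_PTP}). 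Since $\Phi$ and $\Psi$ are independent, conditioning on $P_{I}=\rho$ leaves the interference from $\Psi$ statistically untouched, so I would condition on $P_{I}=\rho$ throughout and integrate against $f_{P_{I}}(\rho)$ only at the end.

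For the backscattering term I first note that the circuit constraint $P^{\mathrm{B}}_{E}=\beta\varrho P_{I}>\rho_{\mathrm{B}}$ restricts $\rho$ to the range $\rho>\rho_{\mathrm{B}}/(\beta\varrho)$. Conditioned on $P_{I}=\rho$, the event $\nu_{\mathrm{B}}>\tau_{\mathrm{B}}$ from (\ref{eqn:SNR_B}) rearranges into a threshold on the Rayleigh gain, $h_{\mathrm{S},\mathrm{D}}>\tau_{\mathrm{B}}d^{\mu}\sigma^{2}/(\delta\rho(1-\varrho))$; because $h_{\mathrm{S},\mathrm{D}}\sim\mathcal{E}(\lambda)$ its probability is $\exp(-\lambda\tau_{\mathrm{B}}d^{\mu}\sigma^{2}/(\delta\rho(1-\varrho)))$. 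Integrating against $f_{P_{I}}$ over $\rho>\rho_{\mathrm{B}}/(\beta\varrho)$ yields exactly the second integral in (\ref{eq:coverage_probability_Rayleigh_PTP}).

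For the HTT term, conditioning on $P_{I}=\rho>\rho_{\mathrm{H}}/(\omega\beta)$ fixes the transmit power $P_{\mathrm{S}}=(\omega\beta\rho-\rho_{\mathrm{H}})/(1-\omega)$. Writing the aggregate interference as $I=\sum_{b\in\mathcal{B}}P_{B}\widetilde{h}_{b,\mathrm{D}}\|\mathbf{x}_{b}-\mathbf{x}_{\mathrm{D}}\|^{-\mu}$, the event $\nu_{\mathrm{H}}>\tau_{\mathrm{H}}$ rearranges to $\widetilde{h}_{\mathrm{S},\mathrm{D}}>\tau_{\mathrm{H}}d^{\mu}(I+\sigma^{2})/P_{\mathrm{S}}$, whose probability over $\widetilde{h}_{\mathrm{S},\mathrm{D}}\sim\mathcal{E}(\lambda)$ factorizes into a deterministic noise factor $\exp(-\lambda\tau_{\mathrm{H}}d^{\mu}\sigma^{2}(1-\omega)/(\omega\beta\rho-\rho_{\mathrm{H}}))$ and an interference-dependent factor $\exp(-sI)$ with $s=\lambda\tau_{\mathrm{H}}d^{\mu}(1-\omega)/(\omega\beta\rho-\rho_{\mathrm{H}})$. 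Taking the expectation over $I$ then reduces to evaluating the Laplace transform $\mathcal{L}_{I}(s)=\mathbb{E}[\exp(-sI)]$ of a functional of the $\alpha$-GPP $\Psi$.

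The hard part will be evaluating $\mathcal{L}_{I}(s)$, which is where the machinery of Section~\ref{sec:geometricmodeling} enters. Each summand carries the i.i.d.\ Nakagami-$m$ mark $\widetilde{h}_{b,\mathrm{D}}\sim\mathcal{G}(m,\theta/m)$, so Proposition~\ref{lemma1}, stated for a deterministic real-valued $\varphi$, cannot be applied verbatim. I would first average over the marks: by the Gamma moment generating function, $\mathbb{E}_{\widetilde{h}}[\exp(-sP_{B}\widetilde{h}\,r^{-\mu})]=(1+s\theta P_{B}/(mr^{\mu}))^{-m}$, so the mark-averaged form of Proposition~\ref{lemma1} replaces the factor $1-\exp(-s\varphi(\mathbf{x}))$ inside the kernel (\ref{eqn:kernel_K}) by $1-(1+s\theta P_{B}/(m\|\mathbf{x}-\mathbf{x}_{\mathrm{D}}\|^{\mu}))^{-m}$. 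Substituting $s=\lambda\tau_{\mathrm{H}}d^{\mu}(1-\omega)/(\omega\beta\rho-\rho_{\mathrm{H}})$ produces precisely the kernel $\mathbb{B}_{\Psi}(\rho)$ of (\ref{eqn:kernel_B}) built on the Ginibre kernel $G_{\Psi}$ of (\ref{eq:ginibre_Psi}), giving $\mathcal{L}_{I}(s)=\mathrm{Det}(\mathrm{Id}+\alpha\mathbb{B}_{\Psi}(\rho))^{-1/\alpha}$. The only delicate point is justifying the interchange of the mark-expectation with the point-process functional, which is legitimate because the marks are i.i.d.\ and independent of the point locations. Once this is established, assembling the noise factor, the determinant, and $f_{P_{I}}(\rho)$ over $\rho>\rho_{\mathrm{H}}/(\omega\beta)$ gives the first integral, and combining the two weighted terms completes the proof.
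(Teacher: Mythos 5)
Your proposal is correct and follows essentially the same route as the paper's Appendix IV: the same PTP mode-probability weighting with $\mathcal{B}_{\mathrm{PTP}}=F_{P_I}(\rho_{\mathrm{H}}/(\omega\beta))$, the same conditioning on $P_I=\rho$ with the Rayleigh tail for the direct link, and the same reduction of the interference expectation to a Fredholm determinant via the mark-averaged Laplace functional of the $\alpha$-GPP (the paper applies Proposition~1 after absorbing the Nakagami-$m$ MGF into the kernel, exactly as you describe). Your explicit justification of the mark-averaging step is a welcome clarification of a point the paper leaves implicit, but it is not a different argument.
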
 
 
The proof of {\bf Theorem} \ref{thm:CoverageOutage_PTP} is shown in Appendix IV.

Moreover, we derive the coverage probability for STP in the following Theorem. 
\begin{theorem}
\label{thm:CoverageOutage_STP}
The coverage probability of the hybrid D2D communications under STP is
\begin{align}
\label{eq:coverage_probability_STP} & \hspace{-2mm} \mathcal{C}_{\mathrm{STP}}  = \! \int^{\infty}_{\frac{\rho_{\mathrm{H}}}{\beta \omega} } \exp \left(\! - \frac{\lambda \tau_{\mathrm{H}}d^{\mu} (1\!-\!\omega)\sigma^{2} }{\omega \beta \rho - \rho_{\mathrm{H}} }  \right) \mathrm{Det}\big( \mathrm{Id} + \alpha \mathbb{B}_{\Psi} (\rho) \big)^{\!-\frac{1}{\alpha}}  
f_{P_{I}}(\rho) \mathrm{d} \rho \times \int^{\frac{\rho_{\mathrm{B}}}{\beta\varrho}}_{0} \exp \left( - \frac{\lambda \tau_{\mathrm{B
}}d^{\mu} \sigma^2 }{\delta \rho(1 - \varrho)} \right) f_{P_{I}}(\rho) \mathrm{d}\rho   \nonumber \\
&  \hspace{95mm}  
+ \left [ \int^{\infty}_{\frac{\rho_{\mathrm{B}}}{\beta \varrho}} \exp \left( - \frac{\lambda \tau_{\mathrm{B}}d^{\mu} \sigma^2 }{\delta \rho(1 - \varrho)} \right) f_{P_{I}}(\rho) \mathrm{d}\rho \right]^2, 
\end{align}
where 
$f_{P_{I}}(\rho)$ has been obtained in (\ref{eq:PDF}), and $\mathbb{B}_{\Psi} (\rho)$ is defined in (\ref{eqn:kernel_B}).

\end{theorem}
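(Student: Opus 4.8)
The plan is to follow the same route as the proof of Theorem~\ref{thm:CoverageOutage_PTP}, only changing the mode-selection criterion to the STP rule. Starting from the decomposition in~\eqref{def:coverage_probability}, I would split the coverage event into a backscattering branch and an HTT branch according to whether the probed SNR satisfies $\nu_{\mathrm B}>\tau_{\mathrm B}$ or not. Using the independent-realisation convention already adopted for PTP (legitimate here because, by Remark~1, mode selection is performed once at the start and is then independent of the realisation governing the actual transmission), each branch factorises into a mode-selection probability times a conditional transmission-success probability, so $\mathcal{C}_{\mathrm{STP}}$ becomes a sum of two products. Throughout, the unifying step is to condition on the incident power $P_I=\rho$ and integrate against its density $f_{P_I}$ from~\eqref{eq:PDF}.

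For the backscattering branch I would first observe that, since $P_{\mathrm S,\mathrm D}=0$ whenever $P^{\mathrm B}_E\le\rho_{\mathrm B}$, the selection event $\{\nu_{\mathrm B}>\tau_{\mathrm B}\}$ already forces $P^{\mathrm B}_E>\rho_{\mathrm B}$, i.e. $\rho>\rho_{\mathrm B}/(\beta\varrho)$. Conditioning on $P_I=\rho$ and using $h_{\mathrm S,\mathrm D}\sim\mathcal E(\lambda)$, expression~\eqref{eqn:SNR_B} gives $\mathbb P[\nu_{\mathrm B}>\tau_{\mathrm B}\mid P_I=\rho]=\exp\!\big(-\lambda\tau_{\mathrm B}d^{\mu}\sigma^2/(\delta\rho(1-\varrho))\big)$. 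Because the same exponential governs both the selection probability and the backscatter transmission-success probability, and each is obtained by integrating this quantity against $f_{P_I}$, the backscattering contribution is the square of an integral; splitting the $\rho$-axis at the energy threshold $\rho_{\mathrm B}/(\beta\varrho)$ is what produces the two integrals $\int_{\rho_{\mathrm B}/(\beta\varrho)}^{\infty}$ and $\int_{0}^{\rho_{\mathrm B}/(\beta\varrho)}$ appearing in~\eqref{eq:coverage_probability_STP}.

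For the HTT branch I would condition on $P_I=\rho>\rho_{\mathrm H}/(\omega\beta)$, so that the transmit power $P_{\mathrm S}=(\omega\beta\rho-\rho_{\mathrm H})/(1-\omega)$ is positive, and write the success probability as $\mathbb P[\nu_{\mathrm H}>\tau_{\mathrm H}\mid P_I=\rho]$. With $\widetilde h_{\mathrm S,\mathrm D}\sim\mathcal E(\lambda)$ this equals $\mathbb E\big[\exp(-\lambda\tau_{\mathrm H}d^{\mu}(\mathcal I+\sigma^2)/P_{\mathrm S})\big]$, where $\mathcal I=\sum_{b\in\mathcal B}P_B\widetilde h_{b,\mathrm D}\|\mathbf x_b-\mathbf x_{\mathrm D}\|^{-\mu}$ is the aggregate interference from $\Psi$. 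The deterministic part yields the factor $\exp(-\lambda\tau_{\mathrm H}d^{\mu}(1-\omega)\sigma^2/(\omega\beta\rho-\rho_{\mathrm H}))$, while the interference expectation is evaluated by first averaging each per-interferer term over the Nakagami-$m$ (Gamma) gain $\widetilde h_{b,\mathrm D}$ --- which replaces every $\exp(-s\,\cdot)$ by a $(1+\cdots)^{-m}$ moment generating function factor --- and then applying the $\alpha$-GPP Laplace transform of Proposition~\ref{lemma1} to $\Psi$; with $s=\lambda\tau_{\mathrm H}d^{\mu}/P_{\mathrm S}$ this reproduces exactly $\mathrm{Det}(\mathrm{Id}+\alpha\mathbb{B}_{\Psi}(\rho))^{-1/\alpha}$ with the kernel~\eqref{eqn:kernel_B} built on the Ginibre kernel~\eqref{eq:ginibre_Psi}. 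Integrating against $f_{P_I}$ over $\rho>\rho_{\mathrm H}/(\omega\beta)$ completes the HTT factor, and collecting the two branches gives~\eqref{eq:coverage_probability_STP}.

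I expect the main obstacle to be the careful bookkeeping of the coupling between the energy-availability events and the SNR events that the STP rule induces: the backscatter energy threshold $\rho_{\mathrm B}/(\beta\varrho)$ and the HTT energy threshold $\rho_{\mathrm H}/(\omega\beta)$ both act on the single shared variable $P_I$, so one must track the integration limits consistently across the two branches and invoke the independence of the selection realisation from the transmission realisation to guarantee that each branch genuinely factorises into a product. Once these two points are handled, the remaining work --- the exponential tail probabilities from Rayleigh fading and the Fredholm-determinant reduction of the interference Laplace transform --- is routine.
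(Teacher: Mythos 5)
Your overall route is the same as the paper's: the paper proves this theorem by taking the decomposition $\mathcal{C}=\mathcal{B}\,\mathcal{C}_{\mathrm{B}}+(1-\mathcal{B})\,\mathcal{C}_{\mathrm{H}}$ in (\ref{eqn:overall_coverage_probability}) and substituting $\mathcal{B}_{\mathrm{STP}}$ from (\ref{eqn:delta_B_STP}) for $\mathcal{B}_{\mathrm{PTP}}$, and the ingredients you recompute --- the Rayleigh tail for the backscatter SNR, the observation that $\{\nu_{\mathrm{B}}>\tau_{\mathrm{B}}\}$ forces $P^{\mathrm{B}}_{E}>\rho_{\mathrm{B}}$, the Gamma MGF for each interferer, and Proposition~\ref{lemma1} turning the interference average into $\mathrm{Det}\big(\mathrm{Id}+\alpha\mathbb{B}_{\Psi}(\rho)\big)^{-1/\alpha}$ --- are exactly the contents of Appendices II and IV, reproduced correctly.

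However, there is a genuine gap at your final assembly step, and it sits precisely at the delicate point. Under the decomposition you (and the paper) use, the HTT branch must be weighted by $1-\mathcal{B}_{\mathrm{STP}}$, and
\begin{equation*}
1-\int^{\infty}_{\frac{\rho_{\mathrm{B}}}{\beta\varrho}}\exp\left(-\frac{\lambda\tau_{\mathrm{B}}d^{\mu}\sigma^{2}}{\delta\rho(1-\varrho)}\right)f_{P_{I}}(\rho)\,\mathrm{d}\rho
=\int^{\frac{\rho_{\mathrm{B}}}{\beta\varrho}}_{0}f_{P_{I}}(\rho)\,\mathrm{d}\rho
+\int^{\infty}_{\frac{\rho_{\mathrm{B}}}{\beta\varrho}}\left(1-\exp\left(-\frac{\lambda\tau_{\mathrm{B}}d^{\mu}\sigma^{2}}{\delta\rho(1-\varrho)}\right)\right)f_{P_{I}}(\rho)\,\mathrm{d}\rho ,
\end{equation*}
which is \emph{not} the factor $\int^{\frac{\rho_{\mathrm{B}}}{\beta\varrho}}_{0}\exp\big(-\lambda\tau_{\mathrm{B}}d^{\mu}\sigma^{2}/(\delta\rho(1-\varrho))\big)f_{P_{I}}(\rho)\,\mathrm{d}\rho$ multiplying the HTT integral in (\ref{eq:coverage_probability_STP}); the two coincide only if $\int^{\infty}_{0}\exp(\cdots)f_{P_{I}}(\rho)\,\mathrm{d}\rho=1$, which is impossible because the integrand is strictly smaller than $f_{P_{I}}$. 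Your sentence that ``splitting the $\rho$-axis at the energy threshold is what produces the two integrals'' is therefore not a valid step: the complement of the backscatter-selection event also contains realizations with $P_{I}>\rho_{\mathrm{B}}/(\beta\varrho)$ but weak fading, and on $\{P_{I}\le\rho_{\mathrm{B}}/(\beta\varrho)\}$ no exponential factor should appear at all, since there HTT is selected with certainty, independently of the fading gain. A rigorous version of your argument yields $\mathcal{C}_{\mathrm{STP}}=\big[\mathcal{B}_{\mathrm{STP}}\big]^{2}+(1-\mathcal{B}_{\mathrm{STP}})\,\mathcal{C}_{\mathrm{H}}$ with $\mathcal{C}_{\mathrm{H}}$ as in (\ref{eqn:coverageprobability_RSP_HTT}) --- which is also what the paper's own substitution proof literally produces --- rather than the displayed equation (\ref{eq:coverage_probability_STP}). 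In other words, your hand-wave conceals exactly the spot where the paper's displayed formula and its own proof disagree; a correct write-up should either state and prove the $(1-\mathcal{B}_{\mathrm{STP}})$ version, or explicitly flag that (\ref{eq:coverage_probability_STP}) does not follow from the stated decomposition.
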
 

 
\begin{proof}
According to the criteria of STP, $\mathcal{C}_{\mathrm{STP}}$ can be expressed by $\mathcal{C}_{\mathrm{PTP}}$ in (\ref{eqn:overall_coverage_probability}) with $\mathcal{B}_{\mathrm{PTP}}$ replaced by $\mathcal{B}_{\mathrm{STP}}$ given in (\ref{eqn:delta_B_STP}). Therefore, (\ref{eq:coverage_probability_STP}) can be obtained from (\ref{eq:coverage_probability_Rayleigh_PTP}) through the aforementioned replacement. 
\end{proof} 

\subsection{Throughput} 

Then, we move on to calculate the average throughput that can be achieved over a hybrid D2D communication link. 
We have the average throughput of PTP presented as follows:

\begin{theorem}
\label{thm:throughput}
Under PTP, the average throughput of a hybrid D2D communication link can be computed as
\begin{align}
& \hspace{-2mm} \mathcal{T}_{\mathrm{PTP}}  = T_{\mathrm{B}}  F_{ P_{I}} \Big( \frac{\rho_{\mathrm{B}}}{\omega \beta}\Big) \!\!
 \int^{\infty}_{\frac{\rho_{\mathrm{B}}}{\beta \varrho}} \! \exp \left( - \frac{\lambda \tau_{\mathrm{B}}d^{\mu} \sigma^2 }{\delta \rho(1 - \varrho)} \right) f_{P_{I}}(\rho) \mathrm{d}\rho 
 + \! (1\!-\!\omega) W \!\! \left( \! 1\!-\!F_{P_{I}}\Big(\frac{\rho_{\mathrm{B}}}{\omega \beta}\Big) \! \! \right)  \nonumber \\ & \hspace{20mm}\times \int^{\infty}_{ \log_{2} (1+\tau_{\mathrm{H}} \!)}  \int^{\infty}_{\frac{\rho_{\mathrm{H}}}{\beta\omega}}  \! \!\mathrm{Det} \big(\mathrm{Id}\!+\!\alpha \mathbb{C}_{\Psi}(\rho)\!\big)^{\!-\frac{1}{\alpha} }    \exp \left( \!- \frac{ \lambda d^{ \mu}\sigma^{2} (1-\omega) (2^t-1) }{ \omega \beta \rho - \rho_{\mathrm{H}} }  \!\right) f_{P_{I}}(\rho) \mathrm{d}\rho \mathrm{d}t,  
  \label{eq:throughput_PTP}
\end{align} 
where 
$F_{P_{I}}(\rho)$ and $f_{P_{I}}(\rho)$ have been obtained in (\ref{CDF_PI}) and (\ref{eq:PDF}), respectively,
and $\mathbb{C}_{\Psi} (\rho)$ is computed as 
\begin{align} \label{eqn:kernel_C}
&  \mathbb{C}_{\Psi} (\rho) = \! \sqrt{ 1 \!-\! \left( \!1 \!+\! \frac{\theta \lambda d^{\mu} (2^t\!-\!1) (1\!-\!\omega)   P_{B} }{m(\omega \beta \rho \!-\! \rho_{\mathrm{H}})\|\mathbf{x}\!-\!\mathbf{x}_{\mathrm{D}}\|^{\mu} \!}  \right)^{\!\!\!-m} }  
G_{\Psi} (\mathbf{x},\!\mathbf{y})  \sqrt{ 1 \!-\!  \left(\!1 \!+\! \frac{\theta \lambda d^{\mu} (2^t\!-\!1) (1\!-\!\omega)  P_{B} }{m(\omega \beta \rho \!-\! \rho_{\mathrm{H}})\|\mathbf{y}\!-\!\mathbf{x}_{\mathrm{D}}\|^{\mu} } \right)^{\!\!\!-m} }. \hspace{-2mm}
\end{align}

\end{theorem}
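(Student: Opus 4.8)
The plan is to start from the throughput decomposition (\ref{def:throughput}), writing $\mathcal{T}_{\mathrm{PTP}} = \mathcal{B}_{\mathrm{PTP}}\,\mathcal{T}_{\mathrm{B}} + (1-\mathcal{B}_{\mathrm{PTP}})\,\mathcal{T}_{\mathrm{H}}$, where $\mathcal{B}_{\mathrm{PTP}}$ is the probability that the transmitter enters ambient backscattering mode. Under PTP this event is exactly $\{P^{\mathrm{H}}_{E}\le\rho_{\mathrm{H}}\}=\{\omega\beta P_{I}\le\rho_{\mathrm{H}}\}$, so $\mathcal{B}_{\mathrm{PTP}}$ reduces to a CDF of $P_{I}$ evaluated at the mode-selection threshold $\rho_{\mathrm{H}}/(\omega\beta)$, exactly as already established in the proof of Theorem~\ref{thm:Outage_PTP}. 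The two conditional throughputs $\mathcal{T}_{\mathrm{B}}$ and $\mathcal{T}_{\mathrm{H}}$ are then handled separately, and both are obtained by conditioning on the incident power $P_{I}=\rho$ (whose law is $f_{P_{I}}$ in (\ref{eq:PDF})) and integrating. The independence of $\Phi$ and $\Psi$ is what makes this conditioning clean, since given $P_{I}=\rho$ the interference originating from $\Psi$ remains independent of $\rho$.

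For the backscattering term, because the transmitter sends at the fixed rate $T_{\mathrm{B}}$ whenever the backscatter link is covered, I would write $\mathcal{T}_{\mathrm{B}}=T_{\mathrm{B}}\,\mathbb{P}[\nu_{\mathrm{B}}>\tau_{\mathrm{B}},\,P^{\mathrm{B}}_{E}>\rho_{\mathrm{B}}]$. The energy constraint $P^{\mathrm{B}}_{E}=\beta\varrho P_{I}>\rho_{\mathrm{B}}$ fixes the lower integration limit $\rho_{\mathrm{B}}/(\beta\varrho)$, and conditioning on $P_{I}=\rho$ turns the SNR condition (\ref{eqn:SNR_B}) into a threshold on $h_{\mathrm{S},\mathrm{D}}$. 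Since $h_{\mathrm{S},\mathrm{D}}\sim\mathcal{E}(\lambda)$, this conditional probability is the single exponential $\exp(-\lambda\tau_{\mathrm{B}}d^{\mu}\sigma^{2}/(\delta\rho(1-\varrho)))$; averaging it against $f_{P_{I}}$ over $[\rho_{\mathrm{B}}/(\beta\varrho),\infty)$ reproduces the first line of (\ref{eq:throughput_PTP}). No interference, and hence no Fredholm determinant, appears here, consistent with the SNR (not SINR) definition of the backscatter link.

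For the HTT term, I would first convert the ergodic rate into a complementary-CDF (tail) integral: via the change of variable $t=\log_{2}(1+\nu_{\mathrm{H}})$, the mean of $(1-\omega)W\log_{2}(1+\nu_{\mathrm{H}})$ over the success region becomes $(1-\omega)W\int_{\log_{2}(1+\tau_{\mathrm{H}})}^{\infty}\mathbb{P}[\nu_{\mathrm{H}}>2^{t}-1,\,P^{\mathrm{H}}_{E}>\rho_{\mathrm{H}}]\,\mathrm{d}t$, the lower limit coming from the decoding threshold $\tau_{\mathrm{H}}$. Conditioning on $P_{I}=\rho>\rho_{\mathrm{H}}/(\omega\beta)$ fixes the transmit power $P_{\mathrm{S}}=(\omega\beta\rho-\rho_{\mathrm{H}})/(1-\omega)$. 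Because the desired link fades as $\widetilde{h}_{\mathrm{S},\mathrm{D}}\sim\mathcal{E}(\lambda)$, the event $\{\nu_{\mathrm{H}}>2^{t}-1\}$ factors into a noise exponential $\exp(-\lambda(2^{t}-1)d^{\mu}\sigma^{2}(1-\omega)/(\omega\beta\rho-\rho_{\mathrm{H}}))$ times the Laplace transform of the aggregate interference $\sum_{b\in\mathcal{B}}P_{B}\widetilde{h}_{b,\mathrm{D}}\|\mathbf{x}_{b}-\mathbf{x}_{\mathrm{D}}\|^{-\mu}$ evaluated at $s=\lambda(2^{t}-1)d^{\mu}(1-\omega)/(\omega\beta\rho-\rho_{\mathrm{H}})$. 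After de-conditioning on the Nakagami-$m$ marks ($\widetilde{h}_{b,\mathrm{D}}\sim\mathcal{G}(m,\theta/m)$, whose Laplace transform supplies the $(1+s\theta P_{B}/(m\|\cdot\|^{\mu}))^{-m}$ factors), this interference Laplace transform is precisely the $\alpha$-GPP Laplace functional of Proposition~\ref{lemma1}, giving $\mathrm{Det}(\mathrm{Id}+\alpha\mathbb{C}_{\Psi}(\rho))^{-1/\alpha}$ with the kernel (\ref{eqn:kernel_C}). Integrating the product against $f_{P_{I}}$ over $\rho\in[\rho_{\mathrm{H}}/(\beta\omega),\infty)$ and over $t$ then yields the double integral in the second line of (\ref{eq:throughput_PTP}).

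I expect the main obstacle to be the interference step: correctly interchanging the mark-averaging over the Nakagami-$m$ fading with the $\alpha$-GPP Laplace functional so that Proposition~\ref{lemma1} applies with the matched kernel, and tracking the $\rho$- and $t$-dependence that propagates into $\mathbb{C}_{\Psi}(\rho)$ through $P_{\mathrm{S}}$. A secondary subtlety is justifying the tail-integral representation of the ergodic rate together with its lower limit $\log_{2}(1+\tau_{\mathrm{H}})$; the remaining steps are routine de-conditioning and substitution of $f_{P_{I}}$ and $F_{P_{I}}$ from (\ref{eq:PDF}) and (\ref{CDF_PI}). Since this is the PTP result, the analogous STP statement would then follow, as in Theorem~\ref{thm:CoverageOutage_STP}, by replacing the mode-selection probability alone.
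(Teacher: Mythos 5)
Your proposal is correct and follows essentially the same route as the paper's Appendix V: the decomposition $\mathcal{T}_{\mathrm{PTP}}=\mathcal{B}_{\mathrm{PTP}}T_{\mathrm{B}}\mathcal{C}_{\mathrm{B}}+(1-\mathcal{B}_{\mathrm{PTP}})\mathcal{T}_{\mathrm{H}}$, the identification $\mathcal{T}_{\mathrm{B}}=T_{\mathrm{B}}\mathcal{C}_{\mathrm{B}}$, the tail-integral representation $\mathbb{E}[X]=\int_0^\infty\mathbb{P}[X>x]\,\mathrm{d}x$ for the HTT rate, and the factorization into a noise exponential times the $\Psi$-interference Laplace functional evaluated via Proposition~\ref{lemma1}, giving $\mathrm{Det}(\mathrm{Id}+\alpha\mathbb{C}_{\Psi}(\rho))^{-1/\alpha}$. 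Note also that your mode-selection threshold $\rho_{\mathrm{H}}/(\omega\beta)$ agrees with the paper's own $\mathcal{B}_{\mathrm{PTP}}$ in (\ref{eqn:B_PTP}); the argument $\rho_{\mathrm{B}}/(\omega\beta)$ appearing in $F_{P_I}(\cdot)$ in the theorem statement (\ref{eq:throughput_PTP}) is inconsistent with that and appears to be a typo in the paper.
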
 

The proof of {\bf Theorem} \ref{thm:throughput} is presented in Appendix V. 

Consequently, utilizing $\mathcal{B}_{\mathrm{STP}}$ obtained in (\ref{eqn:delta_B_STP}), 
we arrive at the following theorem stating the achievable throughput for STP.

\begin{theorem}
\label{thm:throughput_STP}
Under STP, the average throughput of a hybrid D2D communication link can be computed as
\begin{align} 
\mathcal{T}_{\mathrm{STP}} &
= T_{\mathrm{B}}  \left[  \int^{\infty}_{\frac{\rho_{\mathrm{B}}}{\beta \varrho }}  \exp \left( - \frac{\lambda \tau_{\mathrm{B}}d^{\mu} \sigma^2 }{\delta \rho(1 - \varrho)} \right) f_{P_{I}}(\rho) \mathrm{d}\rho\right]^2 
+  (1-\omega) W \! \int^{\frac{\rho_{\mathrm{B}}}{\beta \varrho } }_{0} \!  \exp \left( - \frac{\lambda \tau_{\mathrm{B}} d^{\mu} \sigma^2 }{\delta \rho(1 - \varrho)} \right) f_{P_{I}}(\rho) \mathrm{d}\rho 
 \nonumber \\ 
& \hspace{3mm} \times \! \int^{\infty}_{ \log_{2} (1+\tau_{\mathrm{H}})}  \int^{\infty}_{\frac{\rho_{\mathrm{H}}}{\beta\omega}}   \exp\left(\!- \frac{ \lambda d^{ \mu}\sigma^{2} (1-\omega) (2^t-1) }{ \omega \beta \rho - \rho_{\mathrm{H}} } \! \right)  
 \mathrm{Det}\big(\mathrm{Id}+\alpha \mathbb{C}_{\Psi}(\rho)\big)^{-\frac{1}{\alpha}} f_{P_{I}}(\rho) \mathrm{d}\rho \mathrm{d}t, 
 \label{eq:throughput_STP} 
\end{align} 
where  $f_{P_{I}}(\rho)$ has been obtained in $(\ref{eq:PDF})$  
and $\mathbb{C}_{\Psi} (\rho)$ is defined in
 (\ref{eqn:kernel_C}).

\end{theorem}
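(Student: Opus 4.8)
The plan is to obtain (\ref{eq:throughput_STP}) exactly as the coverage probability was obtained in Theorem \ref{thm:CoverageOutage_STP}: derive $\mathcal{T}_{\mathrm{STP}}$ from the already-proved expression $\mathcal{T}_{\mathrm{PTP}}$ in (\ref{eq:throughput_PTP}) by substituting only the mode-selection probability, replacing $\mathcal{B}_{\mathrm{PTP}}$ with $\mathcal{B}_{\mathrm{STP}}$. First I would recall from the proof of Theorem \ref{thm:throughput} that, under the one-shot selection model of Remark 1, the average throughput (\ref{def:throughput}) factorizes as $\mathcal{T}=\mathcal{B}\,\mathcal{T}_{\mathrm{B}}+(1-\mathcal{B})\,\mathcal{T}_{\mathrm{H}}$, where the per-mode throughputs are evaluated over the transmission-time channel and interference alone. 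In particular, $\mathcal{T}_{\mathrm{B}}=T_{\mathrm{B}}\int_{\rho_{\mathrm{B}}/(\beta\varrho)}^{\infty}\exp\!\big(-\lambda\tau_{\mathrm{B}}d^{\mu}\sigma^2/(\delta\rho(1-\varrho))\big)f_{P_{I}}(\rho)\,\mathrm{d}\rho$ is the backscatter rate weighted by the backscatter coverage probability, and $\mathcal{T}_{\mathrm{H}}$ is the HTT ergodic rate expressed through the double $\rho$–$t$ integral involving $\mathrm{Det}\big(\mathrm{Id}+\alpha\mathbb{C}_{\Psi}(\rho)\big)^{-1/\alpha}$ and $f_{P_{I}}(\rho)$, obtained from the Laplace-transform / Fredholm-determinant machinery of Proposition \ref{lemma1}. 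Both quantities are protocol-independent, so they carry over verbatim and no part of the Fredholm-determinant evaluation needs to be redone.

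The second step is to identify $\mathcal{B}_{\mathrm{STP}}$. Under STP the transmitter remains in backscattering mode precisely when the measured backscatter SNR clears its threshold, so $\mathcal{B}_{\mathrm{STP}}=\mathbb{P}[\nu_{\mathrm{B}}>\tau_{\mathrm{B}}]$. Conditioning on the incident power $P_{I}=\rho$, using $h_{\mathrm{S},\mathrm{D}}\sim\mathcal{E}(\lambda)$ together with $\nu_{\mathrm{B}}=\delta P_{I}(1-\varrho)h_{\mathrm{S},\mathrm{D}}/(d^{\mu}\sigma^2)$ and the fact that $\nu_{\mathrm{B}}=0$ whenever $P^{\mathrm{B}}_{E}=\beta\varrho P_{I}\le\rho_{\mathrm{B}}$, this collapses to the closed form $\mathcal{B}_{\mathrm{STP}}=\int_{\rho_{\mathrm{B}}/(\beta\varrho)}^{\infty}\exp\!\big(-\lambda\tau_{\mathrm{B}}d^{\mu}\sigma^2/(\delta\rho(1-\varrho))\big)f_{P_{I}}(\rho)\,\mathrm{d}\rho$, namely the expression $\mathcal{B}_{\mathrm{STP}}$ already established in (\ref{eqn:delta_B_STP}) during the coverage analysis. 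The crucial observation is that this selection probability is numerically the same integral that constitutes the backscatter coverage factor inside $\mathcal{T}_{\mathrm{B}}$.

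Substituting $\mathcal{B}_{\mathrm{STP}}$ for $\mathcal{B}_{\mathrm{PTP}}$ in $\mathcal{T}=\mathcal{B}\,\mathcal{T}_{\mathrm{B}}+(1-\mathcal{B})\,\mathcal{T}_{\mathrm{H}}$ then finishes the argument: since $\mathcal{B}_{\mathrm{STP}}$ coincides with the backscatter coverage integral, the backscattering contribution $\mathcal{B}_{\mathrm{STP}}\,\mathcal{T}_{\mathrm{B}}$ telescopes into the square $T_{\mathrm{B}}\big[\int_{\rho_{\mathrm{B}}/(\beta\varrho)}^{\infty}\exp(-\lambda\tau_{\mathrm{B}}d^{\mu}\sigma^2/(\delta\rho(1-\varrho)))f_{P_{I}}(\rho)\,\mathrm{d}\rho\big]^{2}$, which is the first term of (\ref{eq:throughput_STP}), while the HTT contribution is reweighted accordingly to give the second term. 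The main obstacle I anticipate is not the algebra but justifying the factorization that yields the squared (product) structure, i.e.\ that the selection-time statistics are independent of the transmission-time statistics; this is exactly the content of the one-shot-selection lower-bound model of Remark 1 and additionally rests on the independence of the point processes $\Phi$ and $\Psi$ and of the direct-link gains $h_{\mathrm{S},\mathrm{D}}$ and $\widetilde{h}_{\mathrm{S},\mathrm{D}}$ on their distinct frequency bands. Once this independence is invoked, every $\rho$-integral, the kernel $\mathbb{C}_{\Psi}$ in (\ref{eqn:kernel_C}), and the density $f_{P_{I}}$ in (\ref{eq:PDF}) are inherited unchanged from Theorem \ref{thm:throughput}, leaving only the substitution to complete.
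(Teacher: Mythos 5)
Your proposal is correct and follows essentially the same route as the paper: its proof of Theorem~\ref{thm:throughput_STP} is precisely the substitution of $\mathcal{B}_{\mathrm{STP}}$ from (\ref{eqn:delta_B_STP}) for $\mathcal{B}_{\mathrm{PTP}}$ in the decomposition (\ref{eqn:average_throughput_PTP}), with the squared term appearing because $\mathcal{B}_{\mathrm{STP}}$ coincides with the backscatter coverage integral $\mathcal{C}_{\mathrm{B}}$ in (\ref{eqn:delta_B_PTP}). Your added justification of the factorization via the one-shot selection model of Remark~1 and the independence of $\Phi$, $\Psi$, $h_{\mathrm{S},\mathrm{D}}$, and $\widetilde{h}_{\mathrm{S},\mathrm{D}}$ is consistent with, and slightly more explicit than, the paper's one-line argument.
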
 

\begin{proof}
By replacing $\mathcal{B}_{\mathrm{PTP}}$ in (\ref{eqn:average_throughput_PTP}) with $\mathcal{B}_{\mathrm{STP}}$ expressed as (\ref{eqn:delta_B_STP}), $\mathcal{T}_{\mathrm{STP}}$ can be obtained as in (\ref{eq:throughput_STP}).
\end{proof}

Though Theorems \ref{thm:throughput} and \ref{thm:throughput_STP} do not provide closed-form analytical expressions, the integrals can be efficiently evaluated by numerical analysis software like Matlab and Mathematica. Moreover, the expressions can be simplified considerably in some special cases like Corollary 
\ref{corollary2}. We only present the general results for the throughput expressions of PTP and STP in this paper due to limited space.

 
\section{Performance Evaluation and Analysis}
 
In this section, we validate our derived analytical expressions and conduct performance analysis based on numerical simulations. The performance of the proposed hybrid D2D communications is evaluated in the scenario coexisting with two groups of ambient transmitters $\Phi$ and $\Psi$, respectively, working on the energy harvesting frequency and active RF transmission frequency of the hybrid transmitter.
The transmit power level of the transmitters in $\Phi$ and $\Psi$ are set to be $P_{A}=P_{B}=0.2$ W, which is within the typical range of uplink transmit power for mobile devices. The interference ratio and transmission load are set to $\xi=0.2$ and $l_{A}=l_{B}=1$, respectively. The bandwidth of the transmitted signal $W$ in HTT mode is 1 MHz, and the noise variance $\sigma^2$ is -120 dBm/Hz. 
When the hybrid transmitter is in HTT mode, we assume equal time duration for energy harvesting and information transmission. In ambient backscattering mode, we consider 
$\rho_{\mathrm{B}} =8.9$ $\mu$W for
circuit power consumption  and $T_{\mathrm{B}}=$1 kbps for the transmission rate.

For the simulation of $\alpha$-GPP, we consider three typical scenarios, strong repulsion ($\alpha=-1$), medium repulsion ($\alpha=-0.5$) and no repulsion ($\alpha\to 0$, i.e., PPP), representing different social degrees among the ambient transmitters. 
In addition, for the evaluation of the Fredholm determinant, we adopt (\ref{evaluation_det}) and set $N_{\mathrm{closed}}$ to be 100. The other system parameters adopted in this section are listed in Table~\ref{parameter_setting} unless otherwise stated. 

\begin{table*} 
 \centering
 \caption{\footnotesize Parameter Setting.} \label{parameter_setting} 
 \begin{tabular}{|l|l|l|l|l|l|l|l|l|l|l|l|} 
 \hline
 Symbol & $\mu$ & $d$ & $R$ & $\theta$ & $\lambda$ & $\varrho$ & $\beta$ & $\delta$ & $\tau_{\mathrm{H}}$ & $\tau_{\mathrm{B}}$ & $\rho_{\mathrm{H}}$ \\ 
 \hline
 Value & 4 & 5 m & 30 m & 1 & 1 & 0.625 & 30 $\%$ & 1 & -40 dB & 5 dB & 113 $\mu$W \\
 \hline 
 \end{tabular}
 \end{table*} 

In the remaining of this section, the lines and symbols are used to represent the results evaluated from analytical expressions and Monte Carlo simulations, respectively. Additionally, for the comparison purpose, we evaluate the performance of a pure wireless-powered transmitter operated by the HTT protocol and a pure ambient backscatter transmitter as references, the plots of which are labeled as  ``Pure HTT" and ``Pure Ambient Backscattering", respectively. The performance of a pure wireless-powered transmitter (called pure HTT transmitter) and a pure ambient backscatter transmitter can be obtained by setting the hybrid transmitter in HTT mode and ambient backscattering mode, respectively, in all conditions. Specifically, the energy outage probability, coverage probability and average throughput of the pure ambient backscatter transmitter can be evaluated by $\mathcal{O}_{\mathrm{B}}$ in (\ref{CDF}), $\mathcal{C}_{\mathrm{B}}$ in (\ref{eqn:delta_B_PTP})  and $\mathcal{T}_{\mathrm{B}}$ in  (\ref{eqn:throughput_B}), respectively. Moreover, the energy outage probability, coverage probability and average throughput of the pure HTT transmitter can be evaluated by $\mathcal{O}_{\mathrm{H}}$ in (\ref{eqn:CDF_HTT}), $\mathcal{C}_{\mathrm{H}}$ in (\ref{eqn:coverageprobability_RSP_HTT}) and $\mathcal{T}_{\mathrm{H}}$ in (\ref{eqn:throughput_HTT}), respectively. 

\begin{figure}
\centering
 \begin{minipage}[c]{0.48\textwidth}
 \includegraphics[width=0.95\textwidth]{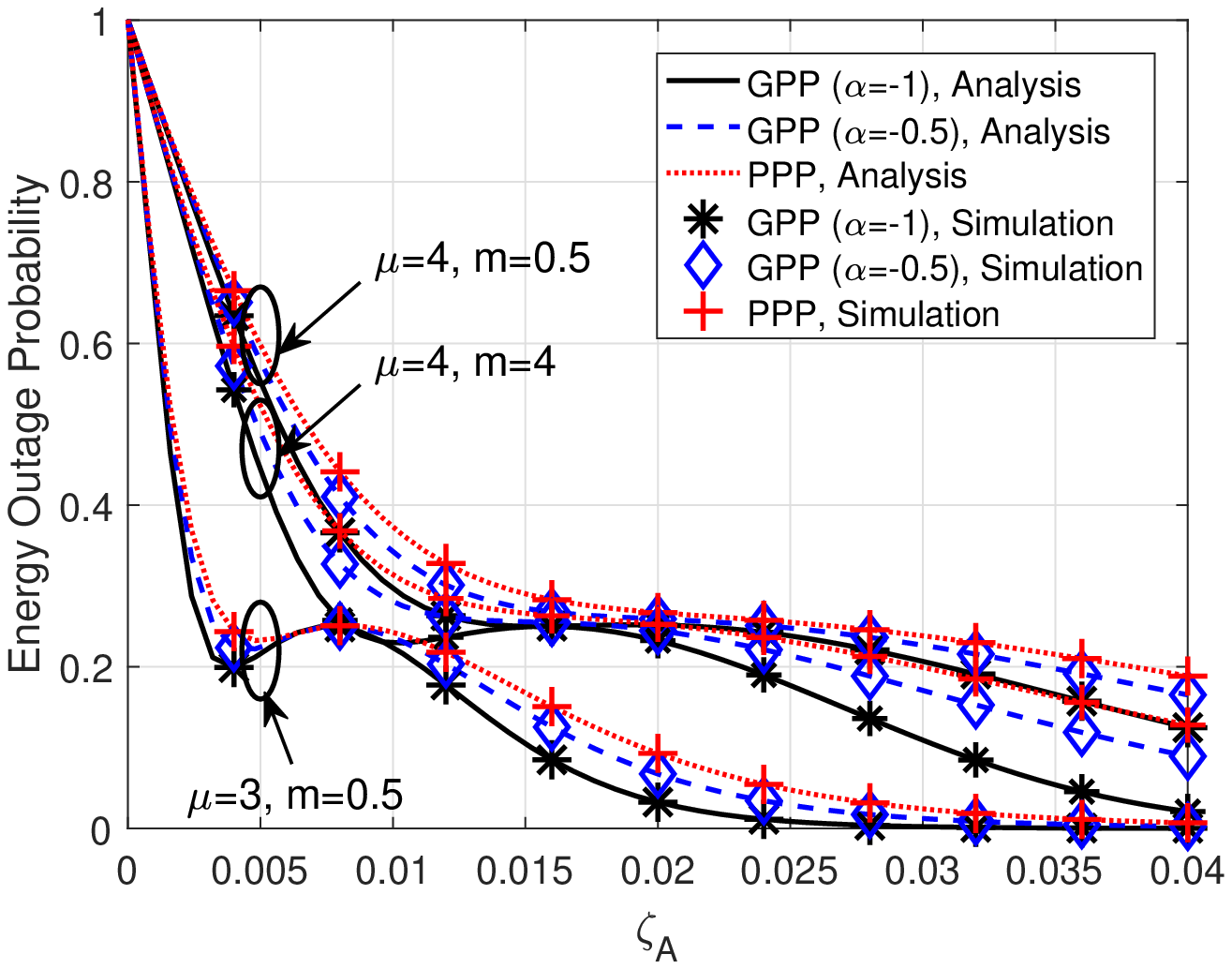} 
 \caption{
 	 $\mathcal{O}_{\mathrm{PTP}}$ as a function of $\zeta_{A}$.} \label{fig:EnergyOutage_Density_PTP}
 \end{minipage}
 \begin{minipage}[c]{0.48\textwidth}
 \includegraphics[width=0.95\textwidth]{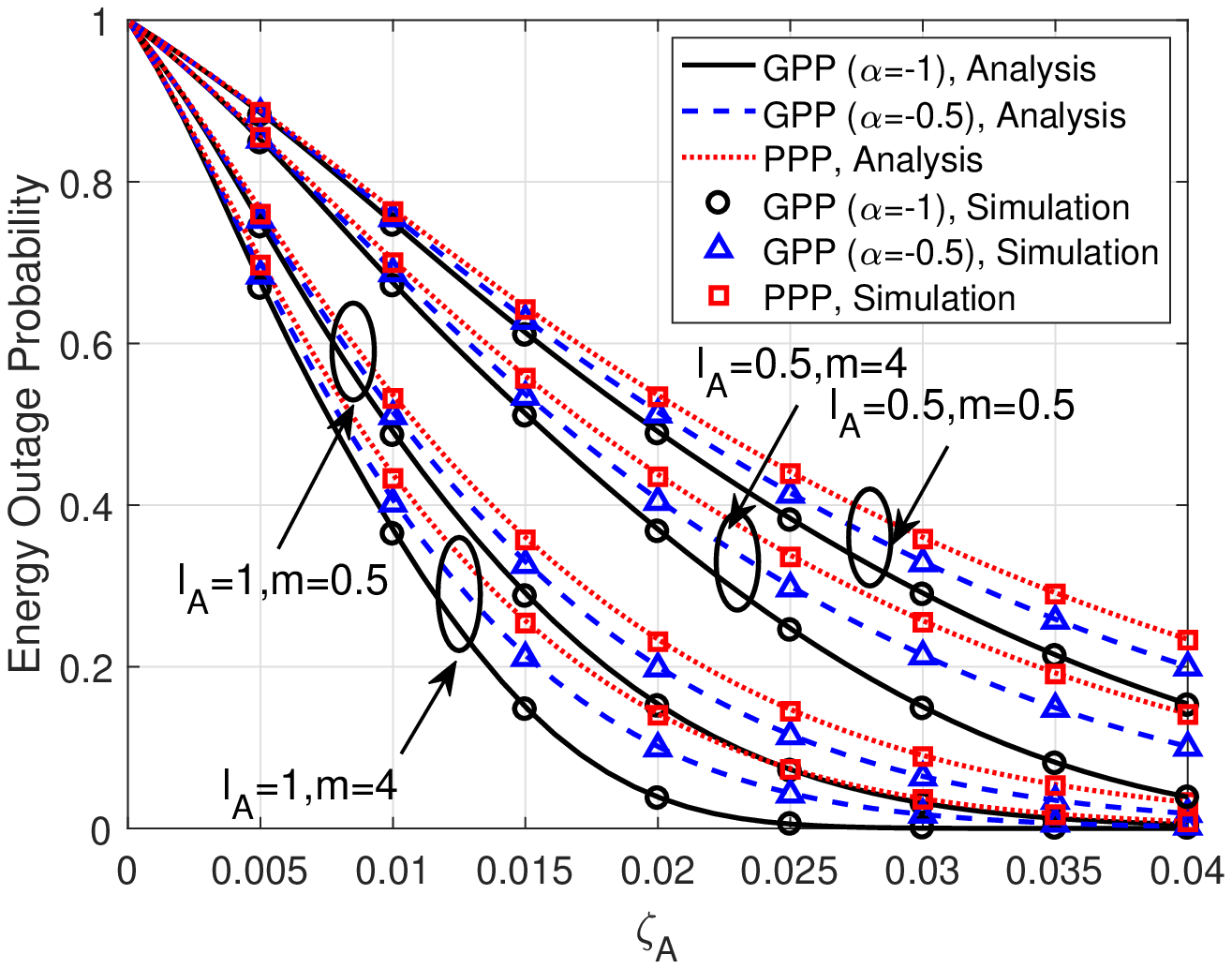}  
 \caption{
 	$\mathcal{O}_{\mathrm{STP}}$ as a function of $\zeta_{A}$.}\label{fig:EnergyOutage_Density_STP} 
 \end{minipage}
 \end{figure}

We first examine the energy outage probabilities. Figs.~\ref{fig:EnergyOutage_Density_PTP} and \ref{fig:EnergyOutage_Density_STP} show $\mathcal{O}_{\mathrm{PTP}}$ and $\mathcal{O}_{\mathrm{STP}}$ obtained in (\ref{eq:energy_outage_probability_PTP}) and (\ref{eq:energy_outage_probability_STP}), respectively, as a function of $\zeta_{A}$. Note that when $\zeta_{A}$ varies from 0 to 0.04, equivalently, the average number of ambient transmitters changes from 0 to 113. The accuracy of the energy outage probability expressions are validated by the simulation results with different values of $\alpha$ and $\mu$ under different transmission load $l_{A}$ and fading factors. In principle, larger $\zeta_{A}$ results in larger incident power at the hybrid transmitter, thus decreasing energy outage probabilities under a certain operation mode. However, one finds that only $\mathcal{O}_{\mathrm{STP}}$ is a monotonically decreasing function of $\zeta_{A}$ while $\mathcal{O}_{\mathrm{PTP}}$ not necessarily is. This is because the energy outage probability in HTT mode is higher than that in ambient backscattering mode given a certain $\zeta_{A}$. PTP works in ambient backscattering mode when $\zeta_{A}$ is low and $\mathcal{O}_{\mathrm{PTP}}$ first decreases with the increase of $\zeta_{A}$. When $\zeta_{A}$ reaches a certain level (e.g., 0.005 for case $\mu=3$), the hybrid transmitter is more in HTT mode, thus causing an increase of $\mathcal{O}_{\mathrm{PTP}}$. 
As for STP, it is in HTT mode when $\zeta_{A}$ is low. When $\zeta_{A}$ becomes higher, the STP is more in ambient backscattering mode, which means that lower energy outage probability can be achieved.
Therefore, mode switching results in a smooth and monotonic performance measure for $\mathcal{O}_{\mathrm{STP}}$.

From both Figs.~\ref{fig:EnergyOutage_Density_PTP} and \ref{fig:EnergyOutage_Density_STP}, we observe that the repulsion factor $\alpha$ among ambient transmitters has a considerable impact on energy outage probability. In other words, stronger attraction among the ambient transmitters leads to a higher energy outage probability of the hybrid transmitter. This can be understood that 
the incident power is more affected by the ambient transmitters in the vicinity of the hybrid transmitter. Strong repulsion generates a more scattered distribution of ambient transmitters guaranteeing that the hybrid transmitter is surrounded by ambient transmitters. By contrast, in the case of PPP, the distribution of ambient transmitters exhibits clustering behavior. Therefore, the likelihood that the hybrid transmitter has ambient transmitters nearby turns smaller, resulting in a higher chance of energy outage.  

 \begin{figure} 
 \centering
 \includegraphics[width=0.48\textwidth]{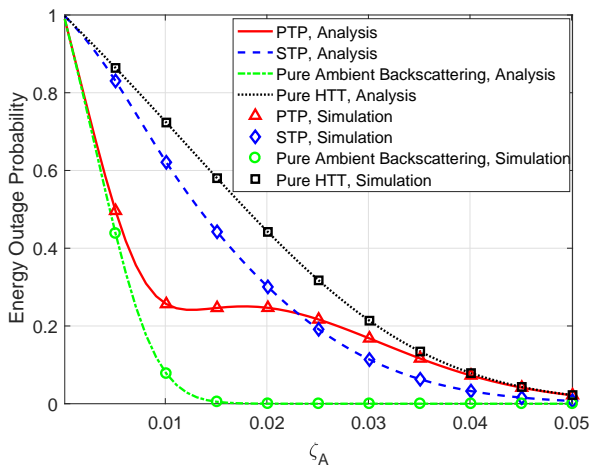}  
 \caption{Comparison of energy outage probabilities. ($\alpha=-1$)} \label{fig:EnergyOutageProb}
 \end{figure}

We observe that either a smaller path loss exponent (e.g., $\mu=3$ in Fig.~\ref{fig:EnergyOutage_Density_PTP}) or a larger Nakagami shape parameter $m$ (e.g., $m=4$ in Fig.~\ref{fig:EnergyOutage_Density_STP}) can reduce energy outage probabilities as both render less propagation attenuation. Additionally, as shown in Fig.~\ref{fig:EnergyOutage_Density_STP}, the transmission load $l_{A}$ is directly related to the aggregated energy harvesting rate, and thus the energy outage probability is inversely proportional to $l_{A}$.
 

Then, in Fig.  \ref{fig:EnergyOutageProb}, we compare energy outage probability of PTP, STP, pure ambient backscattering, and pure HTT 
under different ambient transmitter densities. 
It can be found that energy outage probabilities  are directly proportional to $\zeta_{A}$.
As expected, the pure ambient backscatter transmitter experiences less energy outage than the pure HTT transmitter in all cases due to lower circuit power consumption. Moreover, we observe that in terms of the energy outage probability, PTP is advantageous over STP when $\zeta_{A}$ is low (e.g., smaller than 0.02), and is outperformed by STP when $\zeta_{A}$ is high. This is due to the fact that PTP and STP, respectively, have better chance to be in  ambient backscattering and HTT modes if $\zeta_{A}$ is low, and tend to switch to the other mode otherwise.


 
Figs.~\ref{fig:CoverProb_density_PTP} and~\ref{fig:CoverProb_density_STP} illustrate how the coverage probabilities $\mathcal{C}_{\mathrm{PTP}}$ and $\mathcal{C}_{\mathrm{STP}}$ obtained in (\ref{eq:coverage_probability_Rayleigh_PTP}) and  (\ref{eq:coverage_probability_STP}), respectively, vary with ambient transmitter density $\zeta_{A}$ under different transmission loads and fading coefficients. In principle, larger density $\zeta_{A}$, repulsion factor $\alpha$, transmission load~$l_{A}$, and Nakagami shape parameter $m$ lead to more incident power, and thus, result in increased transmit power at the hybrid transmitter (either in ambient backscattering mode or in HTT mode) to improve the coverage probability. 
The mentioned effects on the coverage probability have been verified for both PTP and STP in Figs.~\ref{fig:CoverProb_density_PTP} and~\ref{fig:CoverProb_density_STP}, respectively, which indicates that both $\mathcal{C}_{\mathrm{PTP}}$ and $\mathcal{C}_{\mathrm{STP}}$ are monotonically increasing functions of $\zeta_{A}$, $\alpha$, $l_{A}$ and $m$. Note that from Figs.~\ref{fig:CoverProb_density_PTP}  and~\ref{fig:CoverProb_density_STP}, with the increase of $\zeta_{A}$, the coverage probabilities tend to be steady below 
1. This is because, given an interference ratio $\xi$, the increase of $\zeta_{A}$ not only provides the hybrid transmitter with more harvested energy to transmit, but also leads to more interference that harms the transmission.

\begin{figure} 
\centering
 \begin{minipage}[c]{0.48\textwidth}
 \includegraphics[width=0.999\textwidth]{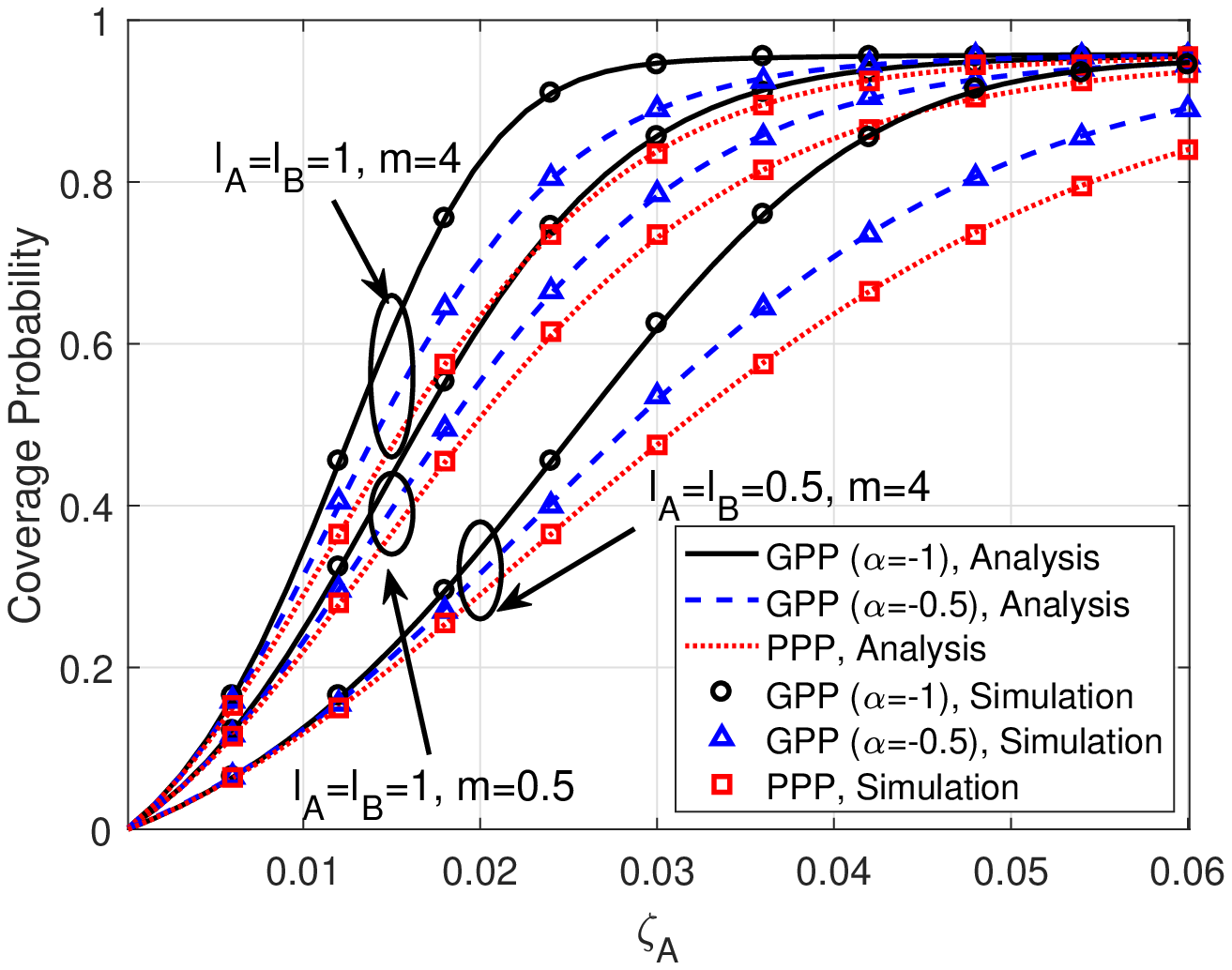} 
 \caption{$\mathcal{C}_{\mathrm{PTP}}$ as a function of $\zeta_{A}$. } \label{fig:CoverProb_density_PTP} 
 \end{minipage}
 \begin{minipage}[c]{0.48\textwidth}
 \includegraphics[width=0.999\textwidth]{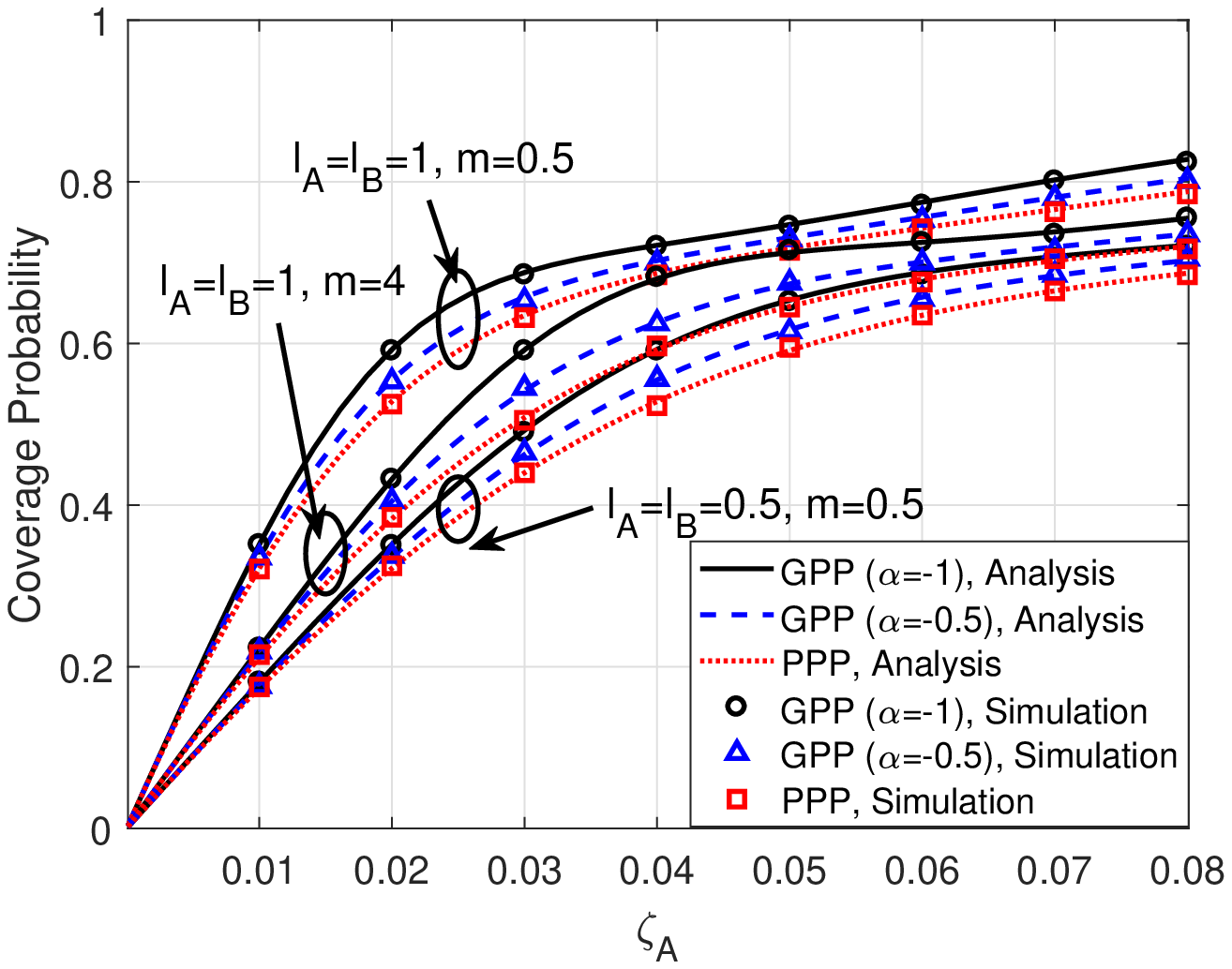} 
 \caption{$\mathcal{C}_{\mathrm{STP}}$ as a function of $\zeta_{A}$. }\label{fig:CoverProb_density_STP} 
 \end{minipage}
 \end{figure}


Fig.~\ref{CP_density_comparison} compares coverage probabilities (as functions of density $\zeta_A$) of PTP, STP, pure ambient backscattering, and pure HTT. 
When $\xi$ is small (i.e., $\xi=0.2$) as shown in Fig.~\ref{fig:CP_density_comparison1}, the pure HTT transmitter experiences low interference, and thus, achieves significantly higher coverage probability than pure ambient backscattering.
However, in the case with high interference ratio (i.e., $\xi=0.8$) as depicted in Fig.~\ref{fig:CP_density_comparison2}, their performance gap becomes smaller and pure ambient backscattering outperforms pure HTT when $\zeta_{A}$ is large (e.g., above 0.06), due to the high interference received by the pure HTT transmitter. 
We also observe that PTP achieves similar performance to that of STP under small $\zeta_{A}$ and is obviously outperformed by STP as $\zeta_{A}$ grows larger (e.g., above 0.06). The reason behind is that PTP selects operation mode solely based on the incident power and is unaware of the interference level so that it remains in HTT mode even when the achieved SINR is low. This reflects that STP is more suitable for use in an interference rich environment. 

\begin{figure} 
\centering
 \subfigure [\vspace{-10mm}]
  {
\label{fig:CP_density_comparison1}
 \centering   
 \includegraphics[width=0.48 \textwidth]{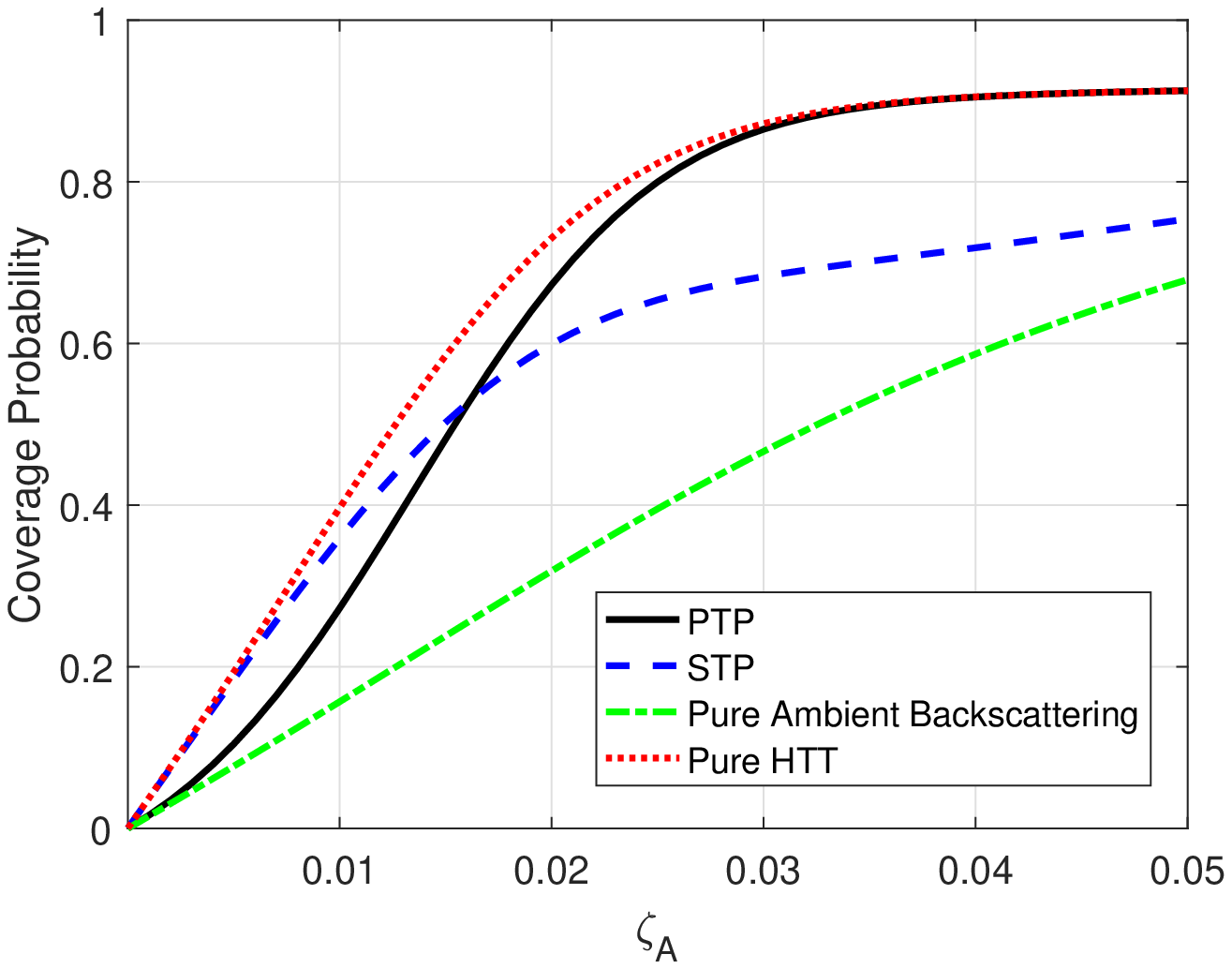}}
 \centering  
 \subfigure  [ 
 ] {
\label{fig:CP_density_comparison2}
 \centering
\includegraphics[width=0.48 \textwidth]{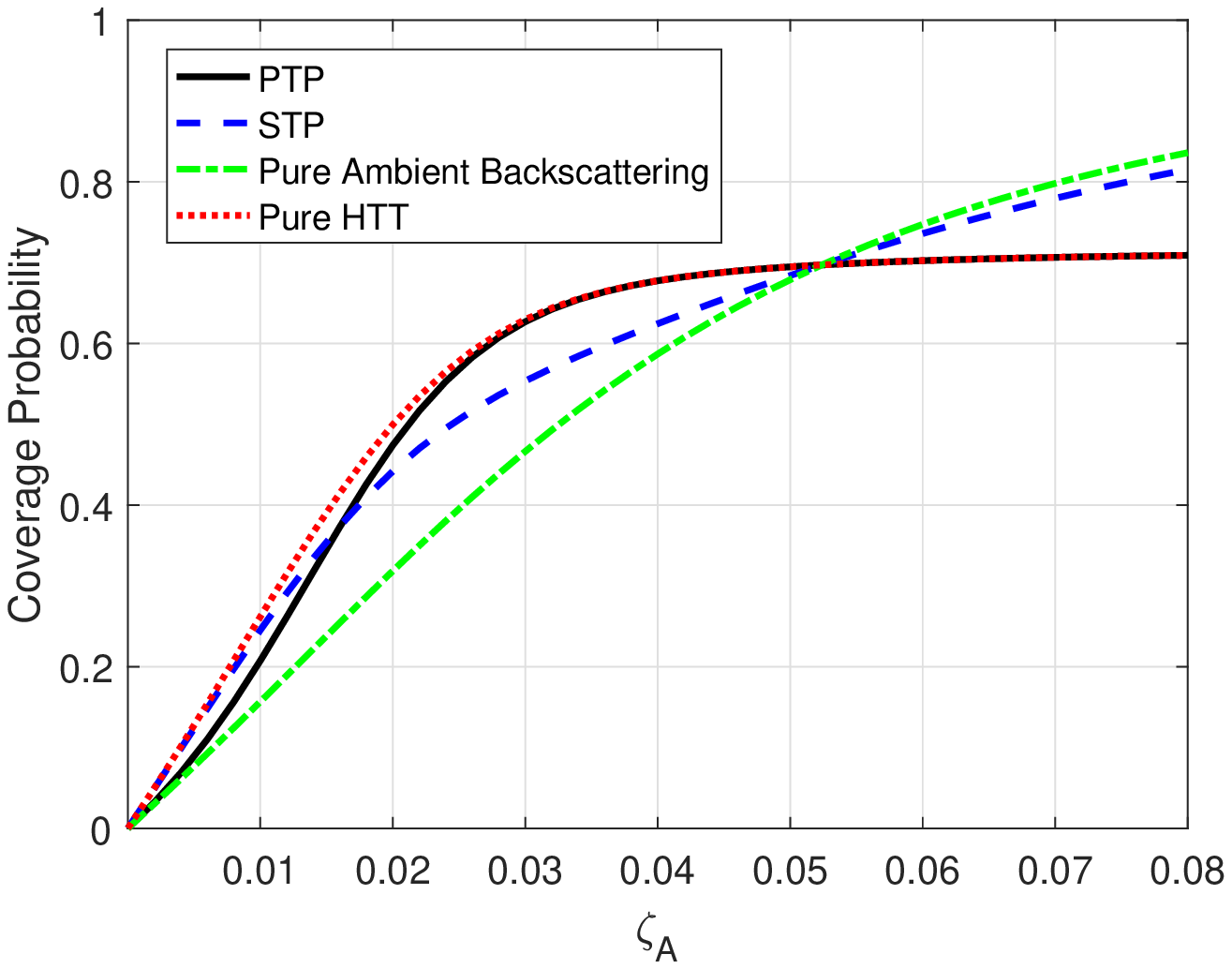}}
\caption{Comparison of coverage probabilities as a function of $\zeta_{A}$. ((a) $\xi=0.2$, (b) $\xi=0.8$) } 
\centering
\label{CP_density_comparison}
\end{figure}

In Fig. \ref{fig:_CP_BE}, we show the coverage probability as a function of backscattering efficiency $\delta$ when $\zeta_{A}$ is set at 0.02 and 0.04. As pure HTT is not affected by the backscattering efficiency, the resulting coverage probability remains constant. We observe that the coverage probability of a pure backscattering transmitter is a monotonically increasing function of the backscattering efficiency. 
Under PTP, when $\zeta_{A}$ is small (e.g., $\zeta_{A}=0.02$), the hybrid transmitter is likely to select either HTT mode or ambient backscattering mode, resulting in a coverage probability between that of pure HTT and that of pure ambient backscattering. When $\zeta_{A}$ is large (e.g., $\zeta_{A}=0.04$), the hybrid transmitter has very high chance to stay in HTT mode, and thus results in a coverage probability almost overlapping with that of pure HTT.   
Under STP, when $\zeta_{A}=0.02$, the increase of backscatter efficiency gives the hybrid transmitter more chance to select ambient backscattering mode which has lower coverage probability than that of HTT mode, and therefore, the overall coverage probability of STP decreases. When $\zeta_{A}=0.04$, the hybrid transmitter also has larger chance to select ambient backscattering mode as the backscattering efficiency increases. However, in this case, the coverage probability of ambient backscattering mode is significantly improved with higher backscattering efficiency. Thus, the overall coverage probability of STP increases with $\zeta_{A}$.

\begin{figure}
\centering
 \begin{minipage}[c]{0.48\textwidth} 
 \includegraphics[width=1.1\textwidth]{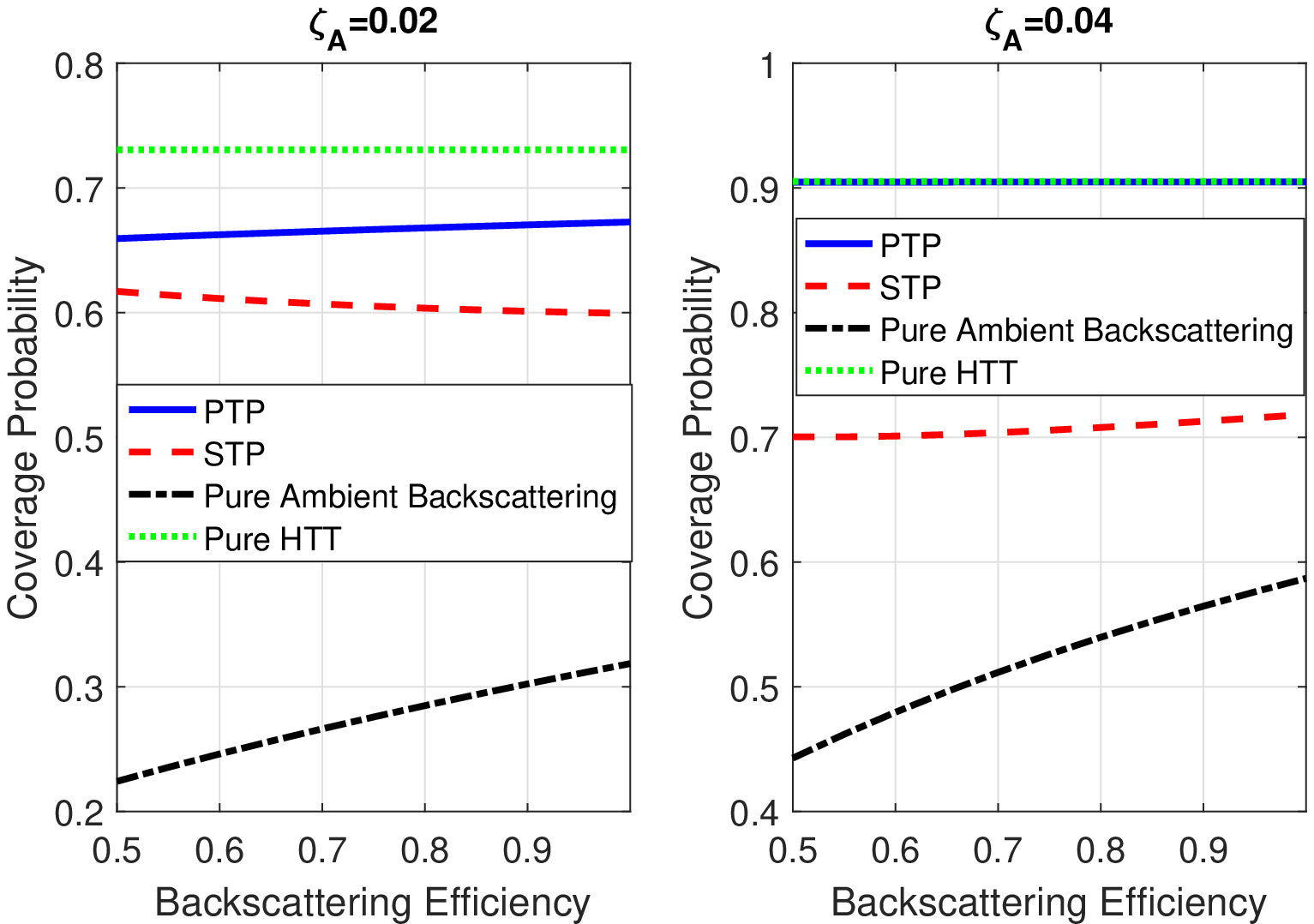} 
 \caption{Coverage probability as a function of backscattering efficiency.} \label{fig:_CP_BE}
 \end{minipage}
 \begin{minipage}[c]{0.48\textwidth} 
 \includegraphics[width=1.1\textwidth]{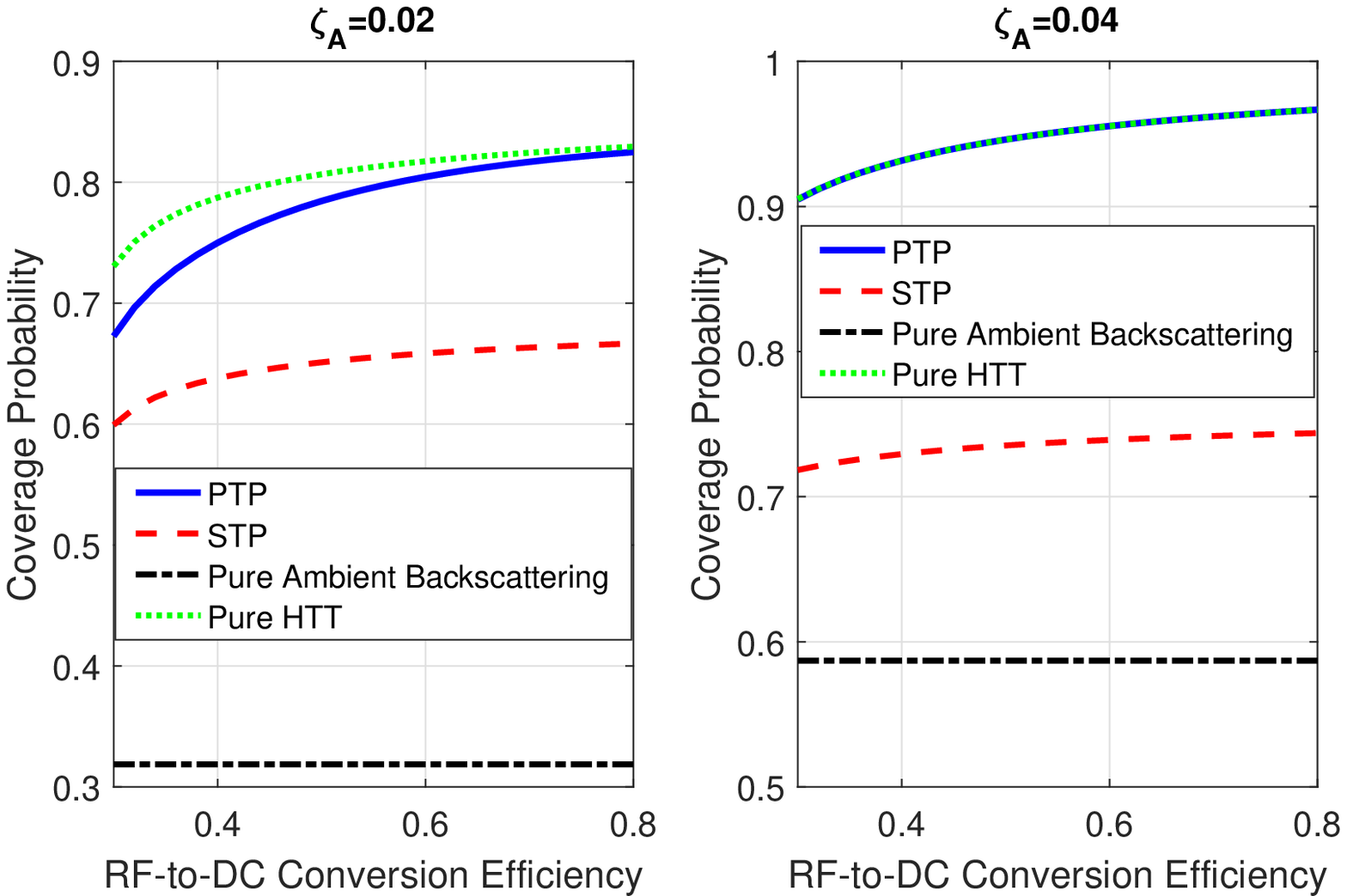}  
 \caption{Coverage probability as a function of RF-to-DC conversion efficiency.}\label{fig:CP_beta}
 \end{minipage}
 \end{figure}

In Fig.~\ref{fig:CP_beta}, we demonstrate how the coverage probabilities vary with the RF-to-DC conversion efficiency $\beta$ when $\zeta_{A}$ is set at 0.02 and 0.04. It is straightforward that the coverage probabilities are monotonically increasing functions of $\beta$. We can also see that the variations of the coverage probabilities due to the change of $\beta$ decrease as $\zeta_{A}$ becomes larger. This indicates that higher RF-to-DC conversion efficiency is more beneficial to the coverage probability of the hybrid transmitter when the density of ambient transmitters is small.  Additionally, the coverage probability of a pure backscattering transmitter changes very slightly as $\beta$ varies. 
This is because the coverage probability is mainly affected by two factors, i.e., energy harvesting rate $P^{\mathrm{B}}_{E} =\beta \varrho P_{I}$ and effective backscattered power $P_{R} = \delta (1-\varrho) P_{I}  $. Once the energy harvesting rate exceeds the circuit power consumption of a pure backscattering transmitter $\rho_{\mathrm{B}}$, the effective backscattered power is not impacted by the energy harvesting rate. Due to the fact that $\rho_{\mathrm{B}}$ is very small, the energy harvesting rate reaches $\rho_{\mathrm{B}}$ with a probability approaching 1 at both $\zeta_{A}=0.02$ and $\zeta_{A}=0.04$. Therefore, the variation of $\beta$ within a normal range, i.e., from 0.3 to 0.8, does not cause significant change on the coverage probability of a pure backscattering transmitter.

Furthermore, Fig.~\ref{CP_d_comparison} illustrates the comparisons of the coverage probabilities (as functions of transmitter-receiver distance $d$) under different density of the ambient transmitters $\zeta_{A}$ and interference ratio $\xi$. 
We focus on evaluating the scenario with both small $\zeta_{A}$ and $\xi$ and the scenario with both large $\zeta_{A}$ and $\xi$.\footnote{The coverage probabilities of the hybrid transmitter are increasing functions of $\zeta_{A}$, as larger density of ambient transmitters produces more RF signals for the hybrid transmitter to perform either active transmission or ambient backscattering. Moreover, the coverage probabilities of the hybrid transmitter are decreasing functions of $\xi$, as larger interference results in lower received SINR at the hybrid receiver. Therefore, it is straightforward that the coverage probabilities of the hybrid transmitter are higher in the scenario with both smaller $\xi$ and larger $\zeta_{A}$, and become smaller with both larger $\xi$ and smaller $\zeta_{A}$. We omit showing the above two scenarios due to the space limit.} 
In the former scenario 
(i.e., $\zeta_{A}=0.02$ and $\xi=0.1$) as shown in Fig.~\ref{fig:EnergyOutage_d_comparison1}, the pure HTT transmitter is inferior to the pure ambient backscatter transmitter when $d$ is small (e.g., $d < 2$). It is because the pure HTT transmitter has a higher chance of energy outage when $\zeta_{A}$ is small. However, the pure HTT transmitter is more robust to longer $d$ since it first aggregates the harvested energy and generates higher transmit power than backscattered power. Moreover, with the increase of $d$ from 0, STP first outperforms PTP by operating in ambient backscattering mode in low $\zeta_{A}$ and is outperformed by PTP when $d$ is larger due to the same reason. Eventually, both achieve comparable performance when $d$ is above a certain value (i.e., around 7 m). 
\begin{figure}  
\centering 
 \subfigure [
 ]  {
\label{fig:EnergyOutage_d_comparison1}
 \centering 
 \includegraphics[width=0.48 \textwidth]{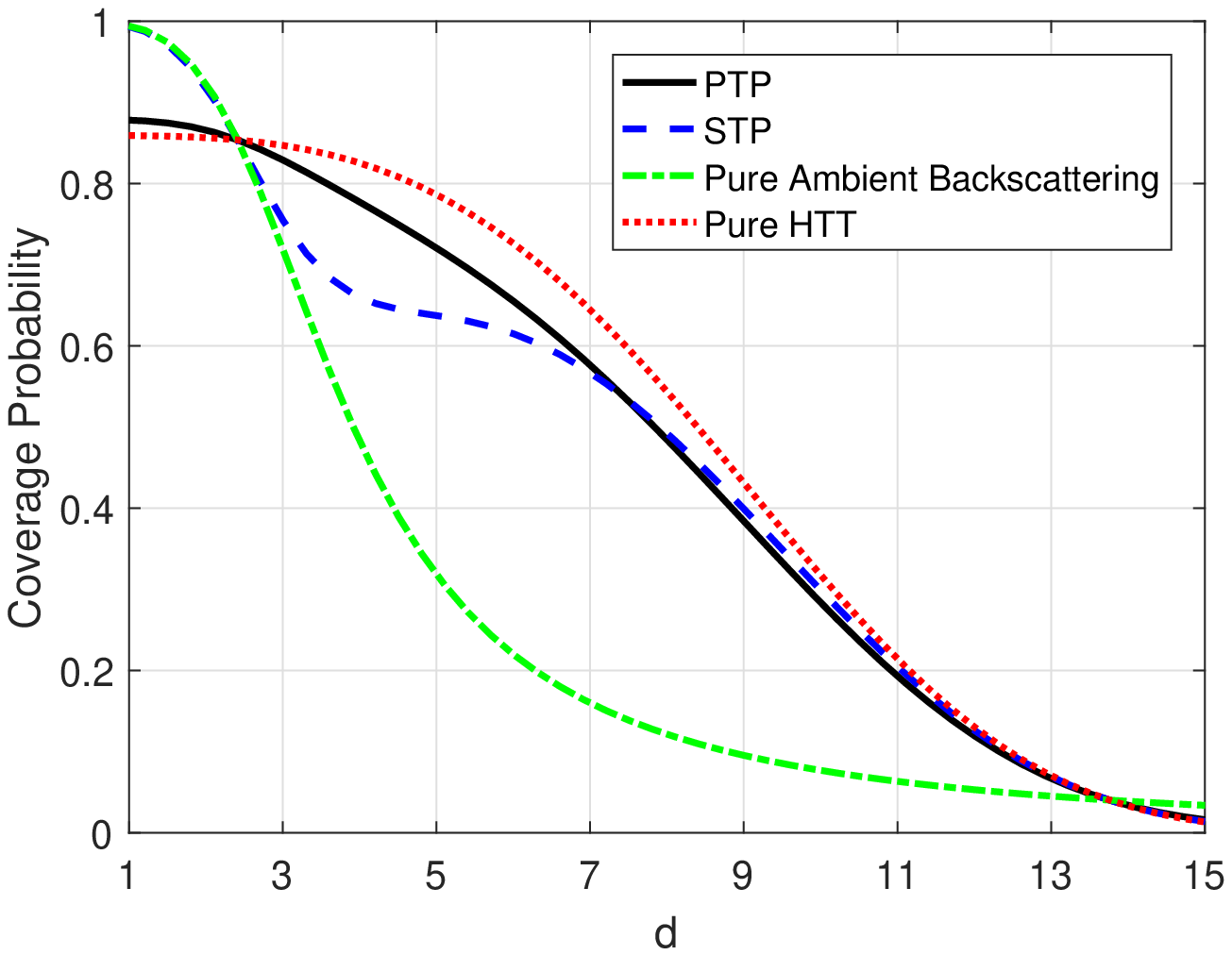}} 
 \centering 	
 \subfigure [ 
 ] {
\label{fig:EnergyOutage_d_comparison2}
 \centering
\includegraphics[width=0.48 \textwidth]{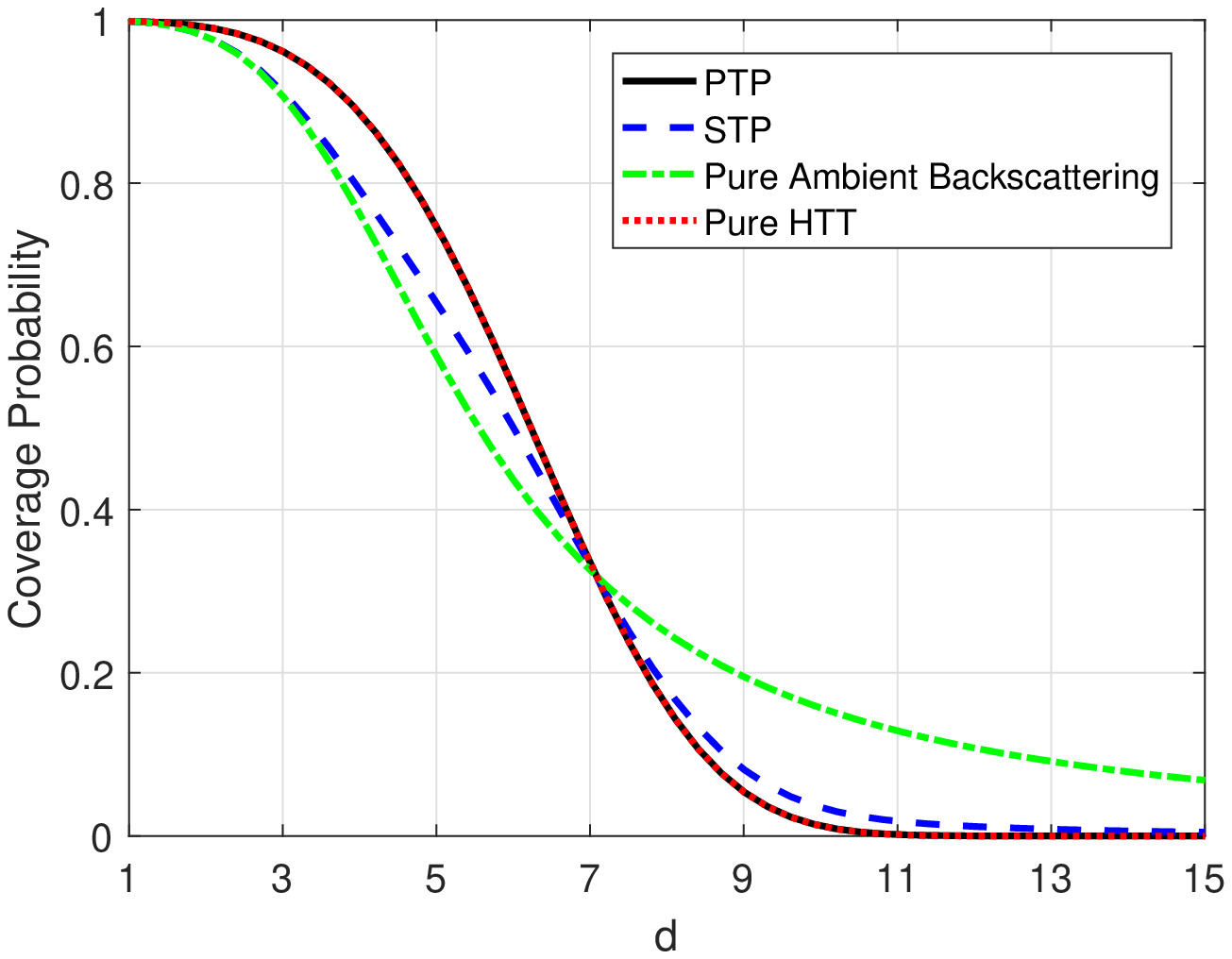}}
\caption{Comparison of coverage probabilities as a function of $d$. ((a) $\zeta_{A}=0.02$, $\xi=0.1$, (b) $\zeta_{A}=0.04$, $\xi=0.6$)} 
\centering
\label{CP_d_comparison}
\end{figure}
Conversely, in the scenario with 
larger $\zeta_{A}$ and $\xi$ (i.e., $\zeta_{A}=0.04$ and $\xi=0.6$), as depicted in Fig.~\ref{fig:EnergyOutage_d_comparison2}, the pure HTT transmitter is superior to the pure ambient backscatter transmitter when $d$ is small (e.g., $d < 6$) because abundant ambient RF resources mitigate the occurrence of energy outage. However, due to severe interference, $\mathcal{C}_{\mathrm{H}}$ (coverage probability of the pure HTT transmitter) plunges with the increase of $d$. Instead, the pure ambient backscatter transmitter becomes more robust to longer $d$. It can be seen that $\mathcal{C}_{\mathrm{PTP}}$ overlaps with $\mathcal{C}_{\mathrm{H}}$ because when the harvested energy is ample the hybrid transmitter always operates in HTT mode. Overall, the performance gap between PTP and STP is small in this scenario. 

\begin{figure} 
\centering 
 \subfigure [ 
 ] {
\label{fig:throughput_xi02}
 \centering
 \includegraphics[width=0.48 \textwidth]{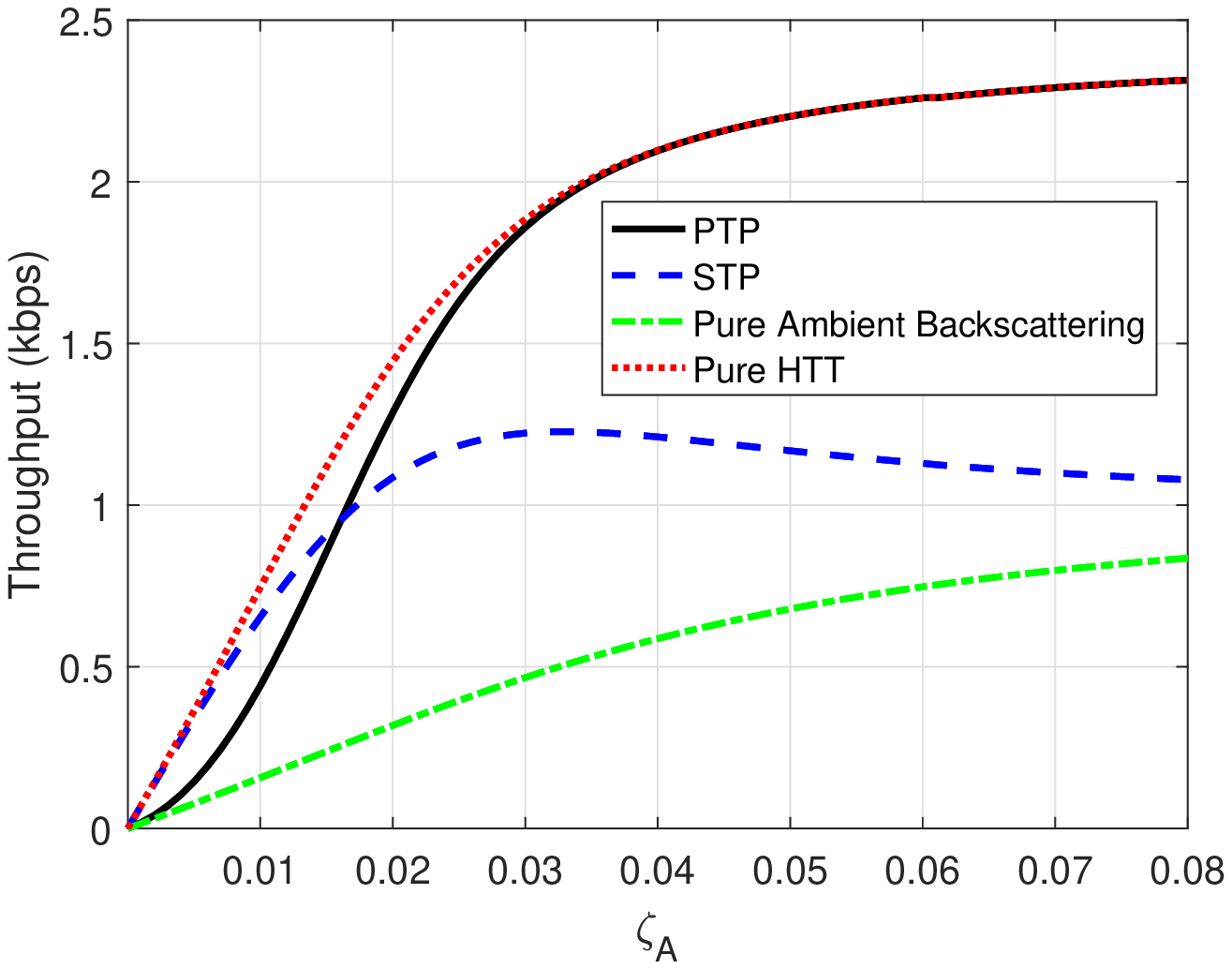}}
 \centering  
 \subfigure [
 ] {
\label{fig:throughput_xi08}
 \centering
\includegraphics[width=0.48 \textwidth]{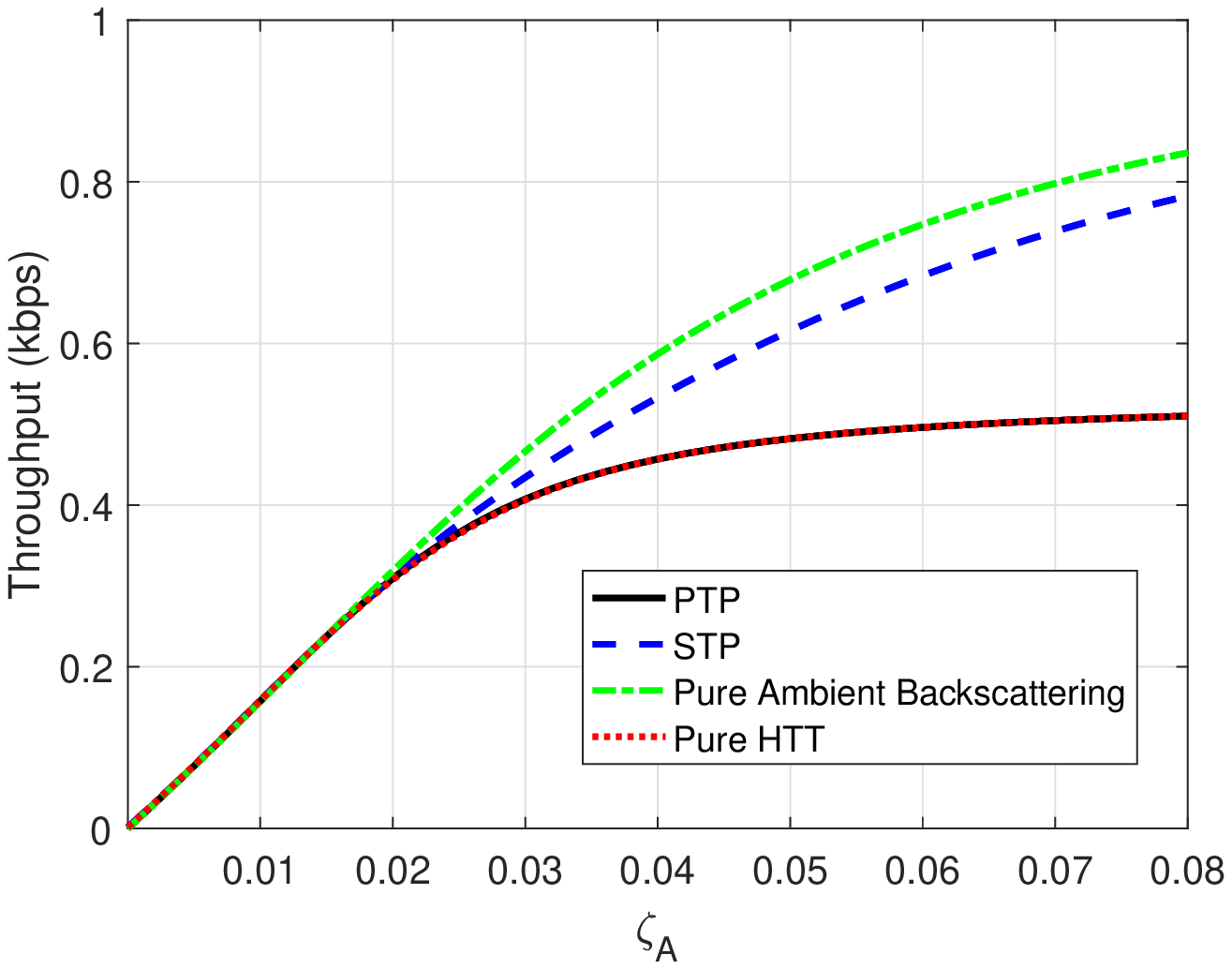}}\\
\caption{Comparison of average throughput as a function of $\zeta_{A}$. ((a) $d=5$, $\xi=0.2$, (b) $d=5$, $\xi=0.8$)}  
\centering
\label{Throughput_density_comparison}
\end{figure}


Fig.~\ref{Throughput_density_comparison} compares the throughput (as a function of density $\zeta_A$) of PTP, STP, pure ambient backscattering, and pure HTT.
We focus on the cases when the interference ratio is small ($\xi=0.2$) and large ($\xi=0.8$) with the corresponding results shown in Figs.~\ref{fig:throughput_xi02} and~\ref{fig:throughput_xi08}, respectively. We observe that the trend of throughput performance has been somehow reflected by the coverage probabilities shown in Fig.~\ref{CP_density_comparison}. Similar to our observation for Fig.~\ref{CP_density_comparison}, we can draw the conclusion that, in general, PTP yields higher throughput when the interference level is low. Otherwise, STP is more suitable for use.

Additionally, Fig.~\ref{Throughput_d_comparison} examines the influence of transmitter-receiver distance $d$ on the throughput performance. As expected, the pure HTT transmitter prominently outperforms the pure ambient backscatter transmitter with relatively smaller $\xi$ and larger $\zeta_{A}$ (i.e., $\xi=0.2$ and $\zeta_{A}=0.02$) as shown in Fig.~\ref{fig:D002X02}. The performance gap becomes progressively significant with decreasing $d$. PTP also attains remarkable throughput gain over STP since, in this context, energy harvesting rate is a better indication to select HTT mode. By contrast, in the case with relatively larger $\xi$ and smaller $\zeta_{A}$ (i.e., $\xi=0.8$ and $\zeta_{A}=0.01$) as shown in Fig.~\ref{fig:D001X08}, PTP is less advantageous than STP because the energy harvesting rate detection in PTP fails to take into account the increased interference. However, PTP becomes superior and exhibits less susceptibility as $d$ grows. The cause is that when $\zeta_{A}$ is small, PTP operates more in ambient backscattering mode which is immune to the increased interference.
 

\begin{figure} 
\centering
 \subfigure [   ] {
\label{fig:D002X02}
 \centering
 \includegraphics[width=0.48 \textwidth]{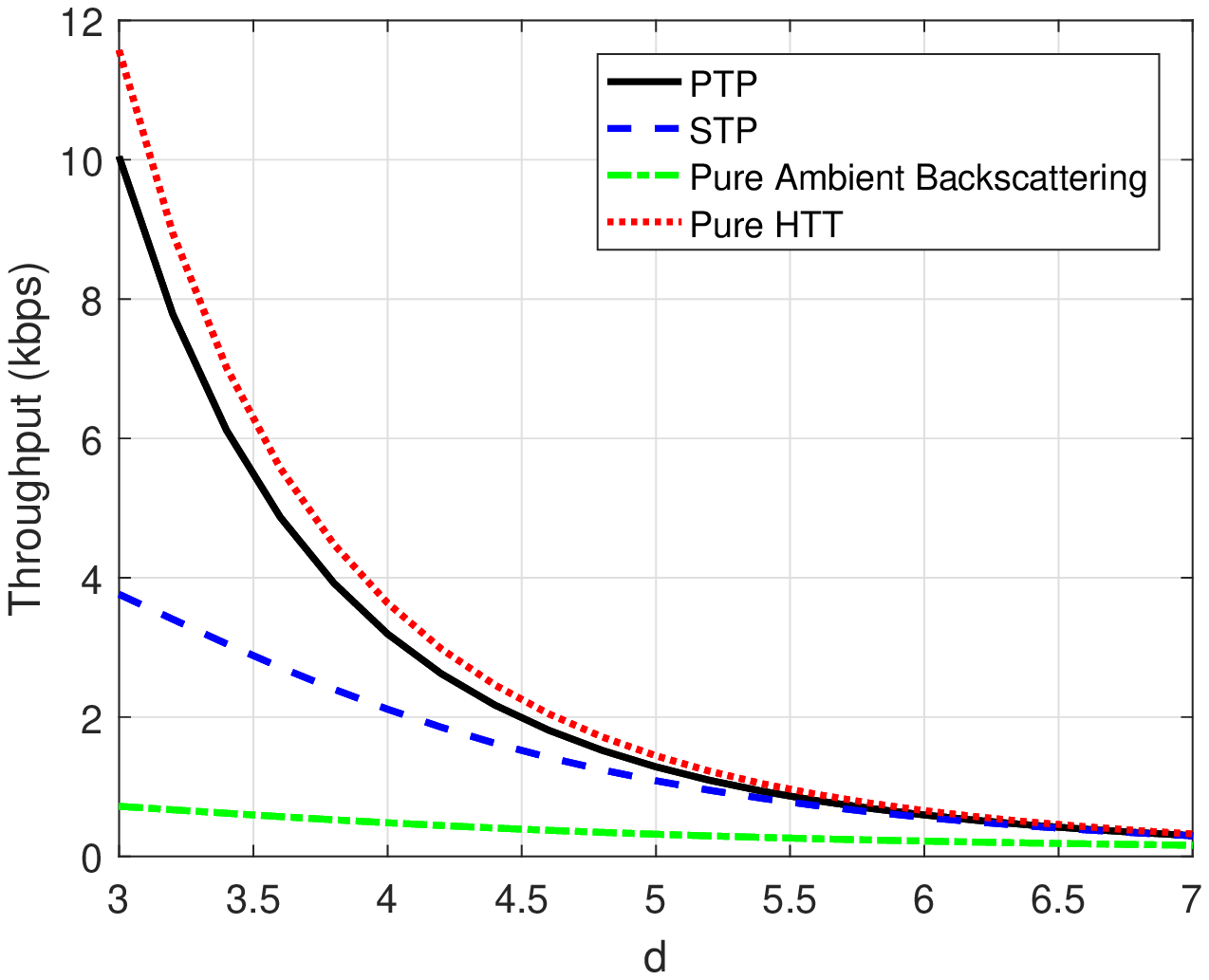}}
 \centering
 \subfigure [ ] {
\label{fig:D001X08}
 \centering
\includegraphics[width=0.48 \textwidth]{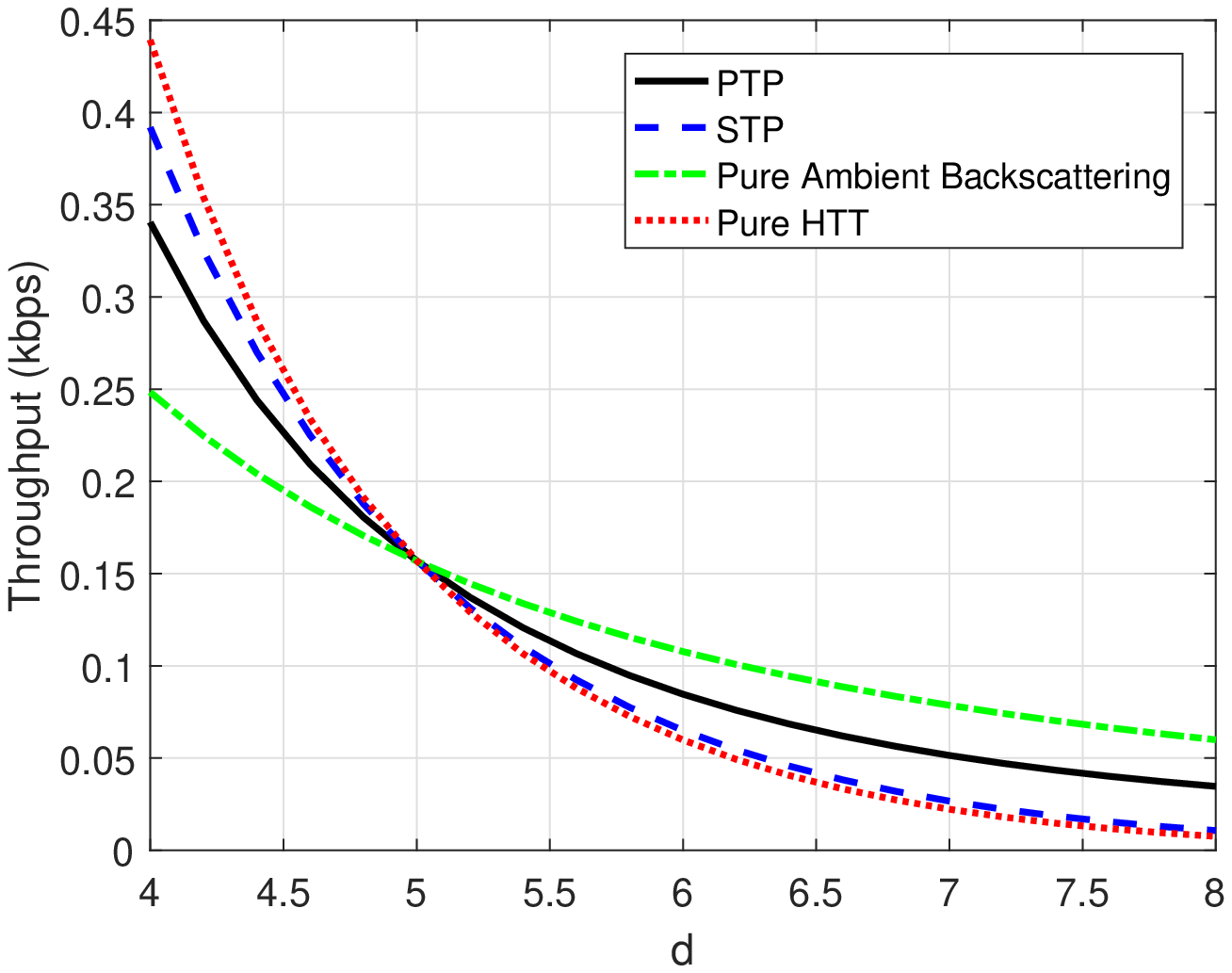}}\\
\caption{Comparison of average throughput as a function of $d$. ((a) $\xi=0.2$, $\zeta_{A}=0.02$, (b) $\xi=0.8$, $\zeta_{A}=0.01$)}  
\centering
\label{Throughput_d_comparison} 
\end{figure}

 
\section{Conclusion and Future Work}

In this paper, we have introduced a novel paradigm of hybrid D2D communications that integrate ambient backscattering with wireless-powered communications. To enable the operation of our proposed hybrid transmitter in diverse environments, two simple mode selection protocols, namely PTP and STP, have been devised based on the energy harvesting rate and received SNR of the modulated backscatter, respectively. Under the framework of repulsive point process modeling, we have analyzed the hybrid D2D communications and focused on investigating the impact of environment factors. In particular, the performance of the hybrid D2D communications has been characterized in terms of energy outage probability, coverage probability, and average throughput. The performance analysis has shown that the self-sustainable D2D communications benefit from larger repulsion, transmission load and density of ambient energy sources. Moreover, we have found that PTP is more suitable for use in the scenarios with a large density of ambient energy sources and low interference level. On the contrary, STP becomes favorable in the scenarios when the interference level and density of ambient energy sources are both low or both high. Additionally, PTP appears to be more reliable to yield better throughput for long-range transmission in general. 


The performance of our proposed hybrid transmitter and receiver can be improved when
multiple antennas are adopted. However, it is challenging to characterize the hybrid D2D communication
performance, as a physical-layer multiple-input multiple-output (MIMO) channel model for ambient backscatter
transmitter and receiver is not available yet. Especially, how multiple antennas at an ambient backscatter transmitter can facilitate load modulation and how multiple antennas at an
ambient backscatter receiver can render signal detection are open research issues. When such a MIMO channel model is available, an intriguing future direction is to extend our analytical framework to model the performance of the hybrid D2D communications in various cases of MIMO channels.

Another future work is to investigate the hybrid D2D communications in the scenarios where there exist randomly located hybrid transmitter-receiver pairs.
A possible research topic is to design distributed mode selection protocols performed by individual hybrid transmitters based on their local information, e.g., available harvested energy and channel state information. It is also interesting to design centralized mode selection protocols. 

\section*{Appendix I} \label{proof_theorem1}
\begin{proof} 
	The distribution of the aggregated received power at the origin from ambient transmitters can be determined by the calculation of its Laplace transform. Specifically, the Laplace transform of the accumulated incident power at the antenna of the hybrid transmitter can be obtained as
	\begin{align} 
	\mathcal{L}_{P_{I}}(s) & = \mathbb{E} \left[ \exp\left(-s P_{I} \right) \right] \nonumber 
	 = \! \mathbb{E} \left[ \prod_{a \in \mathcal{A}} \exp\left(- \frac{s P_{A} h_{a,\mathrm{S}}}{ \|\mathbf{x}_{a}-\mathbf{x}_{\mathrm{S}}\|^{\mu}} \right) \right]  
	 \nonumber \\ 	&
	 = \mathbb{E} \left[ \prod_{a \in \mathcal{A}} M_{h} \left( - \frac{s P_{A}}{ \|\mathbf{x}_{a}-\mathbf{x}_{\mathrm{S}}\|^{\mu}} \right) \right] \nonumber \\ 	&
	= \mathbb{E} \Bigg[ \exp \left ( \sum_{a \in \mathcal{A}} \ln \left ( M_{h} \left( - \frac{s P_{A}}{ \|\mathbf{x}_{a}-\mathbf{x}_{\mathrm{S}}\|^{\mu}} \right) \right) \right) \Bigg] 
	\nonumber \\ 	&
	\overset{\text{(i)}}{=} \mathrm{Det}\big(\mathrm{Id}+\alpha \mathbb{A}_{\Phi}(s)\big)^{-\frac{1}{\alpha} }, \label{LP_incident_rate}
	\end{align}	
	where $M_h(\cdot)$ is the MGF of $h_{a,\mathrm{S}}$ and $(\mathrm{i})$ follows by applying Proposition~\ref{lemma1}, and $\mathbb{A}_{\Phi}$  is 
	\begin{align}
	\label{eq:kernal_A}
	& \mathbb{A}_{\Phi}(s)=\sqrt{1-M_{h}\left(- s P_{A} \|\mathbf{x}-\mathbf{x}_{\mathrm{S}}\|^{-\mu} \right)} \nonumber \\  & \hspace{15mm} \times G_{\Phi}(\mathbf{x},\mathbf{y}) \sqrt{1-M_{h}\left(- s P_{A} \|\mathbf{y}-\mathbf{x}_{\mathrm{S}}\|^{-\mu} \right)},
	\end{align}
where $G_{\Phi}$ is the Ginibre kernel given in (\ref{eq:ginibre}). Since $h \sim \mathcal{G}(m,\frac{\theta}{m})$, the MGF of a Gamma random variable $h$ can be calculated as $M_{h}(z)=(1-\frac{\theta z}{m})^{-m}$. Therefore, we have 
	\begin{eqnarray} \label{MPG}
	M_{h}\left(- s P_{A} \|\mathbf{x}-\mathbf{x}_{\mathrm{S}}\|^{-\mu} \right)= \left( 1+ \frac{s\theta P_{A} }{m\|\mathbf{x}-\mathbf{x}_{\mathrm{S}}\|^{\mu}} \right)^{\!-m} 	.
	\end{eqnarray} 
	Inserting (\ref{MPG}) in (\ref{eq:kernal_A}) gives the expression in (\ref{eqn:kernal_A}).

	Given the Laplace transforms of $P_{I}$, by definition, the PDF of $P_{I}$ is attained by taking the inverse Laplace transform as follows:
	\begin{align}
	f_{P_{I}}(\rho) & =\mathcal{L}^{-1}\{\mathcal{L}_{P_{I}}(s)\}(\rho)  \nonumber \\ & =\mathcal{L}^{-1} \left\{ \mathrm{Det}\big(\mathrm{Id}+\alpha \mathbb{A}_{\Phi}(s)\big)^{-\frac{1}{\alpha}} \right \}(\rho), \label{eq:PDF_P_I}
	\end{align}
	with $\mathbb{A}_{\Phi}(s)$ given in (\ref{eqn:kernal_A}).
	
	Furthermore, integrating PDF in (\ref{eq:PDF_P_I}) yields
	\begin{align} \label{eq:CDF_PI}
	F_{P_{I}}(\rho) & \! = \!  \int^{\rho}_{-\infty}\!\!  \mathcal{L}^{-1} \! \left\{\mathcal{L}_{P_{I}}(s)\right \}(t) \mathrm{d}t \! = \! \mathcal{L}^{-\!1}   \left\{  \frac{\mathcal{L}_{P_{I}}(s)}{s} \right \}(\rho) \nonumber \\ &   = \mathcal{L}^{-1}  \left \{   \frac{ \mathrm{Det}\big(\mathrm{Id}+\alpha \mathbb{A}_{\Phi}(s)\big)^{-\frac{1}{\alpha} }}{s} \right \}(\rho)	.
	\end{align}
	
	When the hybrid transmitter is working in ambient backscattering mode, one can obtain the Laplace transform $\mathcal{L}_{P^{\mathrm{B}}_{E}}(s)$ 
	as 
	\begin{eqnarray} \label{eqn:LP_backscattering}
	\mathcal{L}_{P^{\mathrm{B}}_{E}}(s) = \mathbb{E} \left[ \exp\left(-s \beta \varrho P_{I} \right) \right]= \mathcal{L}_{P_{I}}(s\beta \varrho) 	.
	\end{eqnarray}

	Consequently, we can obtain the energy outage probability in ambient backscattering mode $\mathcal{O}_{\mathrm{B}}$, or equivalently, the CDF of $P^{\mathrm{B}}_{E}$ evaluated at $\rho_{\mathrm{B}}$, by integrating the PDF obtained in (\ref{eqn:LP_backscattering}) as 
	\begin{align} 
	\mathcal{O}_{\mathrm{B}} & =F_{P^{\mathrm{B}}_{E}} (\rho_{\mathrm{B}}) =F_{P_{I}} \left( \frac{\rho_{\mathrm{B}}}{\beta \varrho}\right) 
	= \mathcal{L}^{-1} \left\{ \frac{ \mathrm{Det}\big(\mathrm{Id}+\alpha \mathbb{A}_{\Phi}(s)\big)^{-\frac{1}{\alpha} } }{s } \right\}\left(\frac{\rho_{\mathrm{B}}}{\beta \varrho}\right) 	.	\label{CDF} 	
	\end{align} 
	
	Similarly, one obtains the energy outage probability in HTT mode $\mathcal{O}_{\mathrm{H}}$, or equivalently the CDF of $P^{\mathrm{H}}_{E}$ 
	evaluated at $\rho_{\mathrm{H}}$, as
	\begin{align} \label{eqn:CDF_HTT}
	\mathcal{O}_{\mathrm{H}} & =F_{P^{\mathrm{H}}_{E}}(\rho_{\mathrm{H}}) = F_{P_{I}} \left( \frac{\rho_{\mathrm{H}}}{\omega \beta  }\right) 
	= \mathcal{L}^{-1} \left\{ \frac{ \mathrm{Det}\big(\mathrm{Id}+\alpha \mathbb{A}_{\Phi}(s )\big)^{-\frac{1}{\alpha} }}{s } \right\} \left(\frac{\rho_{\mathrm{H}}}{\omega \beta}\right) 	.
	\end{align} 
	
	
Let $\mathcal{B}_{\mathrm{PTP}}$ 
denote the probability that the hybrid transmitter operated by PTP is in ambient backscattering mode. 
According to the criteria of PTP, from the definition in~(\ref{def:energyoutage}), we have
 \begin{align}
 \mathcal{O}_{\mathrm{PTP}} & = \mathcal{B}_{\mathrm{PTP}} \mathcal{O}_{\mathrm{B}}+ (1 - \mathcal{B}_{\mathrm{PTP}}) \mathcal{O}_{\mathrm{H}} \label{def:O_PTP}  
 \\ &=\mathbb{P} \left[P_{I} \leq \frac{\rho_{\mathrm{H}}}{\omega \beta} \right] F_{P_{I}} \left( \frac{\rho_{\mathrm{B}}}{\beta \varrho}\right)  
	+ \left(1-\mathbb{P} \left[P_{I} \leq \frac{\rho_{\mathrm{H}}}{\omega \beta}\right]\right) F_{P_{I}} \left( \frac{\rho_{\mathrm{H}}}{\omega \beta }\right) 
	.	\label{eqn:O_PTP} 
	\end{align} 
	
	One notices that $\mathcal{B}_{\mathrm{PTP}} $ is equal to the CDF of  $P_{I}$ evaluated at $\frac{\rho_{\mathrm{H}}}{\omega \beta}$, which is expressed as 
	\begin{align}\label{eqn:B_PTP}
 \mathcal{B}_{\mathrm{PTP}} 	& =   F_{P_{I}} \left( \frac{\rho_{\mathrm{H}}}{\omega \beta }\right)  
  = \mathcal{L}^{-1} \left\{ \frac{ \mathrm{Det}\big(\mathrm{Id}+\alpha \mathbb{A}_{\Phi}(s  )\big)^{-\frac{1}{\alpha} } }{s } \right\}\left(\frac{\rho_{\mathrm{H}}}{\omega \beta}\right) 	.
	\end{align}
	
	
	Then, by inserting (\ref{CDF}), (\ref{eqn:CDF_HTT}) and (\ref{eqn:B_PTP}) in (\ref{eqn:O_PTP}), we obtain $\mathcal{O}_{\mathrm{PTP}}$
	in (\ref{eq:energy_outage_probability_PTP}).	
\end{proof} 


\section*{Appendix II}

\begin{proof}
According to the criteria of STP, we have the probability of being in ambient backscattering mode as
\begin{align}
\mathcal{B}_{\mathrm{STP}} & \triangleq \mathbb{P} [\nu_{\mathrm{B}} > \tau_{\mathrm{B}}, P^{\mathrm{B}}_{E} >\rho_{\mathrm{B}}] \label{def:STP}  \\ 
& = \mathbb{P} \left [ \frac{\delta P_{I} h_{\mathrm{S},\mathrm{D}} }{d^{\mu} \sigma^2} \left (1- \varrho \right ) > \tau_{\mathrm{B}}, P^{\mathrm{B}}_{E} >\rho_{\mathrm{B}} \right] 
= \mathbb{P} \left [ h_{\mathrm{S},\mathrm{D}}\! >\!\frac{ \tau_{\mathrm{B}}d^{\mu} \sigma^2 }{\delta P_{I} \left ( 1- \varrho \right ) }, P_{I} \beta \varrho > \rho_{\mathrm{B}} \right] \nonumber \\ 
& \overset{\text{(a)}}{=}  \mathbb{P}   \left [ h_{\mathrm{S},\mathrm{D}}\! >\!\frac{ \tau_{\mathrm{B}}d^{\mu} \sigma^2 }{\delta P_{I} \left ( 1 \! -\! \varrho \right ) }   \Big \vert  P_{I}  \! >  \! \frac{\rho_{\mathrm{B}}}{\beta\varrho} \right] \! \mathbb{P} \! \left [ P_{I}  \! >  \! \frac{\rho_{\mathrm{B}}}{\beta\varrho} \right] 
= \mathbb E_{P_{I}}\!\! \left[ \mathbb{P} \! \left[h_{\mathrm{S},\mathrm{D}}\! >\!\frac{ \tau_{\mathrm{B}}d^{\mu} \sigma^2 }{\delta P_{I} \left ( 1- \varrho \right ) } \Bigg|  P_{I} \right]\!\! \mathbbm{1}_{ \{P_{I} > \frac{\rho_{\mathrm{B}}}{\beta \varrho} \}} \! \right] \nonumber\\ 
& 
= \int^{\infty}_{\frac{\rho_{\mathrm{B}}}{\beta\varrho}} \exp \left( - \frac{\lambda \tau_{\mathrm{B}} d^{\mu} \sigma^2 }{\delta \rho \left ( 1- \varrho \right ) } \right) f_{P_{I}}(\rho) \mathrm{d}\rho,	\label{eqn:delta_B_STP} 
\end{align} 
where (a) follows by the Bayes' theorem~\cite[page 36]{B.2008Ash},  
and $\mathbbm{1}_{\{\mathrm{E}\}}$ is an indicator function that takes the value of 1 if event $\mathrm{E}$ happens, and takes the value of 0 otherwise. 

Then, by replacing $\mathcal{B}_{\mathrm{PTP}}$ in the expression of
(\ref{def:O_PTP}) with $\mathcal{B}_{\mathrm{STP}}$ shown as (\ref{eqn:delta_B_STP}), we have (\ref{eq:energy_outage_probability_STP}) in {\bf Theorem}~\ref{thm:PowerOutage_STP} after some mathematical manipulations. 
\end{proof} 
 
\section*{Appendix III}


\begin{proof}

When there exists no repulsion, the GPP becomes a PPP with $\alpha$ approaching zero. By using the expansion~\cite{ShiraiTakahashi}
\begin{eqnarray}
	\mathrm{Det}\big(\mathrm{Id}+\alpha \mathbb{A}_{\Phi}(s)\big)^{-\frac{1}{\alpha} }
	\stackrel{\alpha \to 0}{\longrightarrow} \exp \left( -\int_{\mathbb{O}_{\mathrm{S}}} \mathbb{A}_{\Phi} (\mathbf{x},\mathbf{x} )\mathrm{d} \mathbf{x}\right),
	\end{eqnarray}
	we can simplify (\ref{eq:PDF_P_I}) as follows when $h_{a,\mathrm{S}} \sim \mathcal{E}(1)$ and $\mu=4$. 
\begin{align} 
f_{P_{I}}(\rho) 
&\!= \!\mathcal{L}^{-1}\! \left\{ \!\exp\! \left (- 2 \pi \zeta_{A}   \int^{R\to\infty}_{0}   \frac{r}{ 1\! + \! r^{4} (s  P_{A})^{-1} } \mathrm{d}r \right)  \right\}(\rho) 
 =  \mathcal{L}^{-1}\! \left\{ \frac{1}{s} \exp \left (\!-  \frac{ \pi^2 \zeta_{A} \sqrt{s P_{A} } }{ 2 }  \!\right) \! \right\}(\rho) \nonumber \\
&\overset{\text{(ii)}}{=} \frac{1}{2\pi i} \lim_{T \to \infty} \int^{z+i T}_{z-i T} \exp \left (\rho s - \frac{ \pi^2 \zeta_{A} \sqrt{s P_{A} } }{ 2 } \right) \mathrm{d} s \nonumber \\
&\overset{\text{(iii)}}{=} \frac{1}{2\pi i} \int^{\infty}_{0} \exp(-\rho t) \bigg[ \exp\left( \frac{ \pi^2 \zeta_{A} \sqrt{-t P_{A} } }{ 2 } \right) 
 - \exp\left( -\frac{ \pi^2 \zeta_{A} \sqrt{-t P_{A} } }{ 2 } \right) \bigg] \mathrm{d} t \nonumber \\ 
& \overset{\text{(iv)}}{=}\frac{1}{ \pi } \int^{\infty}_{0} \exp\left (-\rho \frac{4 u^2 }{\pi^4 \zeta^2_{A} P_{A}}\right) \sin(u) \frac{8u}{\pi^4 \zeta^2_{A} P_{A}} \mathrm{d}u
 \nonumber \\ & 
\overset{\text{(v)}}{=} \frac{ 1 }{4} \left(\frac{\pi}{\rho}\right)^{\!\frac{3}{2}} \zeta_{A} \sqrt{P_{A}} \exp \left(- \frac{\pi^4 \zeta_{A}^2 P_{A}}{16 \rho} \right), \nonumber
\end{align} 
where $(\text{ii})$ follows Mellin's inverse formula~\cite{P.1995Flajolet} which transforms the inverse Laplace transform into the complex plane, $i$ is the imaginary unit, i.e., $i\!=\!\sqrt{-1}$, and $z$ is a fixed constant greater than the real parts of the singularities of  $\exp \left ( - \frac{ \pi^2 \zeta_{A} \sqrt{s P_{A} } }{ 2 } \right)$, (\text{iii}) applies the Bromwich inversion theorem with the modified contour ~\cite[Chapter 2]{M.Cohen2007}, (\text{iv}) applies Euler's formula~\cite[Page 1035]{Miller2009} and a replacement of $u=\frac{ \pi^2 \zeta_{A} \sqrt{t P_{A}}}{2}$, and (\text{v}) uses the method of integration by parts.

Furthermore, based on the $f_{P_{I}}(\rho)$ expression, the CDF $F_{P_{I}}(\rho)$ in (\ref{corollary:PDF_no_replusion_Rayleigh}) can be obtained after some mathematical manipulations.
\end{proof} 
\section*{Appendix IV}
\label{proof_theorem2}
\begin{proof}
	We first determine the coverage probability in ambient backscattering mode. One simply notes that the expression of $\mathcal{C}_{\mathrm{B}}$ in (\ref{def:coverage_probability}) is equivalent to the definition of $\mathcal{B}_{\mathrm{STP}} $ in (\ref{def:STP}).  
	Hence, we have
	\begin{align}
	\mathcal{C}_{\mathrm{B}} = \int^{\infty}_{\frac{\rho_{\mathrm{B}}}{\beta \varrho}} \exp \left( - \frac{\lambda \tau_{\mathrm{B}}d^{\mu} \sigma^2 }{\delta \rho \left ( 1- \varrho \right ) } \right) f_{P_{I}}(\rho) \mathrm{d}\rho 	.	\label{eqn:delta_B_PTP} 
	\end{align} 
	
	Let $Q=\xi \sum_{b \in \mathcal{B}} P_{B} \widetilde{h}_{b,\mathrm{D}} \|\mathbf{x}_{b}-\mathbf{x}_{\mathrm{D}}\|^{-\mu}$ denote the aggregated interference at the receiver. We then derive the coverage probability in HTT mode as
	\begin{align}
	\mathcal{C}_{\mathrm{H}} 
	&= \mathbb P [ \nu_{\mathrm{H}} > \tau_{\mathrm{H}}, P^{\mathrm{H}}_{E} > \rho_{\mathrm{H}}]  
	\nonumber \\ & = \mathbb E_{P_{I}} \!   \left[ \mathbb P   \left[ \widetilde{h}_{\mathrm{S},\mathrm{D}}  \! >\! \frac{\tau_{\mathrm{H}}d^{ \mu} (1-\omega)( Q +\sigma^{2}) }{\omega \beta P_{I} - \rho_{\mathrm{H}} } \bigg \lvert P_{I}  \right] \!  \mathbbm{1}_{ \{P_{I} > \frac{\rho_{\mathrm{H}} }{\beta\omega} \} }\! \right] \nonumber \\ 
	&=  \mathbb E_{P_{I}}   \!  \Bigg[  \exp\! \left(\! - \frac{\lambda \tau_{\mathrm{H}}d^{\mu} \sigma^{2} (1\!-\!\omega) }{\omega \beta P_{I} - \rho_{\mathrm{H}} }  \!\right) \!  \mathbb{E}   \Bigg[\! \exp \! \bigg( \! \! - \frac{\lambda \tau_{\mathrm{H}}d^{\mu} (1\!-\!\omega) }{\omega \beta P_{I} - \rho_{\mathrm{H}} } 
	\xi \sum_{b \in \mathcal{B}}\! P_{B} \widetilde{h}_{b,\mathrm{D}} \|\mathbf{x}_{b}-\mathbf{x}_{\mathrm{D}}\|^{-\mu} \! \bigg) \!\Bigg]  \mathbbm{1}_{\{P_{I} > \frac{\rho_{\mathrm{H}} }{\beta\omega} \}} \!\Bigg]\nonumber \\ 
	& \overset{\text{(vi)}}{=} \int^{\infty}_{\frac{\rho_{\mathrm{H}}}{\beta\omega} } \exp \left( - \frac{\lambda \tau_{\mathrm{H}}d^{\mu} \sigma^{2} (1-\omega) }{\omega \beta \rho - \rho_{\mathrm{H}} } \right) 
	\mathrm{Det}\big( \mathrm{Id} + \alpha \mathbb{B}_{\Psi} (\rho) \big)^{-\frac{1}{\alpha}} f_{P_{I}}(\rho) \mathrm{d}\rho, \hspace{-2mm} \label{eqn:coverageprobability_RSP_HTT} 
	\end{align}
	where ($\text{vi}$) is given following Proposition~\ref{lemma1}, and $\mathbb{B}_{\Psi}(\rho)$ is defined in 
	(\ref{eqn:kernel_B}).
	
	By definition in (\ref{def:coverage_probability}), the coverage probability under PTP can be written as 
	\begin{align}
	\mathcal{C}_{\mathrm{PTP}} = \mathcal{B}_{\mathrm{PTP}} \mathcal{C}_{\mathrm{B}}+ (1-\mathcal{B}_{\mathrm{PTP}})\mathcal{C}_{\mathrm{H}} 	.	\label{eqn:overall_coverage_probability}
	\end{align}	
	Then, by plugging $\mathcal{B}_{\mathrm{PTP}}$ shown as (\ref{eqn:B_PTP}), $\mathcal{C}_{\mathrm{B}}$ shown as (\ref{eqn:delta_B_PTP}) and $\mathcal{C}_{\mathrm{H}} $ shown as (\ref{eqn:coverageprobability_RSP_HTT}) into (\ref{eqn:overall_coverage_probability}), we have 
	(\ref{eq:coverage_probability_Rayleigh_PTP}).
\end{proof} 

\section*{Appendix V}
\begin{proof} 
	The average throughput in ambient backscattering mode
	$\mathcal{T}_{\mathrm{B}}$ can be calculated as
	\begin{align}
	\mathcal{T}_{\mathrm{B}}  & = \mathbb{E} [ T_{\mathrm{B}}  \mathbbm{1}_{\{\nu_{\mathrm{B}} > \tau_{\mathrm{B}}, P^{\mathrm{B}}_{E} > \rho_{\mathrm{B}}\} } ]  
    = T_{\mathrm{B}} \mathbb{P} [\nu_{\mathrm{B}} > \tau_{\mathrm{B}}, P^{\mathrm{B}}_{E} > \rho_{\mathrm{B}}] = T_{\mathrm{B}} \mathcal{C}_{\mathrm{B}}, \label{eqn:throughput_B} 
	\end{align}
	where $T_{\mathrm{B}}$ has been defined in Subsection~\ref{sec:network_model} and $\mathcal{C}_{\mathrm{B}}$ has been obtained in (\ref{eqn:delta_B_PTP}). 
	
	Moreover, the average throughput in HTT mode can be computed as 
	\begin{align}
	&  \mathcal{T}_{\mathrm{H}}  =  \mathbb{E} [ ( 1 - \omega ) W  \log_{2} (1+\nu_{\mathrm{H}})\mathbbm{1}_{\{\nu_{\mathrm{H}} > \tau_{\mathrm{H}}, P^{\mathrm{H}}_{E} > \rho_{\mathrm{H}}\} } ]
	\nonumber \\ &
	\overset{\text{(vii)}}{=} \!\!(1\!-\!\omega) W \mathbb{E} \! \left[\int^{\infty}_{0} \! \! \mathbb{P}[ \log_{2}(1+\nu_{\mathrm{H}})\!>\! t] \mathrm{d}t \mathbbm{1}_{ \{\nu_{\mathrm{H}} > \tau_{\mathrm{H}}, P^{\mathrm{H}}_{E} > \rho_{\mathrm{H}}\} } \! \right] \nonumber \\ & 
	= \!(1\!-\!\omega) W \! \int^{\infty}_{\log_{2} (1+\tau_{\mathrm{H}})} \! \mathbb{E}_{P_{I}} \bigg[  \exp \left( \! - \frac{ \lambda d^{ \mu} (1 -\omega) (2^t - 1 ) }{ \omega \beta P_{I}  -  \rho_{\mathrm{H}} }  \right)\! 
	\bigg( \sigma^2 +  \xi \sum_{b \in \mathcal{B}}  P_{B} \widetilde{h}_{b,\mathrm{D}} \|\mathbf{x}_{b}-\mathbf{x}_{\mathrm{D}}\|^{-\mu}  \bigg)   \mathbbm{1}_{\{P_{I}>\frac{\rho_{\mathrm{H}}}{ \beta\omega}\}} \! \bigg]  \mathrm{d}t  \nonumber \\
	& \overset{\text{(viii)}}{=} \! (1 \!-\!\omega) W \! \! \int^{\infty}_{ \log_{2} (1+\tau_{\mathrm{H}} )}\!  \int^{\infty}_{\frac{\rho_{\mathrm{H}}}{\beta\omega}}    \exp \! \left( \!-  \frac{ \lambda d^{ \mu}\sigma^{2} (1\!-\!\omega) (2^t\!-\!1) }{ \omega \beta \rho - \rho_{\mathrm{H}} }  \!\right) 
 	\mathrm{Det}\big(\mathrm{Id} \!+ \!\alpha \mathbb{C}_{\Psi}(\rho)\big)^{-\frac{1}{\alpha}}   f_{P_{I}}(\rho) \mathrm{d}\rho \mathrm{d}t, \hspace{-2mm}	\label{eqn:throughput_HTT} 
	\end{align}
	where $\text{(vii)}$ follows $\mathbb{E}[X]= \int^{\infty}_{0}\mathbb{P}[X>x]\mathrm{d}x$~\cite{G.2011Andrews}, $\text{(viii)}$ is derived by applying Proposition~\ref{lemma1}, and $\mathbb{C}_{\Psi} (\rho)$ is defined in (\ref{eqn:kernel_C}).

	
	By definition in (\ref{def:throughput}), the average throughput under PTP can be written as 
	\begin{align}
	\mathcal{T}_{\mathrm{PTP}} 
	= \mathcal{B}_{\mathrm{PTP}} T_{\mathrm{B}} \mathcal{C}_{\mathrm{B}} + (1-\mathcal{B}_{\mathrm{PTP}}) \mathcal{T}_{\mathrm{H}} 	.	\label{eqn:average_throughput_PTP}
	\end{align} 
Inserting $\mathcal{B}_{\mathrm{PTP}}$ shown as  (\ref{eqn:B_PTP}), $\mathcal{C}_{\mathrm{B}}$ shown as (\ref{eqn:delta_B_PTP}), and $\mathcal{T}_{\mathrm{H}}$ shown as (\ref{eqn:throughput_HTT}) into (\ref{eqn:average_throughput_PTP}) yields (\ref{eq:throughput_PTP}).
\end{proof}



\end{document}